\newcommand{\expect}{\mathrm{Exp}}
\newcommand{\hide}[1]{}
\newcommand{\codeurl}{http://tinyurl.com/l3lgsq7}
\newcommand{\algorithmicinput}{\textbf{Input:}}
\newcommand{\INPUT}{\item[\algorithmicinput]}
\newcommand{\algorithmicoutput}{\textbf{Output:}}
\newcommand{\OUTPUT}{\item[\algorithmicoutput]}
\newcommand{\opt}{\mathrm{OPT}}
\newcommand{\eopt}{E_{\mathrm{OPT}}}
\newcommand{\edeg}{e_{\mathrm{deg}}}
\newcommand{\prob}{\textsc{Spectral Radius Minimization}}
\newcommand{\degr}{\mathrm{d}}
\newcommand{\Vol}{\mathrm{Vol}}
\newcommand{\wmax}{w_{\max}}
\DeclareMathOperator*{\nodes}{nodes}
\DeclareMathOperator*{\walks}{walks}
\DeclareMathOperator*{\ct}{count}
\newcommand{\superscript}[1]{\ensuremath{^{\textrm{#1}}}}
\def\ws{\superscript{*}}
\def\wdg{\superscript{\dag}}
\newcommand{\Input}[1]{\textbf{input:} #1\\}
\newcommand{\Output}[1]{\textbf{output:} #1\\}
\newcommand{\While}[1]{\textbf{while} #1\\}
\newcommand{\tcp}[1]{\texttt{//#1}}
\begin{document}
\title{Approximation Algorithms for Reducing the Spectral Radius To Control Epidemic Spread}

\hide{
\author{
Sudip Saha\ws\wdg, Abhijin Adiga\ws, B. Aditya Prakash\wdg, Anil Kumar S. Vullikanti\ws\wdg\\
\end{tabular}\newline\begin{tabular}{c}
\begin{tabular}{cc}
   \ws Network Dynamics and Simulation Science Laboratory & \wdg Department of Computer Science\\
   Virginia Bioinformatics Institute & Virginia Tech,\\
   Virginia Tech, Blacksburg, VA 24060 &
   Blacksburg, VA 24060\\
\end{tabular}\\
Email: \{ssaha,abhijin,akumar\}@vbi.vt.edu, badityap@cs.vt.edu
}}

\author{Sudip Saha\thanks{NDSSL, Virginia Bioinformatics Institute, Virginia Tech.}~\thanks{Department of Computer Science, Virginia Tech.\newline Email:\{ssaha, abhijin, akumar\}@vbi.vt.edu, badityap@cs.vt.edu}\\
\and
Abhijin Adiga\footnotemark[1] \\
\and
B. Aditya Prakash\footnotemark[2] \\
\and
Anil Kumar S. Vullikanti\footnotemark[1]~\footnotemark[2]
}

\date{}
\maketitle

\begin{abstract}
{\small
The largest eigenvalue of the adjacency matrix of a network (referred to as the spectral radius) is an important metric
in its own right. Further, for several models of epidemic spread on networks (e.g., the `flu-like' SIS model), it has been shown that
an epidemic dies out quickly if the spectral radius of the graph is below a
certain threshold that depends on the model parameters. This motivates
a strategy to control epidemic spread by reducing the spectral radius
of the underlying network.

In this paper, we develop a suite of provable approximation
algorithms for reducing the spectral radius by removing the minimum cost
set of edges (modeling quarantining) or nodes (modeling vaccinations), with
different time and quality tradeoffs. Our main algorithm, \textsc{GreedyWalk}, is based on the idea of hitting
closed walks of a given length, and gives an $O(\log^2{n})$-approximation,
where $n$ denotes the number of nodes; it also performs much better in
practice compared to all prior heuristics proposed for this problem. We further present a novel sparsification method to improve its running time.

In addition, we give a new primal-dual based algorithm with an even better approximation guarantee ($O(\log n)$),
albeit with slower running time. We
also give lower bounds on the worst-case performance of some of the popular
heuristics. Finally we demonstrate the applicability of our algorithms and the properties of our solutions
via extensive experiments on multiple synthetic and real networks. } 
\end{abstract}


%
\section{Introduction}

Given a contact network, which contacts should we remove to contain
the spread of a virus?  Equivalently, in a computer network, which
connections should we cut to prevent the spread of malware? Designing
effective and low cost interventions are fundamental challenges in
public health and network security.
Epidemics are commonly modeled by stochastic diffusion processes, such
as the so-called `SIS' (flu-like) and `SIR' (mumps-like) models on networks (more in Section~\ref{sec:preliminaries}).  An important result that highlights the
impact of the network structure on the dynamics is that epidemics die
out ``quickly'' if $\rho(G)\leq T$, where $\rho(G)$ is the spectral
radius (or the largest eigenvalue) of graph $G$, and $T$ is a
threshold that depends on the disease model \cite{ganesh+topology05,Wang03Epidemic,aditya12}.  This motivates the following strategy for controlling an
epidemic: remove edges (quarantining) or nodes (vaccinating) to reduce
the spectral radius below a threshold $T$---we refer to this as the
spectral radius minimization (\textsc{SRM}) problem, with variants depending
on whether edges are removed (the \textsc{SRME} problem) or whether
nodes are removed (the \textsc{SRMN} problem).
Van Mieghem et al. \cite{vanmieghem:ton12} and
Tong et al. \cite{tong:cikm12} prove that this problem is NP-complete. They also
study two heuristics for it, one based on the components of the first eigenvector
(\textsc{EigenScore}) and another based on degrees (\textsc{ProductDegree}).
However, no rigorous approximations were known for the \textsc{SRME} or
the \textsc{SRMN} problems.

\smallskip

\noindent
\textbf{Our main contributions}.
\smallskip

\noindent
\textbf{1.~Lower bounds on the worst-case performance of heuristics}:
We show that the \textsc{ProductDegree}, \textsc{EigenScore} and \textsc{Pagerank}
heuristics (defined formally in Section \ref{sec:preliminaries}) can perform quite poorly
in general. We demonstrate graph instances where these heuristics give
solutions of cost $\Omega(\frac{n}{T^2})$ times the optimal,
where $n$ is the number of nodes in the graph.

\iftoggle{fullversion}
{
\noindent
\textbf{2.~Provable approximation algorithms}: We present two bicriteria
approximation algorithms for the \textsc{SRME} and \textsc{SRMN} problems,
with varying approximation quality and running time tradeoffs. Our first
algorithm, \textsc{GreedyWalk}, is based on hitting
closed walks in $G$. We show this algorithm has an approximation bound of
$O(\log{n}\log{\Delta})$ times optimal for the cost of edges removed,
while ensuring that the spectral radius becomes at most
$(1 + \epsilon)$ times the threshold, for $\epsilon$
arbitrarily small (here $\Delta$ denotes the maximum node degree in
the graph). We also design a variant, \textsc{GreedyWalkSparse}, that
performs careful sparsification of the graph, leading to similar
asymptotic guarantees, but better running time, especially when the threshold $T$ is small.
We then develop algorithm \textsc{PrimalDual}, which
improves this approximation bound to an $O(\log{n})$
using a more sophisticated primal-dual approach, at the expense of a
slightly higher (but polynomial) running time.

\noindent
\textbf{3. Extensions}:
We consider two natural extensions of the \textsc{SRME} problem: (i) non-uniform
transmission rates on edges and (ii) node version \textsc{SRMN}.
We show that our methods extend to these variations too.

\noindent
\textbf{4. Empirical analysis}: We conduct an extensive experimental
evaluation of \textsc{GreedyWalk}, a simplified version of
\textsc{PrimalDual} and different heuristics that have been
proposed for epidemic containment on a diverse collection of synthetic and real
networks.  These heuristics involve picking edges $e=(i,j)$ in
non-increasing order of some kind of score; the specific heuristics
we compare include: (i)~\textsc{ProductDegree}, (ii)~\textsc{EigenScore},
(iii)~\textsc{LinePagerank}, and (iv)~\textsc{Hybrid},
which picks the edge based on either the eigenscore or the
product-degree ordering, depending on the maximum decrease in
eigenvalue.
We find that \textsc{GreedyWalk} performs better than all the
heuristics in all the networks we study. We analyze \textsc{GreedyWalk}
for walks of length $k=\Theta(\log{n})$; in practice, we
found that the performance degrades significantly as $k$ is reduced.
}
{
\noindent
\textbf{2.~Provable approximation algorithms}: We present two bicriteria
approximation algorithms for the \textsc{SRME} and \textsc{SRMN} problems,
with varying approximation quality and running time tradeoffs. Our first
algorithm, \textsc{GreedyWalk}, is based on hitting
closed walks in $G$. We show this algorithm has an approximation bound of
$O(\log{n}\log{\Delta})$ times optimal for the cost of edges removed,
while ensuring that the spectral radius becomes at most
$(1 + \epsilon)$ times the threshold, for $\epsilon$
arbitrarily small (here $\Delta$ denotes the maximum node degree in
the graph). We also design a variant, \textsc{GreedyWalkSparse}, that
performs careful sparsification of the graph, leading to similar
asymptotic guarantees, but better running time, especially when the threshold $T$ is small.
We then develop algorithm \textsc{PrimalDual}, which
improves this approximation bound to an $O(\log{n})$
using a more sophisticated primal-dual approach, at the expense of a
slightly higher (but polynomial) running time.

\noindent
\textbf{3. Empirical analysis}: We conduct an extensive experimental
evaluation of \textsc{GreedyWalk}, a simplified version of
\textsc{PrimalDual} and different heuristics that have been
proposed for epidemic containment on a diverse collection of synthetic and real
networks.  For \textsc{SRME}, these heuristics involve picking edges $e=(i,j)$ in
non-increasing order of some kind of score; the specific heuristics
we compare include: (i)~\textsc{ProductDegree}, (ii)~\textsc{EigenScore},
(iii)~\textsc{LinePagerank}, and (iv)~\textsc{Hybrid},
which pick the edge based on either the eigenscore or the
product-degree ordering, depending on the maximum decrease in
eigenvalue.
We find that \textsc{GreedyWalk} performs better than all the
heuristics in all the networks we study. In the experiments, we analyze \textsc{GreedyWalk}
for walks of length $k=\Theta(\log{n})$.

}

\noindent
\textbf{Organization}.
The background and notation are defined in Section \ref{sec:preliminaries}.
Sections \ref{sec:greedywalk}, \ref{sec:sparse} and \ref{sec:primaldual}
cover \textsc{GreedyWalk}, \textsc{GreedyWalkSparse} and
\textsc{PrimalDual} algorithms, respectively, for the \textsc{SMRE} problem; the \textsc{SRMN} problem is discussed in section~\ref{sec:extensions}. Some of the algorithmic details and proofs are omitted for brevity and are available in \cite{spectral-approx-extended}.
Lower bounds for some heuristics and the experimental results
are discussed in Sections \ref{sec:lb} and \ref{sec:experiments}, respectively.
We discuss the related work in Section \ref{sec:related}
and conclude in Section \ref{sec:conc}.

\section{Preliminaries}
\label{sec:preliminaries}
\begin{table}[ht]
\centering
\caption{Notations}
\label{table:notations}
\footnotesize
\begin{tabular}{p{1.8cm} p{5.8cm}} \toprule
$G=(V,E)$ & Graph representing a contact network\\
$n=|V|$ & Total number of nodes in $G$\\
$d(v,G)$ & Degree of node $v$ in $G$\\
$\Delta(G)$ & Maximum node degree in $G$\\
$A=A^G$ & Adjacency matrix of $G$\\
$G[E']$ & Subgraph of $G$ induced on $E'\subseteq E$\\
$\lambda_i(G)$ & $i$th largest Eigenvalue of $A^G$\\
$\rho(G)=\rho(A)$ & $\lambda_1(G)$, spectral radius of $G$\\\midrule
$c(\cdot)$ & Cost of a vertex or edge of $G$\\
$\beta$ & Infection rate\\
$\delta$ & Recovery rate\\
$T$ & Epidemic Threshold, $T=\frac{\delta}{\beta}$\\
$\tau$ & Time to epidemic extinction\\\midrule
$\mathcal{W}_k(G)$ & Set of closed walks of length $k$ in $G$ \\
$W_k(G)$ & $W_k(G) = |\mathcal{W}_k(G)|$ \\
$\nodes(w)$ & number of distinct nodes in walk $w$ \\
$\walks(x,G,k)$ & Number of closed $k$-walks in $G$ containing edge (or
vertex) $x$\\
$\eopt(T)$ & Optimal solution to \textsc{SRME}$(G, c(\cdot), T)$\\\bottomrule
\end{tabular}
\normalsize
\vspace{-.4cm}
\end{table}
We consider undirected graphs $G=(V,E)$, and interventions to control the spread of epidemics---
vaccination (modeled by removal of nodes) and quarantining (modeled by removal of edges).
There can be different costs for the removal of nodes and edges (denoted by $c(v)$ and
$c(e)$, respectively), e.g., depending
on their demographics, as estimated by \cite{medlock2009}. For a set $E'\subset E$,
$c(E')=\sum_{e\in E'} c(e)$ denotes the total cost of the set $E'$ (similarly for node subsets).

There are a number of models for epidemic spread; we focus on
the fundamental SIS (Susceptible-Infectious-Susceptible) model, which is defined in the following manner.
Nodes are in susceptible (S) or infectious (I) state.
Each infected node $u$ (in state I) causes each susceptible neighbor $v$ (in state S)
to become infected at rate $\beta_{uv}$. Further, each infected node $u$ switches to
the susceptible state at rate $\delta$. In this paper, we assume a uniform
rate $\beta_{uv}=\beta$ for all $(u,v)\in E$; in this case, we define a threshold $T=\delta/\beta$,
which characterizes the time to extinction.
Let $A=A^G$ denote the adjacency matrix of $G$, and let $n=|V|$. Let
$\lambda_i(G)$ denote the $i$th largest eigenvalue of $A$, and let
$\rho(A)=\lambda_1(A)$ denote the spectral radius of $A$. Since $G$ is undirected, it follows
that all eigenvalues are real, and $\rho(A)>0$ (see, e.g., Chapter 3 of \cite{vanmieghem-spectral-graphs}).
Ganesh et al. \cite{ganesh+topology05} showed that the epidemic dies out in time
$O(\frac{\log{n}}{1-\rho(A)/T})$, if $\rho(A)<T$ in the SIS model, with high probability;
this threshold was also observed by \cite{Wang03Epidemic}.
Prakash et al. \cite{aditya12} show this condition holds for a broad class of other epidemic models,
including the SIR model (which contains the `Recovered' state). Now we
formally define the \textsc{SRM} problem.
\begin{Definition}{\prob{} problems (SRME and SRMN):}
Given an undirected graph $G=(V, E)$, with cost $c(e)$ for each edge
$e$, and a threshold $T$, the goal of the \textsc{SRME}$(G, c(\cdot),
T)$ problem is to find the cheapest subset $E'\subseteq E$ such that
$\lambda_1(G[E\setminus E'])<T$. We refer to the node version of this
problem as \textsc{SRMN}$(G, c(\cdot),T)$.
\end{Definition}
We discuss some notation that will be used in the rest of the paper.
$\eopt(T)$ denotes an optimal solution to the \textsc{SRME}$(G, c(\cdot), T)$ problem.
Let $\mathcal{W}_k(G)$ denote the set
of closed walks of length $k$ in $G$; let $W_k(G) = |\mathcal{W}_k(G)|$. For a walk $w$,
let $\nodes(w)$ denote the number of distinct nodes in $w$.
A standard result (see, e.g., Chapter 3 of \cite{vanmieghem-spectral-graphs}) is the following:
\begin{align}
\sum_{w\in \mathcal{W}_k(G)} \nodes(w) = \sum_i A^k_{ii}=\sum_{i=1}^n\lambda_i(G)^k.
\label{eqn:kthmoment}
\end{align}
The number of walks in $\mathcal{W}_k(G)$ containing a node $i$ is $A^k_{ii}$.
For a graph $G$, let $\walks(e, G,k)$ denote the number of closed $k$-walks
in $G$ containing $e=(i,j)$. Then, $\walks(e,G,k)=A^{k-1}_{ij}$. We say
that an edge set $E'$ hits a walk $w$ if $w$ contains an edge from $E'$.
Similarly, for a node $v$, let $\walks(v, G, k)$ denote the number of closed $k$-walks in $G$ containing
$v$. Then, $\walks(i, G,k) = A^k_{ii}$.
Table~\ref{table:notations} summarizes the frequently
used notations.
\section{\textsc{GreedyWalk}: $O(\log {n} \log \Delta)$-approximation}
\label{sec:greedywalk}

\noindent
\textbf{Main idea}. Our starting point is
the connection between the number of closed walks in
a graph and the sum of powers of the eigenvalues in (\ref{eqn:kthmoment}). We try to
reduce the spectral radius by reducing the number of closed walks of length $k$
in the graph, by removing edges (see Algorithm 1). This, in turn, can be viewed as a partial covering
problem.\footnote{This is a variation of the set cover problem, in which an instance
consists of (i) a set $H$ of elements, (ii) a collection $\mathcal{S}=\{S_1,\ldots, S_m\}\subseteq 2^H$
of sets, (iii) cost$(S_i)$ for each $S_i\in \mathcal{S}$, and (iv) a parameter $r\leq |H|$. The
objective is to find the cheapest collection of sets from $\mathcal{S}$ which cover at least $r$ elements.
Slavic \cite{slavic:ipl97} shows that a greedy algorithm gives an
$O(\log{|H|})$ approximation.}
Our basic idea extends to other versions, as discussed later in Section
\ref{sec:extensions}.
\begin{algorithm}{}
\label{alg:greedywalk}
\caption{\textsc{GreedyWalk} (high level description)}
\small
\begin{algorithmic}[1]
\INPUT $G$, $T$, $c(\cdot)$, $k$ even\\
\OUTPUT Edge set $E'$\\
\STATE Initialize $E'\leftarrow \phi$\\
\WHILE{$W_k(G[E\setminus E'])\geq nT^k$}
\STATE $r \leftarrow W_k(G[E\setminus E']) - nT^k$\\
\STATE Pick $e\in E\setminus E'$ that maximizes $\frac{\min \{r, \walks(e, G[E\setminus E'],k)\}}{c(e)}$\\ \label{line:pickEdge}
\STATE $E'\leftarrow E'\cup\{e\}$
\ENDWHILE
\end{algorithmic}
\end{algorithm}

The Lemma below proves the approximation bound for any solution (say $E'$) from \textsc{GreedyWalk}.
Let $G'=G[E\setminus E']$ denote the graph resulting after the removal
of edges in $E'$.  Our proof involves three steps:
(1) Proving the bound on $\lambda_1(G')$;
(2) Relating $c(E')$ to the cost of
the optimum solution to the partial covering problem
which ensures that the number of walks in the residual graph is at most $nT^k$;
(3) Showing that the optimum solution to the \textsc{SRME} problem also ensures that
at most $nT^k$ remain in the residual graph.

\begin{lemma}
\label{lemma:greedywalk}
Let $E'$ denote the set of edges found by Algorithm
\textsc{GreedyWalk}.  Given any constant $\epsilon > 0$, let $k$ be an even integer
larger than $\frac{\log{n}}{\log(1+\epsilon/3)}$.
Then, we have $\lambda_1(G[E\setminus E']) \leq (1+\epsilon)T$,
and $c(E') = O(c(\eopt(T))\log n\log \Delta)$.
\end{lemma}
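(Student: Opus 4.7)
The plan is to follow the three-step structure foreshadowed immediately before the lemma.

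First, I will establish the spectral bound $\lambda_1(G')\le (1+\epsilon)T$, where $G'=G[E\setminus E']$ is the graph at termination. The loop condition guarantees $W_k(G')<nT^k$ at exit. Since $k$ is even, every $\lambda_i(G')^k$ is nonnegative, so by identity~(\ref{eqn:kthmoment}),
\[
\lambda_1(G')^k \le \sum_i \lambda_i(G')^k = \sum_{w\in\mathcal{W}_k(G')}\nodes(w) \le n\cdot W_k(G') < n^2 T^k,
\]
so $\lambda_1(G')<n^{2/k}T$. Under the hypothesis $k>\log n/\log(1+\epsilon/3)$ we have $n^{1/k}<1+\epsilon/3$, and hence $n^{2/k}<(1+\epsilon/3)^2\le 1+\epsilon$ for $\epsilon$ in the usual constant range, completing this step.

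Second, I will reinterpret Algorithm~\ref{alg:greedywalk} as Slavic's greedy for partial set cover over the \emph{static} ground set $\mathcal{W}_k(G)$, with the set associated to edge $e$ being the walks in $\mathcal{W}_k(G)$ that traverse $e$. The crucial observation is that $\walks(e,G[E\setminus E'],k)$ equals the number of walks in $\mathcal{W}_k(G)$ that contain $e$ and avoid every edge in $E'$---precisely the ``currently uncovered elements hit by $e$'' quantity used in Slavic's analysis. Thus Algorithm~\ref{alg:greedywalk} is exactly Slavic's greedy with required coverage $r^{\ast}:=W_k(G)-nT^k$, and his theorem gives
\[
c(E') \le (1+\ln r^{\ast})\, c(F^{\ast}) = O(\log r^{\ast})\cdot c(F^{\ast}),
\]
where $F^{\ast}$ is an optimal partial cover of $r^{\ast}$ walks.

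Third, I will show that $\eopt(T)$ is itself a feasible partial cover, which forces $c(F^{\ast})\le c(\eopt(T))$. By definition $\lambda_1(G[E\setminus \eopt(T)])<T$; since $A$ is symmetric and $k$ is even, $\lambda_i^k\le \lambda_1^k<T^k$ for every~$i$, so
\[
W_k(G[E\setminus \eopt(T)]) \le \sum_w \nodes(w) = \sum_i \lambda_i^k < n T^k,
\]
i.e.\ $\eopt(T)$ hits at least $W_k(G)-nT^k=r^{\ast}$ walks. Combining this with Step~2 and the crude bound $r^{\ast}\le W_k(G)\le n\Delta^k$ (starting vertex times $k$ neighbor choices) yields $\log r^{\ast}=O(\log n+k\log\Delta)=O(\log n\log\Delta)$ for the prescribed $k=\Theta(\log n)$, giving the claimed $c(E')=O\bigl(c(\eopt(T))\log n\log\Delta\bigr)$.

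The main conceptual hurdle is Step~2: one has to verify carefully that the algorithm's adaptive greedy over the \emph{changing} residual graph truly simulates Slavic's static-universe greedy, so that his approximation theorem transfers as a black box. A secondary delicate point is the slack in Step~1---the factor $n^{2/k}$ (rather than the naive $n^{1/k}$) must be absorbed into the $(1+\epsilon)$ factor, which is precisely why the hypothesis uses $1+\epsilon/3$ instead of $1+\epsilon$, trading a small constant in $k$ to smooth over the squaring.
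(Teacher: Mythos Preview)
Your proof is correct and follows essentially the same three-step scheme as the paper. The only minor deviation is in Step~1: the paper bounds $\sum_{w}\nodes(w)\le k\,W_k(G')$ (since a closed $k$-walk has at most $k$ distinct vertices), obtaining $\lambda_1(G')\le (nk)^{1/k}T$, whereas you use the cruder $\nodes(w)\le n$ to get $\lambda_1(G')\le n^{2/k}T$; both feed into the same $(1+\epsilon/3)^2\le 1+\epsilon$ inequality, and your version is arguably cleaner since it avoids tracking the extra $\log k$ term.
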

\begin{proof}
\hide{
Let $G'=G[E\setminus E']$ denote the graph resulting after the removal
of edges in $E'$.  Our proof involves three steps:
(1) Proving the bound on $\lambda_1(G')$;
(2) Relating $c(E')$ to the cost of
the optimum solution to the partial covering problem
which ensures that the number of walks in the residual graph is at most $nT^k$;
(3) Showing that the optimum solution to the \textsc{SRME} problem also ensures that
at most $nT^k$ remain in the residual graph.
} We follow the proof scheme mentioned above.
By the stopping condition of the algorithm, we have $W_k(G')\leqslant nT^k$.
From (\ref{eqn:kthmoment}), we have $\sum_{i=1}^n\lambda_i(G')^k =\sum_i A_{ii}^k = \sum_{w\in\mathcal{W}(G')}
\nodes(w)\leq kW_k(G')$, which implies $\sum_{i=1}^n\lambda_i(G')^k\leqslant nkT^k$.
Further, since $k$ is even (by assumption), $\lambda_i(G')\geq 0$,
so that $\lambda_1(G')^k\leq \sum_{i=1}^n\lambda_i(G')^k\leq nkT^k$.
This implies $\lambda_1(G')\leqslant e^{(\log n+\log{k})/k}T$. Since $k=\log{n}/\log{(1+\epsilon/3)}$,
we have
$(\log n+\log{k})/k\leq 2\log{(1+\epsilon/3)}$, so that $\lambda_1(G')\leq (1+\epsilon/3)^2T\leq (1+\epsilon)T$.

Next, we derive a bound for $c(E')$. Observe that the algorithm can be viewed as solving a partial
cover problem, in which (i) the set $H$ of elements corresponds to walks in $\mathcal{W}_k(G)$, and
(ii) there is a set corresponding to each edge $e\in E$ consisting of all the walks in
$\mathcal{W}_k(G)$ that contain $e$.  Following the analysis of the greedy
algorithm for partial cover \cite{slavic:ipl97}, we have
$c(E')=O(c(E_\text{HITOPT}) \log|H|)$, where $E_\text{HITOPT}$
denotes the optimum solution for this covering instance.
Since $\Delta$ denotes the maximum node degree, we have $H=W_k(G) \leqslant n{\Delta}^k$. We show below
that $c(E_\text{HITOPT})\leq c(\eopt(T))$; it follows that
$c(E')=O(c(\eopt(T))\log n\log \Delta)$.

Finally, we prove that $c(E_\text{HITOPT})\leq c(\eopt(T))$. By definition of
$\eopt(T)$, we have $\lambda_1(G[E-\eopt(T)])\leq T$. Let $G''=G[E-\eopt(T)]$. Then,
we have
\begin{align*}
W_k(G'')\leq\sum_{i=1}^n\lambda_i(G'')^k<n\lambda_1(G'')^k\leq nT^k.
\end{align*}
This implies $\eopt(T)$ hits at least $W_k(G)-nT^k$ walks, so that $c(E_\text{HITOPT})\leq c(\eopt(T))$.
\end{proof}

\noindent
\textbf{Effect of the walk length $k$}.
We set the walk length $k=a\log{n}$ for some constant $a$ in Algorithm \textsc{GreedyWalk}; understanding the
effect of $k$ is a natural question. From the proof of Lemma \ref{lemma:greedywalk}, it follows
that $\lambda_1(G[E\setminus E'])$ can be bounded by $(nk)^{1/k}T$ for any choice of $k$,
as long as it is even. This bound becomes worse as $k$ becomes smaller, e.g., it is
$O(\sqrt{n})$ for $k=2$.
\iftoggle{fullversion}
{This is borne out in the experiments in Section \ref{sec:experiments}.}
{}

In order to complete the description of \textsc{GreedyWalk} (Algorithm 1 ),
we need to design an efficient method to determine the edge which maximizes the
quantity in line 4. We discuss two methods below.

\subsection{Matrix multiplication approach for implementing \textsc{GreedyWalk}.}
\label{sec:matrix}
%
%
Note that $\walks(e, G,k)=A^{k-1}_e$.  We use matrix multiplication to compute
$A^{k-1}$ once for each iteration of the while loop in line 2 of
Algorithm 1 . In line 4, we iterate over all edges, in order to
compute the edge $e$ that maximizes the given ratio.
For $k=O(\log n)$, $A^{k-1}$ can be computed in time $O(n^{\omega}\log\log{n})$, where
$\omega< 2.37$ is the exponent for the running time of the best matrix
multiplication algorithm~\cite{williams2012multiplying}.
Therefore, each iteration involves $O(n^{\omega}\log\log{n} + m)=O(n^{\omega}\log\log{n})$ time.
This gives a total running time of $O(n^{\omega}\log\log{n}|E_{OPT}| \log^2 n)$,
since only $O(|E_{OPT}| \log^2n)$ edges are removed.
One drawback with this approach is the high (super-linear) space complexity,
even with the best matrix multiplication methods, in general.

\subsection{Dynamic programming approach for implementing \textsc{GreedyWalk}.}
\label{sec:dynamic}
When the graphs are very sparse ($\Theta(n)$ edges), we adapt a dynamic programming
approach to compute $\walks(e, G,k)$ for an edge $e$ and more efficiently select the edge that
maximizes $\walks(e, G[E\setminus E'],k)/c(e)$ in line 4 of Algorithm 1 .
Although, potentially $\walks(e, G,k)$ needs to be computed for each edge $e\in E\setminus E'$, in practice it suffices to compute it for only a small subset of $E\setminus E'$. We make use of the fact that
$\walks(e,G',k)\le\walks(e,G,k)$ for any subgraph $G'$. The approach is briefly as follows. Initially we compute $\walks(e, G,k)$ for each $e\in E$ and arrange the edges in non-ascending order of their $\walks(e, G,k)$ value, $e_1,e_2,...,e_{|E|}$. After the first edge ( i.e. $e_1$ in the first iteration) is removed, $\walks(e, G',k)$ is computed on the residual graph $G'$ only for some consecutive edges in that order upto some $e_i$ such that $\walks(e_i,G',k)> \walks(e_{i+1},G,k)$. Edges $e_2,...,e_i$ are reordered based on the recomputed walk numbers, $\walks(e_i,G',k)$ and then the same steps are repeated. The approach takes $O(n)$ space and $O(n^2 k)$ time assuming the number of edges is $\Theta(n)$ in real world large networks.
\iftoggle{fullversion}
{The detailed algorithm and the analysis is given in the appendix \ref{appsubsec:dp}.}
{The algorithmic details and proofs are omitted for brevity}


\section{Using sparsification for faster running time: Algorithm \textsc{GreedyWalkSparse}}
\label{sec:sparse}
The efficiency of Algorithm \textsc{GreedyWalk} can be improved
if the number of edges in the graph can be reduced. This can be achieved by
   two pruning steps - pruning edges such that in the residual graph (i)~no node has degree more than $T^2$, and (ii) there is no $T$-core; the $T$-core of a graph denotes the maximal subgraph
of $G$ with minimum degree $T$ (see, e.g., \cite{kcore}). We will refer to these steps as \textsc{MaxDegreeReduction} and \textsc{DensityReduction} respectively. This leads to sparser graphs, without affecting the
asymptotic approximation guarantees.
\iftoggle{fullversion}
{
The algorithm involves two prunning steps: \textsc{MaxDegreeReduction} and \textsc{DensityReduction}; the procedure is described in Algorithm \textsc{GreedyWalkSparse}.

\begin{algorithm}
\label{alg:GreedyWalkSparse}
\caption{Algorithm \textsc{GreedyWalkSparse}}
\begin{algorithmic}[1]
\INPUT $G, T, c(\cdot)$
\OUTPUT Edge set $E'$
\STATE Initialize $G_r=G$.\\

\STATE \tcp{\textbf{Pruning step~1}: \textsc{MaxDegreeReduction}}\\
\STATE  Let $V_{T^2}=\{v: \degr(v,G)\geqslant T^2\}$.
\FOR {$v\in V_{T^2}$}
\IF{$\degr(v,G_r)\ge T^2$}
\STATE  Let $e_{v,1},\ldots,e_{v,\degr(v,G_r)}$ be the edges incident on $v$ ordered so that $c(e_{v,1})\leq\ldots \le c(e_{v,\degr(v,G_r)})$.
\STATE  Let $E_v=\{e_{v,1}, \dots, e_{v, \degr(v,G_r)-T^2+1}\}$.\\
\STATE  $E_1\leftarrow E_1\cup E_v$ and $E(G_r)\leftarrow E(G_r)\setminus E_v$.\\
\ENDIF
\ENDFOR
\STATE \tcp{\textbf{pruning step~2}: \textsc{DensityReduction}}
\STATE Let $C_T$ denote the $T$-core of $G_r$.\\
\STATE Order the edges $e_1,\ldots,e_{|E(C_T)|}$ in non-decreasing order of cost.\\
\STATE $E_2\leftarrow \{e_i\mid i\le |E(C_T)|-T|V(C_T)|/2+1\}$\\
\STATE \tcp{\textbf{\textsc{GreedyWalk} on Pruned Graph}:}
\STATE $E(G_r)\leftarrow E(G_r)-E_1-E_2$\\
\STATE $E_3=\textsc{GreedyWalk}(G_r,T,c(\cdot))$\\
\STATE $E'\leftarrow E_1\cup E_2\cup E_3$
\end{algorithmic}
\end{algorithm}
} 
{
}
\begin{lemma}
\label{lemma:prune}
Let $E_1$ and $E_2$ denote the set of edges removed in the pruning steps
\textsc{MaxDegreeReduction} and \textsc{DensityReduction}, respectively.
Then, $c(E_1)$ and $c(E_2)$ are both at most $2c(\eopt(T))$.
\end{lemma}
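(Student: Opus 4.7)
The plan is to handle the two pruning steps separately. In each case I first derive a combinatorial lower bound showing that $\eopt(T)$ itself must remove many edges from a specific set, and then compare the algorithm's cheapest-first selection to OPT's mandated removals.

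For \textsc{MaxDegreeReduction} I would begin with the elementary bound $\rho(H)\ge\sqrt{\Delta(H)}$, obtained from a Rayleigh quotient applied to the characteristic vector of a star at a maximum-degree vertex (equivalently, from $A^2_{vv}=d_v$). Applied to $H=G[E\setminus\eopt(T)]$ this forces $d_v(H)\le T^2$ for every $v$, so $|\eopt(T)\cap E(v)|\ge d_v(G)-T^2$ for every $v\in V_{T^2}$. Writing $E^*_v=\eopt(T)\cap E(v)$ and using that each edge has exactly two endpoints, $\sum_{v\in V_{T^2}} c(E^*_v)\le 2c(\eopt(T))$. On the algorithm's side, when processing $v$ it removes the cheapest $D_v:=d_v(G_r)-T^2+1\le |E^*_v|+1$ edges at $v$ in the current residual $G_r$. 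I would then argue $c(R_v)\le c(E^*_v)$ up to one additive edge cost, by noting that any $|E^*_v|$-sized set of edges at $v$ costs at least the $|E^*_v|$ cheapest, and by amortizing the extra ``$+1$'' edge per vertex into a second factor of two. Summing over $v$ and using the endpoint identity yields $c(E_1)\le 2c(\eopt(T))$.

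For \textsc{DensityReduction} the key ingredient is the average-degree bound $\rho(G'[U])\ge 2|E_{G'}(U)|/|U|$, obtained via the all-ones Rayleigh quotient on the induced subgraph $G'[U]$. Applied to $U=V(C_T)$ in the optimal residual, this forces $|E(C_T)\setminus\eopt(T)|\le T|V(C_T)|/2$, so $|\eopt(T)\cap E(C_T)|\ge N:=|E(C_T)|-T|V(C_T)|/2$. Since the algorithm removes the cheapest $N+1$ edges of $E(C_T)$ and OPT removes at least $N$ edges of $E(C_T)$, the cost of the cheapest $N$ is at most $c(\eopt(T)\cap E(C_T))$, and the additional $(N+1)$-st edge is absorbed into a second factor of the same quantity (using that OPT contains at least one edge of cost at least $c_{N+1}$ whenever OPT is not exactly the cheapest $N$ edges, and a separate direct argument when it is). This gives $c(E_2)\le 2c(\eopt(T)\cap E(C_T))\le 2c(\eopt(T))$.

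The main obstacle I anticipate is the per-vertex cost comparison in \textsc{MaxDegreeReduction}: by the time we process $v$, earlier iterations may have removed some of the cheapest edges at $v$, so the $D_v$ cheapest edges in $G_r$ need not be the $D_v$ cheapest in the original $G$. I would resolve this by a global charging scheme that attributes the cost of an edge removed when processing endpoint $u$ to the $u$-side of the argument only, so that no edge is double-charged across the sum over $v$ and the per-vertex $+1$ overshoots accumulate into the single remaining factor of $2$.
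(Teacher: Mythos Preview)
Your approach is the paper's: the same two spectral lower bounds ($\rho(H)\ge\sqrt{\Delta(H)}$ for $E_1$, and $\rho(H)\ge$ average degree of any induced subgraph for $E_2$), the same comparison of the algorithm's cheapest-first selection against what $\eopt(T)$ is forced to remove, and the factor of $2$ from the endpoint double-count.

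The obstacle you anticipate for \textsc{MaxDegreeReduction} is not actually an obstacle, and no global charging scheme is needed; the per-vertex comparison already goes through. Fix $v\in V_{T^2}$, let $S_v$ be the set of the $d(v,G)-T^2+1$ cheapest edges incident on $v$ in $G$, and let $P_v$ be the edges at $v$ already removed when $v$ is reached, with $|P_v|=p$. Then $|S_v\setminus P_v|\ge |S_v|-p=d(v,G_r)-T^2+1=|R_v|$, and $S_v\setminus P_v$ lies among the edges still present at $v$ in $G_r$. Since $R_v$ is by construction the cheapest $|R_v|$ such edges, $c(R_v)\le c(S_v\setminus P_v)\le c(S_v)\le c(E^*_v)$. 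Summing over $v$ gives $c(E_1)=\sum_v c(R_v)\le\sum_v c(E^*_v)\le 2c(\eopt(T))$ directly. (The paper writes this step a bit loosely, with an ``$=$'' where ``$\le$'' is meant, but this is the argument.)

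Your ``$+1$'' bookkeeping is also unnecessary once you use that the \textsc{SRME} definition requires $\lambda_1<T$ strictly. This gives $\Delta(G[E\setminus\eopt(T)])<T^2$ and a strict average-degree bound on $V(C_T)$, so in the integer regime $|E^*_v|\ge d(v,G)-T^2+1$ and $|\eopt(T)\cap E(C_T)|\ge |E(C_T)|-T|V(C_T)|/2+1$, matching the algorithm's removal counts exactly; the separate amortization of an extra edge that you sketch is not needed.
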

\iftoggle{fullversion}
{
\begin{proof}
Since $\sqrt{\Delta(G')}\leqslant\lambda_1(G')$
\cite{vanmieghem-spectral-graphs}, 
which implies $\Delta(G[E-\eopt(T)])\leq T^2$. Therefore,
$c(\{e\in N(v)\cap \eopt(T)\})\geq \sum_{j=1}^{\degr(v,G)-T^2+1} c(e_{v,j})$, where the
sum is the minimum cost of edges that can be removed to ensure that the degree of $v$
becomes at most $T^2$. Therefore,
\begin{align*}
c(E_1) &= \sum_{v\in V_{T^2}} \sum_{j=1}^{\degr(v,G)-T^2+1} c(e_{v,j})\\
&\leq \sum_{v\in V_{T^2}} c(\{e\in N(v)\cap \eopt(T)\})\\
&\leq c(\eopt(T))
\end{align*}
Recall that the second pruning step is applied on $G_r$.
For bounding $c(E_2)$, we use another lower bound for $\lambda_1$:
for any induced subgraph $H$ of $G_r$, $\sum_{v\in V(H)} \frac{d(v,
   H)}{|V(H)|}\leq \lambda_1(G_r)$.
Therefore, the existence of a $T$-core $C_T$ implies that $\lambda_1(G_r)\geq T$.
Since the average degree of $C_T$ in the residual graph
is at least $T$, it implies that at least
$|E(C_T)|-T|V(C_T)|/2+1$ edges must be removed from
$C_T$. Therefore,
\begin{align*}
c(E_2)=\sum_{j=1}^{|E(C_T)|-T|V(C_T)|/2+1} c(e_j)\le c(\eopt(T)\cap
E(C_T))\,,
\end{align*}
where, the $e_j$ correspond to the first $|E(C_T)|-T|V(C_T)|/2+1$ edges of
least cost. Hence proved.
\end{proof}
}
{The proof of the lemma and algorithmic details of sparsification are omitted for brevity.}
By Lemma \ref{lemma:prune}, it follows that the approximation bounds of
Lemma \ref{lemma:greedywalk} still hold. However, the pruning steps
reduce the number of edges, thereby speeding the implementation of
\textsc{GreedyWalk}. We discuss the empirical performance of pruning in
Section \ref{sec:experiments}. We show below that pruning also improves the
approximation factor marginally from $O(\log n\log \Delta)$ to $O(\log
n\log T)$ which could be significant when $n$ is large and
$T\ll\Delta$.
\begin{lemma}
\label{lemma:greedywalksparse}
Let $E'$ denote the set of edges found by Algorithm
\textsc{GreedyWalkSparse}.  Given any constant $\epsilon > 0$, let $k$ be
an even integer larger than $\frac{\log{n}}{\log(1+\epsilon/3)}$.
Then, we have $\lambda_1(G[E\setminus E']) \leq (1+\epsilon)T$,
and $c(E') = O(c(\eopt(T))\log n\log T)$.
\end{lemma}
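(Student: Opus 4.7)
The plan is to decompose the output as $E' = E_1 \cup E_2 \cup E_3$, where $E_1, E_2$ come from the two pruning steps and $E_3$ is produced by invoking \textsc{GreedyWalk} on the pruned graph $G_r := G[E\setminus(E_1\cup E_2)]$. The cost bound and the spectral-radius bound are then handled separately, each by reducing to a lemma already proved.

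For the cost bound, Lemma \ref{lemma:prune} immediately gives $c(E_1) + c(E_2) \le 4\,c(\eopt(T))$. For $E_3$, I would invoke Lemma \ref{lemma:greedywalk} applied to $G_r$ in place of $G$, with the same threshold $T$ and the same even integer $k > \log n / \log(1+\epsilon/3)$. Two facts need to be verified to make this substitution yield an $O(\log n \log T)$ rather than $O(\log n \log \Delta)$ factor. First, by the construction in \textsc{MaxDegreeReduction}, $\Delta(G_r) \le T^2$, so the $\log \Delta$ term in Lemma \ref{lemma:greedywalk} becomes $\log T^2 = 2\log T$. Second, I need the optimum of \textsc{SRME}$(G_r, c(\cdot), T)$ to be at most $c(\eopt(T))$; this follows because $\eopt(T)\cap E(G_r)$ is feasible for the pruned instance: the eigenvalue monotonicity for induced subgraphs gives $\lambda_1(G_r[E(G_r)\setminus\eopt(T)]) \le \lambda_1(G[E\setminus\eopt(T)]) \le T$. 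Combining these, Lemma \ref{lemma:greedywalk} applied to $G_r$ yields $c(E_3) = O(c(\eopt(T))\log n \log T)$, and summing with the pruning contributions gives the claimed cost bound.

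For the spectral radius bound, I would observe that the graph output by \textsc{GreedyWalkSparse} is exactly $G_r$ with $E_3$ removed, i.e., $G[E\setminus E']$. Since $E_3$ is the output of \textsc{GreedyWalk}$(G_r, T, c(\cdot), k)$, Lemma \ref{lemma:greedywalk} directly gives $\lambda_1(G[E\setminus E']) = \lambda_1(G_r[E(G_r)\setminus E_3]) \le (1+\epsilon)T$.

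The main obstacle is being careful about which graph each invoked lemma refers to: \textsc{GreedyWalk} is being run on $G_r$, not on $G$, so both its stopping condition (walks counted in $G_r$) and the quantity $\Delta$ in its approximation guarantee refer to $G_r$. The reduction of $\Delta$ from $\Delta(G)$ to $T^2$ is precisely what converts the $\log\Delta$ factor to $\log T$, and the subgraph-eigenvalue argument above is what keeps the reference optimum equal to $c(\eopt(T))$ rather than some potentially smaller quantity that would weaken the comparison. Everything else is bookkeeping.
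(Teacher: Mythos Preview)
Your proposal is correct and follows essentially the same route as the paper: bound $c(E_1)+c(E_2)$ via Lemma~\ref{lemma:prune}, then apply Lemma~\ref{lemma:greedywalk} to the pruned graph $G_r$ and exploit $\Delta(G_r)<T^2$ to convert the $\log\Delta$ factor to $\log T$. You are actually more careful than the paper in two places: you explicitly justify why the optimum on $G_r$ is at most $c(\eopt(T))$ (via eigenvalue monotonicity for subgraphs), and you spell out the spectral-radius bound, both of which the paper leaves implicit.
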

\iftoggle{fullversion}
{
\begin{proof}
From Lemma~\ref{lemma:prune}, the number of edges removed is at most
$2c(\eopt)$. The residual graph $G_r$ has maximum degree less than $T^2$.
Therefore, applying Lemma~\ref{lemma:greedywalk} on $G_r$, it follows
that the number of edges removed is $O(c(\eopt(T))\log n\log T)$. Hence,
the total number of edges removed by \textsc{GreedyWalkSparse} is at
most $2c(\eopt(T))+O(c(\eopt(T))\log n\log T)=O(c(\eopt(T))\log n\log T)$.
\end{proof}

}
{The proof is omitted for brevity.}

\section{\textsc{PrimalDual}: $O(\log{n})$-approximation}
\label{sec:primaldual}
\noindent
\textbf{Main idea}:
The approach of \cite{gandhi:ja04} gives an $f$-approximation for the partial
covering problem, where $f$ denotes the maximum number of sets that contain any element
in the set system.  As in the proof of Lemma \ref{lemma:greedywalk},
in our reduction from the \textsc{SRME} problem to partial covering, elements correspond
to all the closed walks of length $k=O(\log{n})$, while sets correspond to edges; for an edge $e$,
the corresponding set $S_e$ consists of all the walks $w$ that are hit by $e$. In this reduction,
each walk $w$ lies in $k$ sets; therefore, $f=O(\log{n})$ for this set system. Therefore,
the approach of \cite{gandhi:ja04} could improve the approximation factor.  Unfortunately,
our set system has size $n^{O(\log{n})}$, so that the algorithm of \cite{gandhi:ja04} cannot
be used directly to get a polynomial time algorithm.

\begin{algorithm}{}
\label{alg:primaldual}
\caption{\textsc{PrimalDual}$(\mathcal{T}', \mathcal{S}', c', \sigma')$}
\small
\begin{algorithmic}[1]
\OUTPUT Edge set $E''$
\STATE Initialize $z_e=0$ for all $S_e\in \mathcal{S}'$, $C\leftarrow\phi$.\\
\STATE \tcp{$u(w)=0$ for all walks $w$ in $G'$.}\\
\WHILE{$C$ is not $\sigma'$-feasible}
\STATE {$x = \min_{e\in E\setminus E''}\{ \frac{c(e) - z_e}{\walks(e, G',k)}\}$; let $e$ be
an edge for which the minimum is reached.}\\
\STATE $C\leftarrow C\cup\{S_e\}$\\
\STATE For each $e'\in E\setminus E''$: $z_{e'} = z_{e'} + x\cdot \walks(e', G',k)$\\
\STATE \ \ \ \tcp{$u(w) = u(w)+x$ for all walks $w$ in
$G'$ that pass through $e'$}\\
\STATE $E'\leftarrow E''\cup \{e\}$
\ENDWHILE
\end{algorithmic}
\end{algorithm}

The algorithm of Gandhi et al. \cite{gandhi:ja04} uses a primal-dual approach,
which maintains dual variables $u(w)$ for each element (i.e., walk); these are
increased gradually, and a set (i.e., an edge) is picked if the sum of duals corresponding
to the elements in the set equals its cost.
We now discuss how to adapt this algorithm to run in polynomial time, and only focus on
polynomial time implementation of the \textsc{PrimalDual} subroutine of
\cite{gandhi:ja04} in detail here. However, we also present the set cover
algorithm \textsc{HitWalks} for completeness. This algorithm iterates over
all edges and invokes \textsc{PrimalDual} in each iteration to obtain a candidate
set of edges to remove and finally chooses the set with minimum cost.
$\mathcal{T}'$, $\mathcal{S}'$, $c'$ and  $\sigma'$ denote the set of elements (walks) to be covered,
the sets (corresponding to edges that can be chosen), the costs corresponding to the sets/edges
and the number of elements (walks) that need to be covered, respectively. A subset $C\subseteq \mathcal{S}'$
is $\sigma'$-feasible if $|\cup_{S_e\in C} S_e|\geq \sigma'$. Let $u(w)$ denote the dual variables
corresponding to the walks $w$; these are not maintained in the algorithm explicitly,
but assigned in the comments, for use in the analysis.
\begin{algorithm}{}
\label{alg:setcover}
\caption{\textsc{HitWalks}$(\mathcal{T}, \mathcal{S}, c, \sigma)$}
\small
\begin{algorithmic}[1]
\INPUT Set of all $k$-closed walks $\mathcal{T}$, walks corresponding to
edges $\mathcal{S}$, edge cost set $c$,  number of walks to hit $\sigma$
\OUTPUT Edge set $E'$
\STATE Sort the edges of $G$ in increasing order of their costs.\\
\STATE Initialize $\forall j$, $c'(e_j)\leftarrow\infty$\\
\FOR {$j\leftarrow 1$ to $m$}
\STATE   $c'(e_j)\leftarrow c(e_j)$ and compute $\walks(e_j,G,k)$\\
\STATE   \parbox{.45\textwidth}{$cs_j\leftarrow\infty$. \tcp{cost of edge set in this iteration}}
\IF{$|S_1\cup S_2\cup\cdots S_j|\geqslant \sigma$}
\STATE $E'_j~=~\{e_j\}~\cup~\textsc{PrimalDual}\big(\mathcal{T}~\setminus~S_j,~\mathcal{S}~\setminus~S_j,c',\sigma-\walks(e_j,G,k)\big)$\\
\STATE      $cs_j=c(E'_j)$\\
\ENDIF
\STATE   $i=\min_jcs_j$\\
\STATE   $E'=E'_i$
\ENDFOR
\end{algorithmic}
\end{algorithm}
This algorithm does not explicitly update the dual variables, but the edges
are picked in the same sequence as in \cite{gandhi:ja04}.
\begin{lemma}
Given any constant $\epsilon > 0$, let $k$ be
an even integer larger than $\frac{\log{n}}{\log(1+\epsilon/3)}$.
The dual variables $u(w)$ in algorithm \textsc{PrimalDual} are
maintained and updated as in \textup{\cite{gandhi:ja04}}, and the edge $e$
picked in each iteration is the same.
We have $c(E')=O(c(E_{OPT})\log{n})$ and $\lambda_1(G[E\setminus E'])\leq (1+\epsilon)T$.
\label{lemma:primaldual}
\end{lemma}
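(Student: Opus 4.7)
\textbf{Proof proposal for Lemma \ref{lemma:primaldual}.}

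The plan is to argue that Algorithm \textsc{PrimalDual}, as written, is a faithful polynomial-time simulation of the Gandhi--Khuller--Parthasarathy partial-cover primal-dual algorithm \cite{gandhi:ja04} applied to the (exponentially large) instance in which the ground set is $\mathcal{W}_k(G)$ and the set corresponding to edge $e$ is $S_e=\{w\in\mathcal{W}_k(G):e\in w\}$. Once this equivalence is established, the approximation bound and the spectral radius bound both follow by reusing ingredients already present in the paper.

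The main obstacle is the polynomial-time simulation, since the dual vector $u(\cdot)$ has $|\mathcal{W}_k(G)|=n^{O(\log n)}$ coordinates and cannot be stored explicitly. The key observation is that the algorithm of \cite{gandhi:ja04}, in a single step, uniformly increases $u(w)$ by the same amount $x$ for every walk $w$ not yet hit by the current cover, then picks an edge $e$ whose dual constraint $\sum_{w:e\in w} u(w)\leq c(e)$ becomes tight. I would maintain the invariant that at the start of each iteration, the variable $z_e$ equals exactly $\sum_{w:e\in w} u(w)$ restricted to the walks still uncovered; this is preserved by the update $z_{e'}\leftarrow z_{e'}+x\cdot\walks(e',G',k)$, because $\walks(e',G',k)$ is by definition the number of uncovered walks through $e'$ in the residual graph $G'=G[E\setminus E'']$. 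The maximal legal step $x$ is then $\min_e(c(e)-z_e)/\walks(e,G',k)$, which is exactly what Step~4 computes, and the edge achieving the minimum is precisely the one whose dual constraint becomes tight first in \cite{gandhi:ja04}. Thus the sequence of edges picked is identical to the abstract primal-dual run.

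Given the equivalence, I invoke the $f$-approximation guarantee of \cite{gandhi:ja04}, where $f$ is the maximum frequency of any element. A closed walk of length $k$ visits at most $k$ distinct edges, so every $w\in\mathcal{W}_k(G)$ belongs to at most $k$ of the sets $S_e$; hence $f\leq k=O(\log n)$ for our choice of $k$. This gives $c(E')\leq f\cdot c(E_{\mathrm{HITOPT}})=O(c(E_{\mathrm{HITOPT}})\log n)$, where $E_{\mathrm{HITOPT}}$ is the optimal partial cover hitting the required $\sigma=W_k(G)-nT^k$ walks. As shown in the last paragraph of the proof of Lemma \ref{lemma:greedywalk}, the optimal spectral-radius solution $\eopt(T)$ itself leaves fewer than $nT^k$ closed $k$-walks in the residual graph and therefore hits at least $\sigma$ walks; consequently $c(E_{\mathrm{HITOPT}})\leq c(\eopt(T))$, yielding $c(E')=O(c(\eopt(T))\log n)$. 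The outer loop of \textsc{HitWalks} only lowers the cost further.

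Finally, the spectral radius bound is immediate from the stopping condition. \textsc{PrimalDual} is invoked with $\sigma'$ chosen so that the residual graph $G''=G[E\setminus E']$ satisfies $W_k(G'')\leq nT^k$, and even $k$ ensures all $\lambda_i(G'')^k\geq 0$. Exactly as in the proof of Lemma \ref{lemma:greedywalk},
\begin{equation*}
\lambda_1(G'')^k\leq\sum_{i=1}^n\lambda_i(G'')^k\leq kW_k(G'')\leq nkT^k,
\end{equation*}
so $\lambda_1(G'')\leq e^{(\log n+\log k)/k}T\leq(1+\epsilon/3)^2T\leq(1+\epsilon)T$ for the stated choice of $k$, completing the proof.
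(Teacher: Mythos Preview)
Your proposal is essentially correct and follows the same line as the paper: establish the invariant linking $z_e$ to the (implicit) dual variables $u(w)$, conclude that \textsc{PrimalDual} picks exactly the same edges as the abstract algorithm of \cite{gandhi:ja04}, and then read off the $f$-approximation with $f\le k=O(\log n)$ together with the $\lambda_1$ bound already proved in Lemma~\ref{lemma:greedywalk}. In fact you supply more detail than the paper's own (very terse) proof, which only states the invariant and the tightness condition and leaves the remaining two claims implicit.

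One small technical correction: the invariant you state, ``$z_e=\sum_{w:e\in w}u(w)$ restricted to the walks still uncovered,'' is not the one that is actually preserved. Once a walk $w$ becomes covered its dual $u(w)$ is frozen but still contributes to $z_e$; what the update $z_{e'}\leftarrow z_{e'}+x\cdot\walks(e',G',k)$ maintains is $z_{e'}=\sum_{w\in S_{e'}}u(w)$ over \emph{all} walks through $e'$, covered or not. The increment is over uncovered walks only because those are the only $u(w)$ that change. With the invariant stated this way, the tightness test $z_e=c(e)$ is exactly the dual constraint $\sum_{w\in S_e}u(w)=c(e)$ becoming tight, and your argument goes through unchanged.
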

\iftoggle{fullversion}
{
\begin{proof}
Instead of updating the dual variable $u(w)$ for each element (walk) $w$, as done in \cite{gandhi:ja04}, the variable $z_e$ corresponding to each set (edge) $e$ is updated in algorithm \textsc{PrimalDual} at the end of each iteration. It is easy to see that, the following is an invariant at the end of each iteration, $z_e = \sum_{w\in S_e}u(w)$. Also note that, a set $e$ is picked into the cover in \textsc{PrimalDual}, whenever $z_e=c_e$.

Therefore, increasing the $u(w)$'s has the same effect as increasing the $z_e$'s in terms of picking the sets into the cover and both the algorithm \textsc{PrimalDual} and the one in \cite{gandhi:ja04} chooses the same set in each iteration.
\end{proof}
}
{The proof is omitted for brevity.}
\section{Node Version}
\label{sec:extensions}
\iftoggle{fullversion}
{
Our discussion so far has focused on the \textsc{SRME} problem. We now consider
extensions which capture two kinds of issues arising in practice.
\paragraph{1.~Non-uniform transmission rates.}
In general, the transmission rate $\beta$ is not constant for all the edges. The
transmission rate $\beta_{ij}$ for edge $(i,j)$ depends
on individual properties, especially the demographics of the end-points $i$ and $j$, such as
age, e.g., \cite{medlock2009}.
Let $B=B(G)=(\beta_{ij})$ denote the matrix of the transmission rates.
This gives us the \textsc{SRME-nonuniform} problem,
which is defined as follows: Given an undirected graph $G=(V, E)$, with transmission rate $\beta_{ij}$ for each $(i,j)\in E$
and recovery rate $\delta$, find the smallest set $E'\subseteq E$ such that
$\rho(B(G[E-E'])) \leq\delta$.
We extend the spectral radius characterization of
\cite{ganesh+topology05,Wang03Epidemic,aditya12} to handle this setting, and show
that \textsc{GreedyWalk} can also be adapted for solving \textsc{SRME-nonuniform},
with the same guarantees.
\iftoggle{fullversion}
{The details of the algorithm, lemma and proofs are discussed in the appendix \ref{sec:non-uniform-appendix}.}
{The algorithmic details and performance analysis are omitted for brevity.}
\paragraph{2.~The node removal version (\textsc{SRMN} problem).} We extend the \textsc{GreedyWalk}
algorithm in a natural manner to work for \textsc{SRMN}, with the
same approximation guarantees. For the details, please see the appendix \ref{appsubsec:srmn}.
}
{
Our discussion so far has focused on the \textsc{SRME} problem. We now consider
The \textsc{SRMN} problem. Recall the definition of $\walks(v, G, k)$ from Section \ref{sec:preliminaries}.
Let $G[V'']$ denote the subgraph of $G=(V, E)$ induced by subset $V''\subset V$.
We modify Algorithm \textsc{GreedyWalk} to work for the \textsc{SRMN} problem
as shown in algorithm \textsc{GreedyWalkSRMN}.

\begin{algorithm}{}
\label{alg:srmn}
\caption{Algorithm \textsc{GreedyWalkSRMN}}
\begin{algorithmic}[1]
\STATE Initialize $V'\leftarrow \phi$
\WHILE {$W_k(G[V\setminus V'])\geq nT^k$}
\STATE $r \leftarrow W_k(G[E\setminus E']) - nT^k$
\STATE Pick $v\in V\setminus V'$ that maximizes $\frac{\min \{r, \walks(v, G[V\setminus V'],k)\}}{c(v)}$
\STATE $V'\leftarrow V'\cup\{v\}$
\ENDWHILE
\end{algorithmic}
\end{algorithm}

It can be shown on the same lines as Lemma \ref{lemma:greedywalk} that this gives
a solution of cost $O(c(\eopt(T))\log n\log \Delta)$, where $c(\eopt(T))$ denotes the
cost of the optimal solution to \textsc{SRMN} problem. Further, the same running time
bounds as in Sections \ref{sec:matrix} and \ref{sec:dynamic} hold.
}
\section{Popular heuristics and lower bounds}
\label{sec:lb}
A number of heuristics have been developed for controlling the spread of epidemics-- these
are discussed below. All these heuristics involve ordering the edges based on some kind
of score, and then selecting the top few edges based on this score. We describe the score
function in each heuristic.
\begin{enumerate}[1.]

\item \textsc{ProductDegree} (\cite{vanmieghem:ton12}): The score for edge $e=(u,v)$ is defined as
$\deg(u)\times \deg(v)$.
Edges are removed in non-increasing order of this score.

\item \textsc{EigenScore} (\cite{vanmieghem:ton12, tong:cikm12}): Let $\mathbf{x}$ be the eigenvector corresponding to the
first eigenvalue of the graph. The score for edge $e=(u,v)$ is $|x(u)\times x(v)|$.


\item \textsc{LinePagerank}: This method uses the linegraph $L(G)=(E, F)$ of graph $G=(V,E)$,
where $(e,e')\in F$ if $e, e'\in E$ have a common endpoint. We define the score of edge
$e\in E$ as the pagerank of the corresponding node in $L(G)$.
\end{enumerate}

As we find in Section \ref{sec:exp-results}, these heuristics work well for different
kinds of networks.
We design another heuristic,
\textsc{Hybrid}, which picks the best of the
\textsc{EigenScore} and \textsc{ProductDegree} methods. The edges are ordered
in the following manner:
(1)~Let $\pi_1,\ldots,\pi_m$ and $\mu_1,\ldots,\mu_m$ be orderings of edges in the \textsc{Eigenscore}
and \textsc{ProductDegree} algorithms, respectively.
(2)~Initialize $i=0$ and $j=0$, and
(3)~from the edges $\pi(i)$ and $\rho(j)$, remove the one
which decreases the max eigenvalue of the residual graph more. Increment the corresponding index.

We have examined the worst case performance of these heuristics.
Two of these, namely, \textsc{EigenScore} and \textsc{ProductDegree}, have been used specifically
for reducing the spectral radius, e.g., \cite{vanmieghem:ton12, tong:cikm12}.
No formal analysis is known for any of these heuristics in the context of the \textsc{SRME} or
\textsc{SRMN} problems; some of them seem to work pretty
well on real world networks.
We show that the worst case performance of these heuristics can be quite poor, in general.
\begin{theorem}\label{thm:productdegree}
Given any sufficiently large positive integer $n$, there exists a
threshold $T'<a\sqrt{n}$, for some constant $a<1$ and a graph of size $n$
for which the number of edges removed by \textsc{ProductDegree},
\textsc{EigenScore}, \textsc{Hybrid} and \textsc{LinePagerank} is
$\Omega\big(\frac{n}{T'^2}\big)c(\eopt)$.
\end{theorem}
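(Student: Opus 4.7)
The plan is to construct, for every sufficiently large $n$, a graph $G$ on $n$ vertices and a threshold $T'$ for which $c(\eopt)=\Theta(T')$ but every one of the four heuristics is forced to remove $\Omega(n/T')$ edges, yielding a ratio of $\Omega(n/T'^2)$. The backbone is the disjoint union $G = K_{T'+2}\cup H$, where $K_{T'+2}$ is the unique component whose eigenvalue $T'+1$ sits just above $T'$, and $H$ is a large sparse decoy on $\approx n$ vertices whose eigenvalue lies below $T'$ but whose edges outrank the clique edges under each heuristic's score.

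For $c(\eopt)$, the lower bound uses $\lambda_1(G')\geq$ (average degree of $G'$) on the residual clique, forcing at least $\approx(T'+2)/2$ removals inside $K_{T'+2}$, and the matching upper bound is achieved by isolating one clique vertex (removing $T'+1$ incident edges) plus one extra edge from $K_{T'+1}$, giving $c(\eopt)=\Theta(T')$. For the decoy $H$ I take a caterpillar whose spine is a path of $\ell=\Theta(n/T')$ vertices, with $d=T'$ pendant leaves at each spine vertex; setting $\mu=\lambda-d/\lambda$ and using $\mu=2$ for the long path gives $\lambda_1(H)=1+\sqrt{1+T'}$, strictly below $T'$ once $T'$ exceeds a small absolute constant, so removing any edge of $H$ leaves $\lambda_1(G)=T'+1$ unchanged.

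The heuristic-by-heuristic analysis then runs as a series of direct score comparisons. For \textsc{ProductDegree}, each spine edge has product degree $(d+2)^2=(T'+2)^2$, strictly exceeding the clique edges' $(T'+1)^2$, so the heuristic removes all $\ell-1=\Theta(n/T')$ spine edges before touching $K_{T'+2}$. For \textsc{LinePagerank}, the same spine edges have line-graph degree $2(T'+2)-2=2T'+2$, strictly greater than the clique edges' $2T'$; since the line graph decomposes into one component per component of $G$ and PageRank within a component is (up to a constant) monotone in degree, spine edges again dominate the ranking and absorb $\Theta(n/T')$ wasted removals. To trap \textsc{EigenScore} (and hence \textsc{Hybrid}), I augment $H$ with isospectral decoys---a small family of disjoint stars $K_{1,(T'+1)^2}$ each with spectral radius exactly $T'+1=\lambda_1(K_{T'+2})$---so that the top eigenspace becomes degenerate. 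Under the standard worst-case convention that a heuristic's choice within a degenerate eigenspace can be adversarial, the principal eigenvector can be taken to be supported on the stars, forcing \textsc{EigenScore} to spend all its removals on star edges with no effect on $\lambda_1(G)$. \textsc{Hybrid} inherits the trap because at every such step both the \textsc{EigenScore} and the \textsc{ProductDegree} candidate edges lie outside $K_{T'+2}$ and yield zero first-order decrease of $\lambda_1$, leaving its max-eigenvalue-decrease tiebreaker indifferent.

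The main obstacle will be this last step: genuine eigenvalue degeneracy makes the principal eigenvector ambiguous, so one must either invoke the adversarial tie-breaking convention or apply a small perturbation while tracking the induced extra OPT cost of attacking the perturbed stars. A careful count of stars, spine length, and clique size is needed so that (i)~the heuristic waste is $\Omega(n/T')$, (ii)~the extra OPT cost from the isospectral stars stays within $O(T')$ so that $c(\eopt)=\Theta(T')$ is preserved, and (iii)~the total vertex count equals $n$ for every $T'<a\sqrt{n}$ with an absolute constant $a<1$; these three constraints together pin down the constants in the construction and finish the bound.
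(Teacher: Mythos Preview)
Your \textsc{ProductDegree} and \textsc{LinePagerank} arguments are close in spirit to the paper's (both use a caterpillar whose spine edges outrank the clique edges), but your treatment of \textsc{EigenScore} has a genuine gap that breaks the bound, and with it the \textsc{Hybrid} case.

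The problem is the disconnected construction. In a disjoint union, any eigenvector for $\lambda_1(G)$ is supported on the components realising the maximum eigenvalue, and your caterpillar $H$ has $\lambda_1(H)<T'$, so it never carries any eigenscore mass. Your fix is to add $s$ stars $K_{1,(T'+1)^2}$ isospectral with the clique and invoke adversarial tie-breaking. But trace what this actually buys: removing one edge from a star drops its eigenvalue to $\sqrt{(T'+1)^2-1}<T'+1$, so after at most $s$ adversarial removals the clique holds the unique top eigenvalue and \textsc{EigenScore} turns to it. Even if you interleave (one clique edge lowers the clique, stars regain the lead, etc.), the total number of star edges the heuristic can ever remove before all stars fall below $T'$ is at most $s\cdot\Theta(T')$. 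To make this $\Omega(n/T')$ you would need $s=\Omega(n/T'^2)$ stars, but then $\eopt$ must also reduce each star from $(T'+1)^2$ to fewer than $T'^2$ leaves, contributing $\Omega(sT')=\Omega(n/T')$ to $c(\eopt)$ and destroying your $c(\eopt)=\Theta(T')$ claim. Conversely, keeping $s=O(1)$ so that constraint~(ii) holds gives only $O(T')$ wasted edges, not $\Omega(n/T')$. The ``small perturbation'' alternative you mention runs into the same arithmetic. There is no ``standard worst-case convention'' that lets you escape this; the heuristic is defined on the eigenvector of the current graph, and degeneracy does not let you manufacture $\Omega(n/T')$ bad choices out of $O(1)$ decoys.

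The paper sidesteps this entirely by making the graph \emph{connected}: a $(T'+1)$-clique $G_1$, the caterpillar $G_2$, and a single large star $G_3$ on $(T'+1)^2$ leaves, joined by single edges $G_1\!-\!G_2\!-\!G_3$. Connectivity makes Perron--Frobenius apply, so the top eigenvector is unique and strictly positive. The key calculation (which your proposal is missing) is that the eigenvector components along the spine satisfy $x(v_{i+1})>\tfrac{T'}{2}x(v_i)$, growing monotonically toward the big star, so every spine edge has strictly larger eigenscore than every clique edge. \textsc{EigenScore} therefore strips all $q=\Theta(n/T')$ spine edges before touching $G_1$, while $c(\eopt)=O(T')$ (disconnect the three pieces and trim). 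This is the idea you need: a single high-eigenvalue ``anchor'' at one end of a long connected path pulls the Perron vector away from the clique, rather than trying to confuse the heuristic with degeneracy in a disconnected graph.
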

\iftoggle{fullversion}
{The proof is presented in appendix~\ref{sec:worstcaseproof}.}
{We give here a proof sketch for theorem~\ref{thm:productdegree}. The detailed proof is omitted for brevity. We construct a graph $G$ for which the statement holds. For convenience
let us assume that $T'$ is a positive integer. $G$ contains (1)~a clique
$G_1$ on ${T'+1}$ nodes; (2)~a caterpillar tree $G_2$, which comprises of a path $v_1v_2\cdots v_{q-1}$
with $v_i$ adjacent to $T'$ leaves each and (3)~$G_3$, a star graph
with $(T'+1)^2$ leaves and central vertex denoted by $v_q$. We connect
$G_1$ to $G_2$ by $(v_0,v_1)$ where, $v_0$ is some node in $G_1$ and
$G_2$ is connected to $G_3$ by the edge $(v_q,v_{q-1})$. Note that
$q=\frac{n-(T'+1)^2-T'}{T'}$ and $\lambda_1(G)\ge\lambda_1(G_3)=T'+1$. Again,
here we assume that $q$ is an integer.

It can be shown that $c(\eopt)\le 2T'+3$.
Removing the edges $(v_0,v_1)$ and $(v_{q-1},v_q)$ isolates
the components $G_1$, $G_2$ and $G_3$. $G_1$ is a clique on $T'+1$
nodes and on removing one edge, its spectral radius decreases below
$T'$. $G_2$ is a star with $(T'+1)^2$ leaves and therefore, on removing
at most $(T'+1)^2-(T'^2+1)$ edges, its spectral radius decreases
below $T'$. It can be shown that $\lambda_1(G_2)\le \sqrt{T'}+2$.

It can be demonstrated that all the four algorithms score the edges
$(v_i,v_{i+1})$, $i=0,\ldots,q-2$ above any edge belonging to the clique
$G_1$. However, the spectral radius cannot be brought down below $T'$
until at least one edge in $G_1$ is removed. Therefore, at least $q$
edges will be removed by all the algorithms.  By the initial assumption
that $T'<c\sqrt{n}$, it follows that $q=\Omega\big(\frac{n}{T'}\big)$,
while, ${c(\eopt)}=O(T')$, hence the theorem holds.
}

\section{Experiments}
\label{sec:experiments}

\subsection{Methods and Dataset}

We evaluate the algorithms developed in the paper\footnote{All code at: \codeurl.}
-- \textsc{GreedyWalk},
\textsc{GreedyWalkSparse} and \textsc{PrimalDual} -- and compare their performance
with the heuristics from literature --
\textsc{EigenScore}, \textsc{ProductDegree}, \textsc{LinePagerank} and
\textsc{Hybrid} (described in Section~\ref{sec:lb}), as a more sophisticated baseline.
The networks which we considered in our empirical analysis are listed in Table~\ref{table:networks}
spanning infrastructure networks, social networks and random graphs.

\begin{table}[ht]
\centering
\caption{\small Networks and their sizes. The first two are synthetic random networks; others are taken from \cite{snap:01} and \cite{cinet:14}}
\footnotesize
\begin{tabular}{p{2.8cm} p{0.8cm} p{0.9cm}p{0.7cm}}
\toprule
Network & nodes & edges & $\lambda_1$ \\ \midrule
Barabasi-Albert & $1000$ & $1996$ & $11.1$  \\
Erdos-Renyi & $994$ & $2526$ & $6.38$  \\
P2P (Gnutella05) & $8846$ & $31839$ & $23.55$ \\
P2P (Gnutella06) & $8717$ & $31525$ & $22.38$ \\
Collab. Net (HepTh)  & $9877$ & $25998$ & $31.03$ \\
Collab. Net (GrQc)  & $5242$ & $14496$ & $45.62$ \\
AS (Oregon 1)  & $10670$ & $22002$ & $58.72$ \\
AS (Oregon 2)  & $10900$ & $31180$ & $70.74$ \\
Brightkite Net  & $58228$ & $214078$  & $101.49$ \\
Youtube Network & $1134890$ & $2987624$ & $210.4$ \\
Stanford Web graph & $281903$ & $1992636$ & $448.13$ \\ \bottomrule
\end{tabular}
\label{table:networks}
\vspace{-0.18in}
\end{table}


\subsection{Experimental results}
\label{sec:exp-results}

$\newline$
\par
\noindent
\textbf{Performance of our algorithms and comparison with other heuristics:}
We first compare the quality of solution from our algorithms with the
\textsc{EigenScore}, \textsc{ProductDegree}, \textsc{LinePagerank} and
\textsc{Hybrid} heuristics in Figure \ref{fig:simple_methods}. 
We note that \textsc{GreedyWalk} is consistently better than all
other heuristics, 
especially as the target threshold becomes smaller.
Compared to the \textsc{EigenScore}, \textsc{ProductDegree} and \textsc{LinePagerank} heuristics,
the spectral radius for the solution produced by \textsc{GreedyWalk},
as a function of the fraction of edges removed, is lower by at least 10-20\%.
Our improved
baseline, the \textsc{Hybrid} heuristic, works better than the other heuristics,
and comes somewhat close the \textsc{GreedyWalk} in many networks. 

Though \textsc{PrimalDual} gives a significantly better approximation guarantee, compared to
\textsc{GreedyWalk}, it has a much higher running time. Therefore, we only evaluate
it for one iteration of Algorithm \textsc{HitWalks}.
Figure \ref{fig:gw_vs_pd} shows that \textsc{PrimalDual} is quite close
to \textsc{GreedyWalk} after just one iteration; we expect running this algorithm fully would
further improve the performance, but additional work is needed to improve the running time.

\iftoggle{fullversion}
{
\begin{figure*}[ht]
\centering
\begin{tabular}{ccc}
\subfloat[AS Oregon-1]{\label{fig:5}
 \includegraphics[scale=0.23]{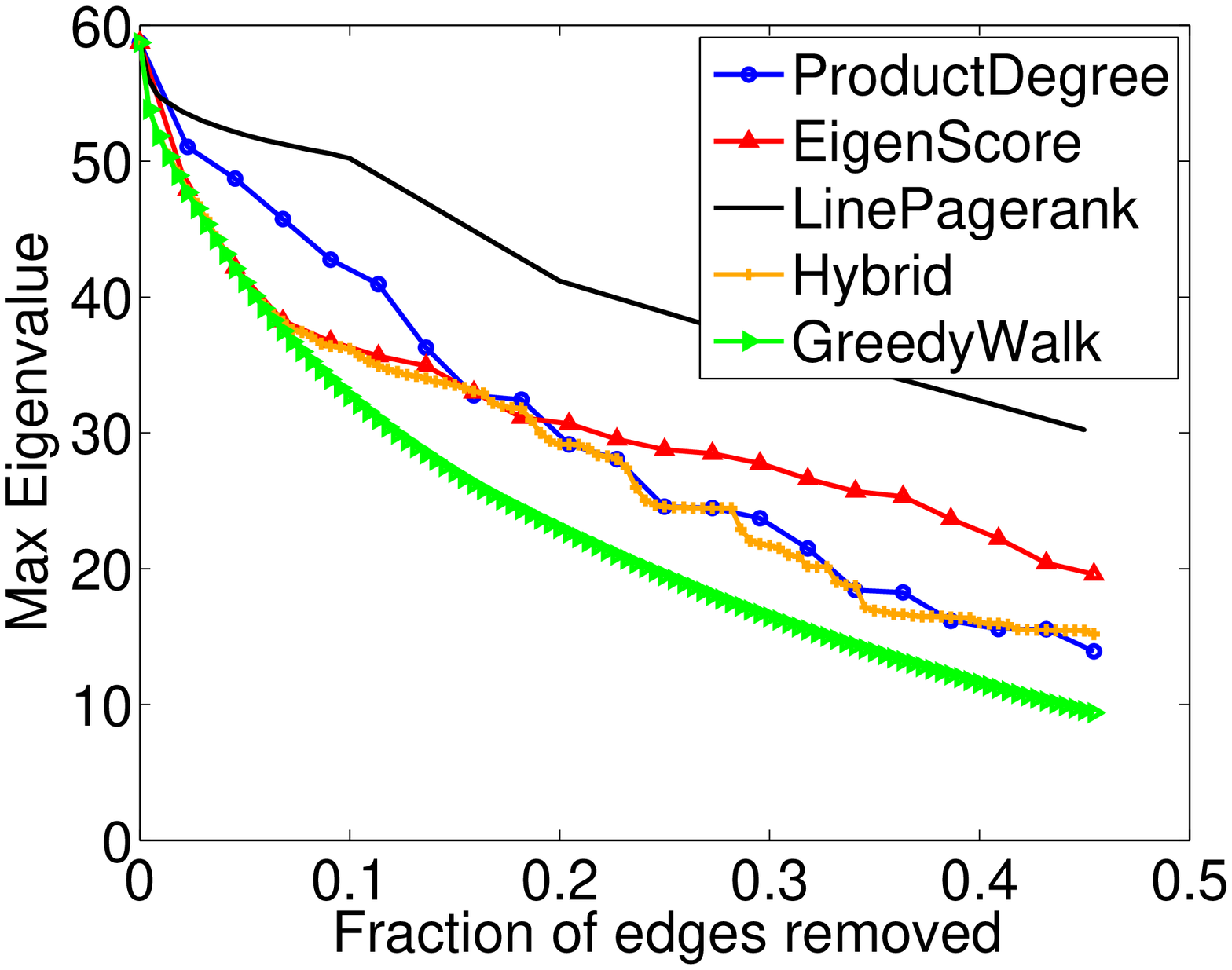}
} & \subfloat[AS Oregon-2]{\label{fig:as2}
 \includegraphics[width=0.23\textwidth]{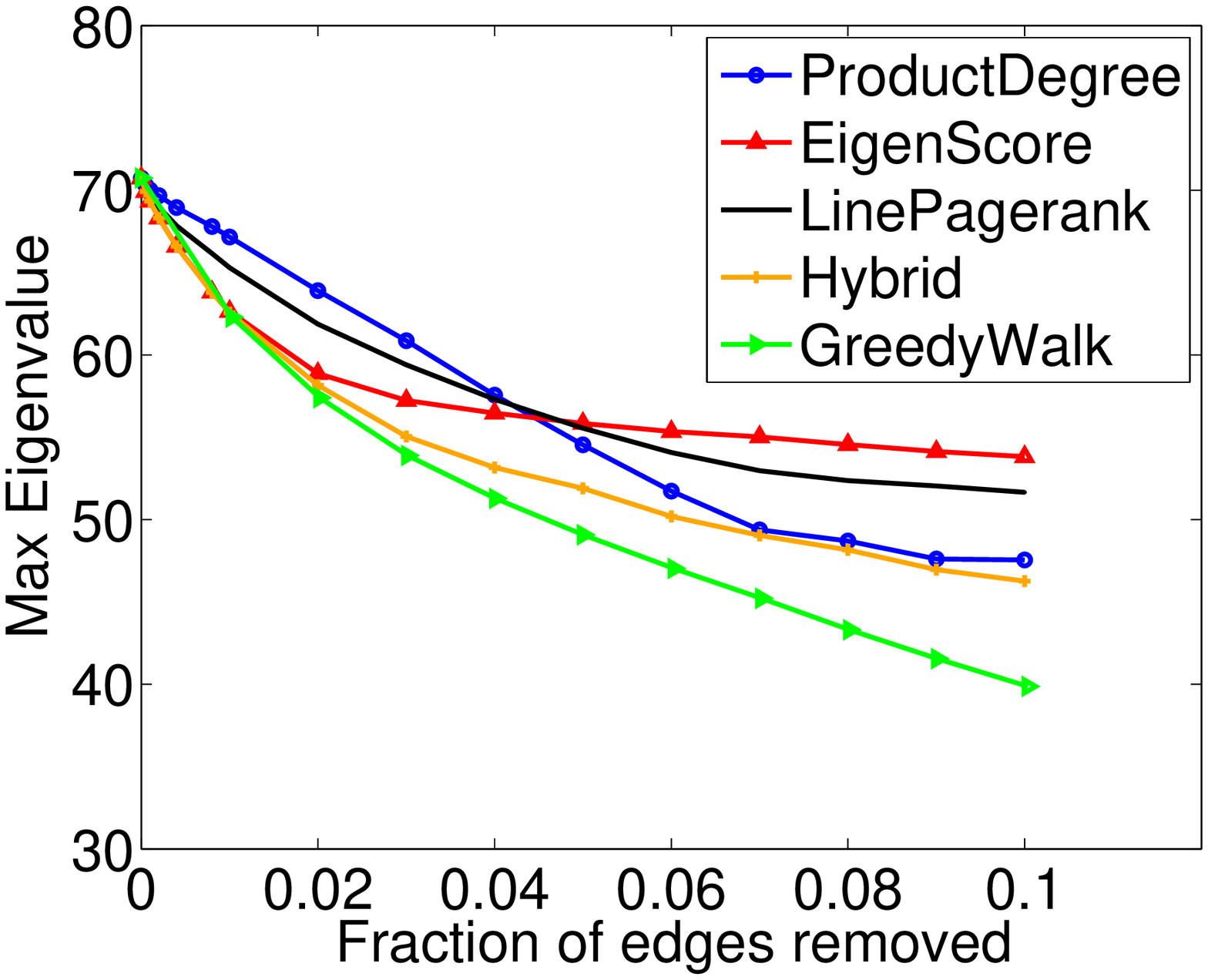}
} & \subfloat[Collaboration GrQc]{ \label{fig:colgen}
 \includegraphics[scale=0.23]{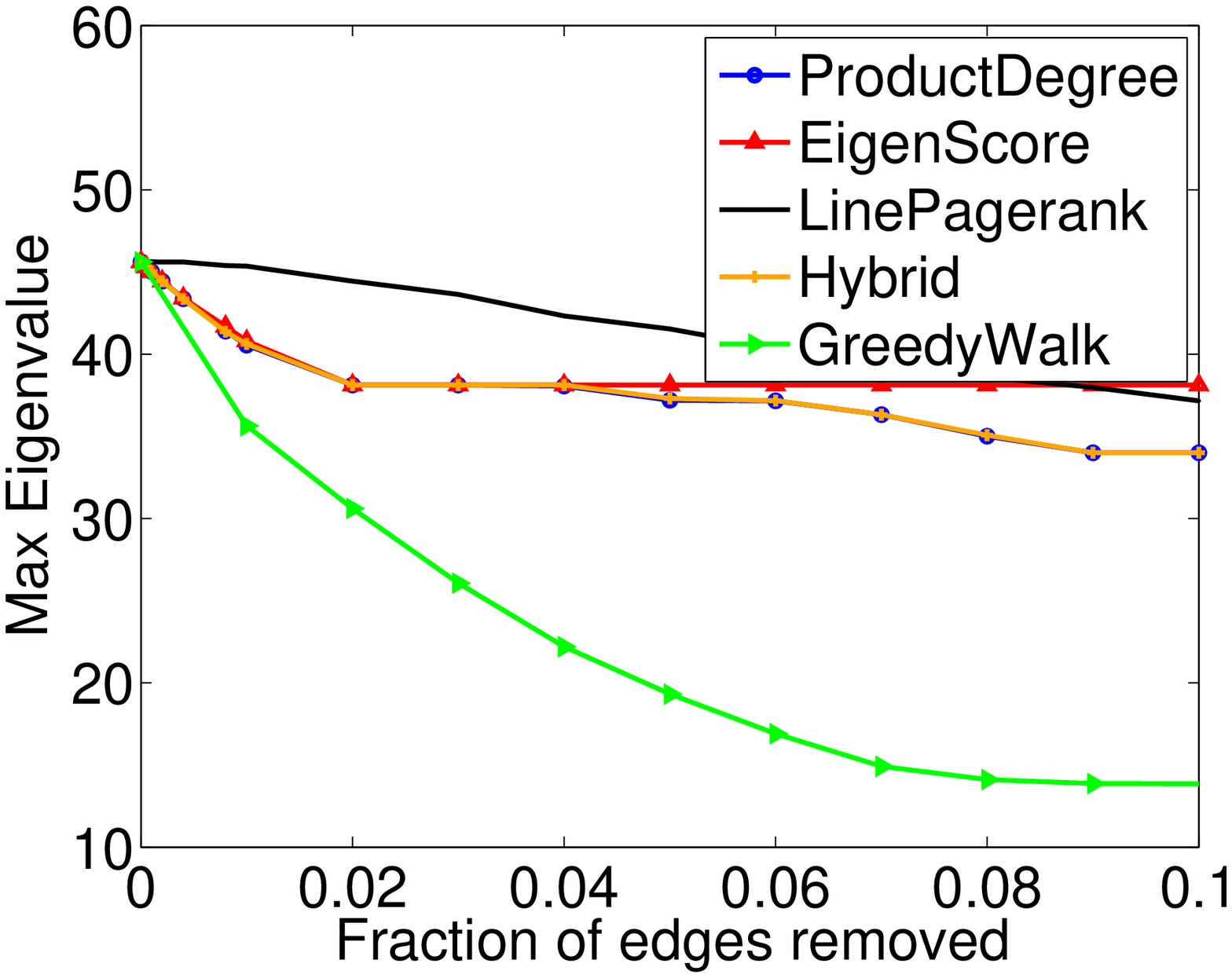}
}
\\ \subfloat[P2P Gnutella-5]{ \label{fig:9}
 \includegraphics[scale=0.23]{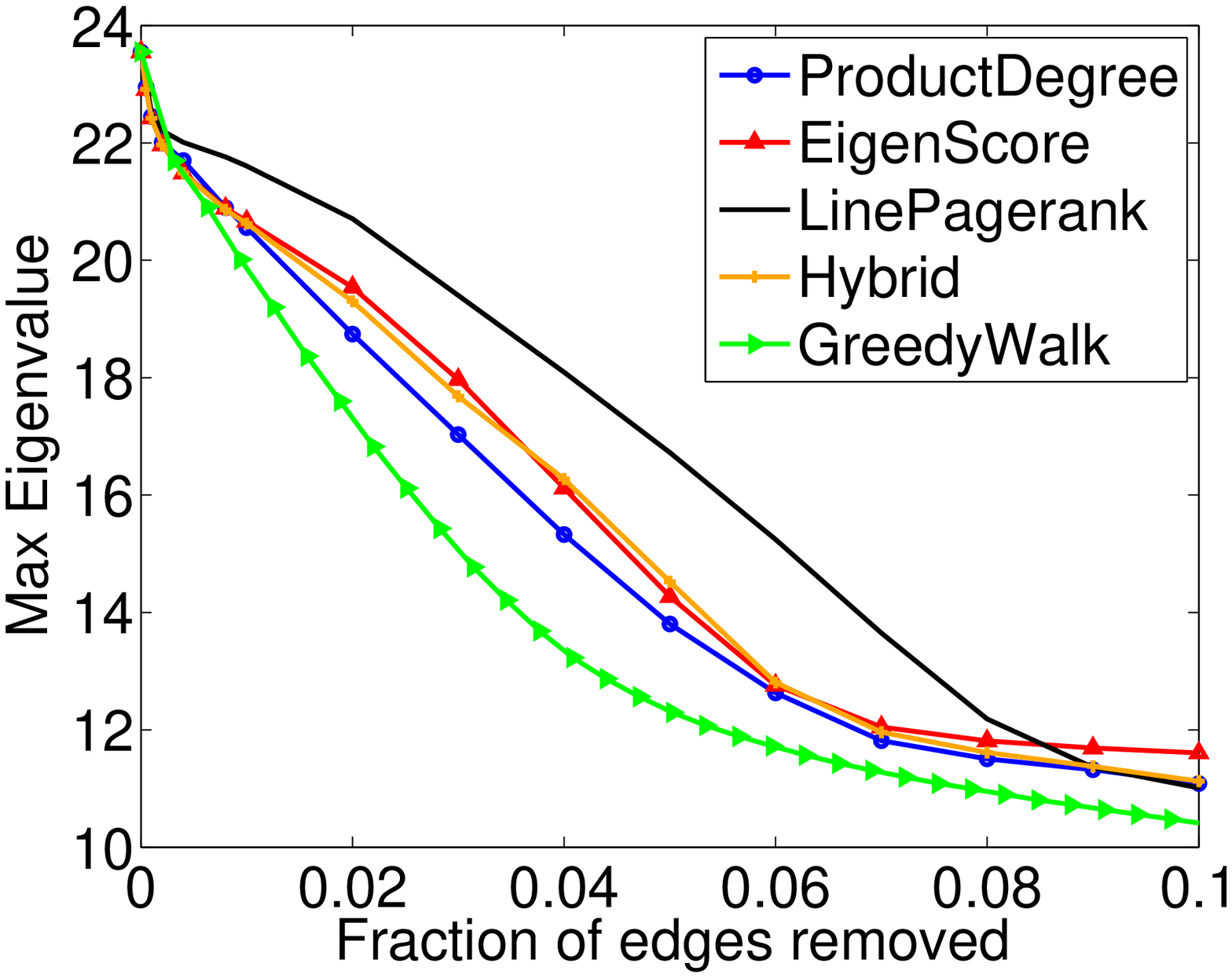}
} & \subfloat[P2P Gnutella-6]{ \label{fig:10}
 \includegraphics[width=0.23\textwidth]{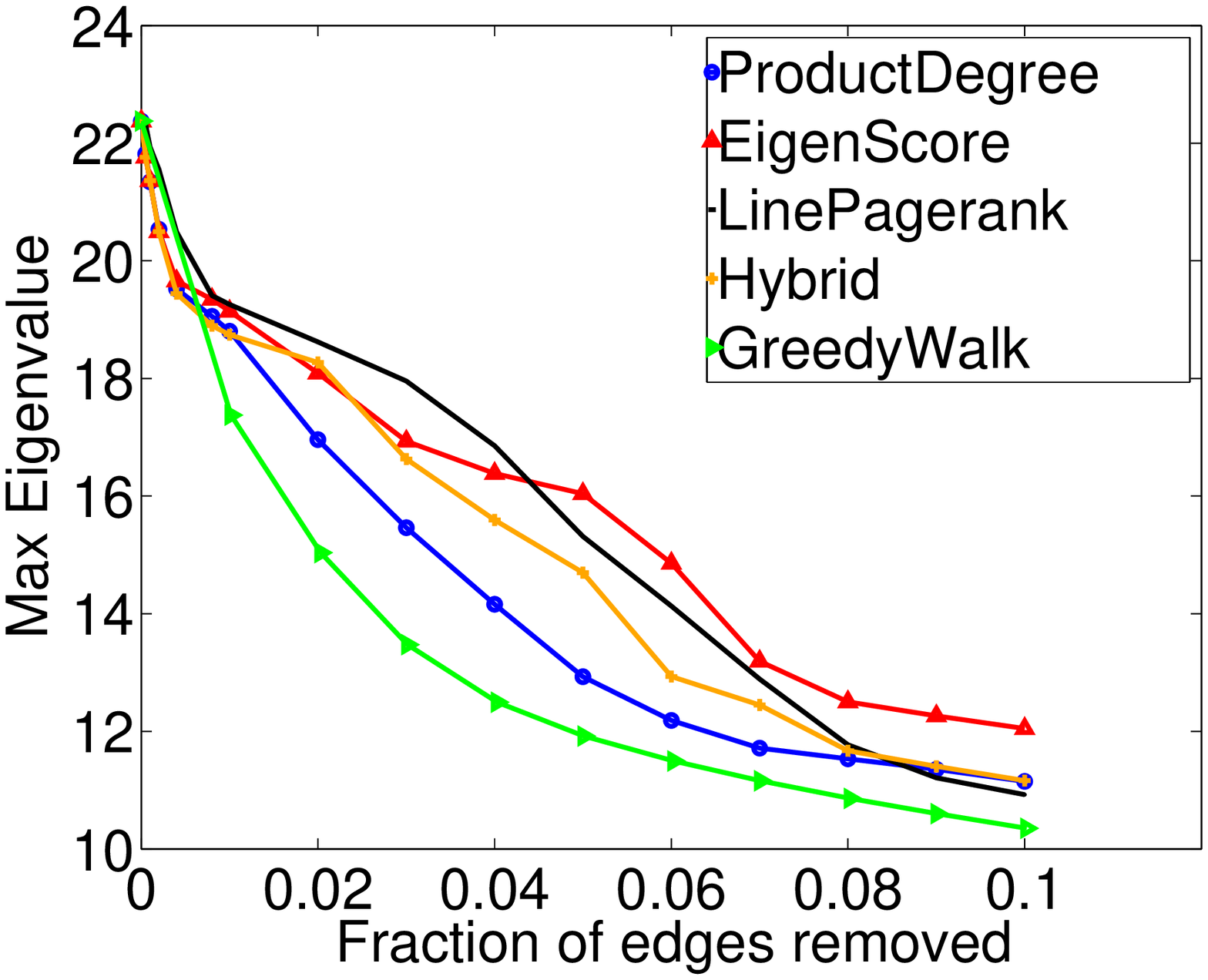}
} & \subfloat[Brightkite]{ \label{fig:6}
 \includegraphics[width=0.23\textwidth]{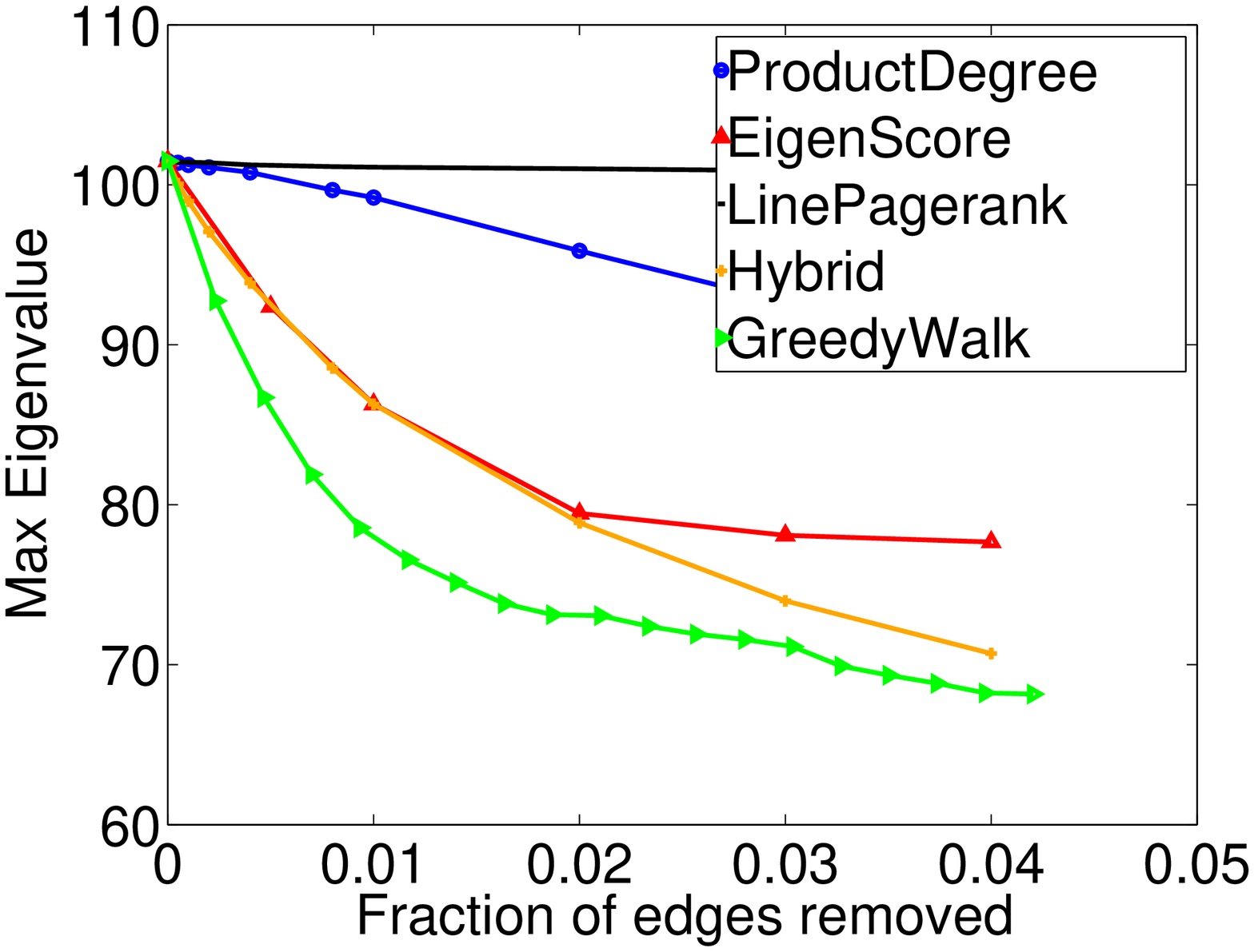}
}
\\ \subfloat[Portland]{ \label{fig:pt}
 \includegraphics[width=0.23\textwidth]{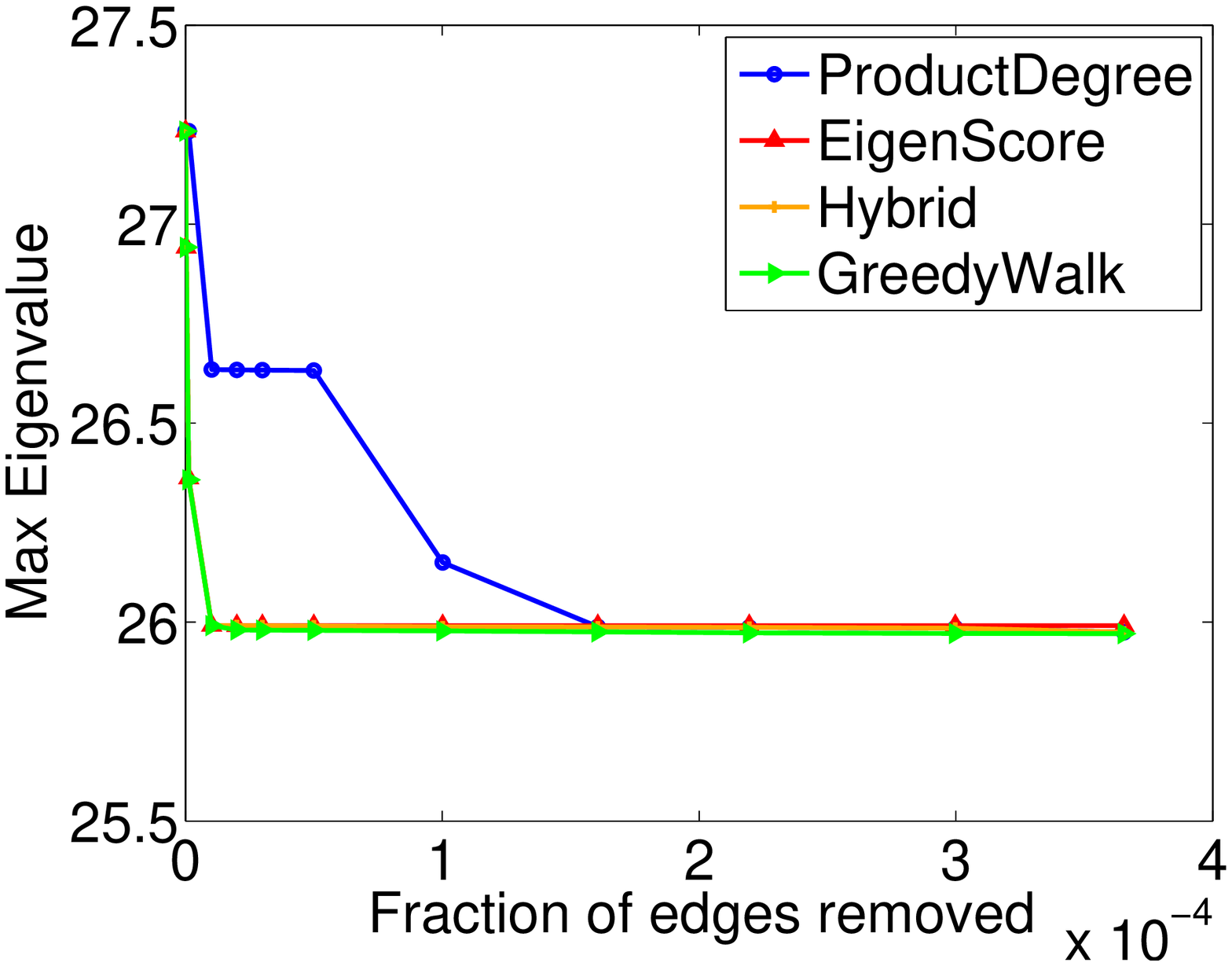}
} & \subfloat[Youtube]{ \label{fig:yt}
 \includegraphics[width=0.23\textwidth]{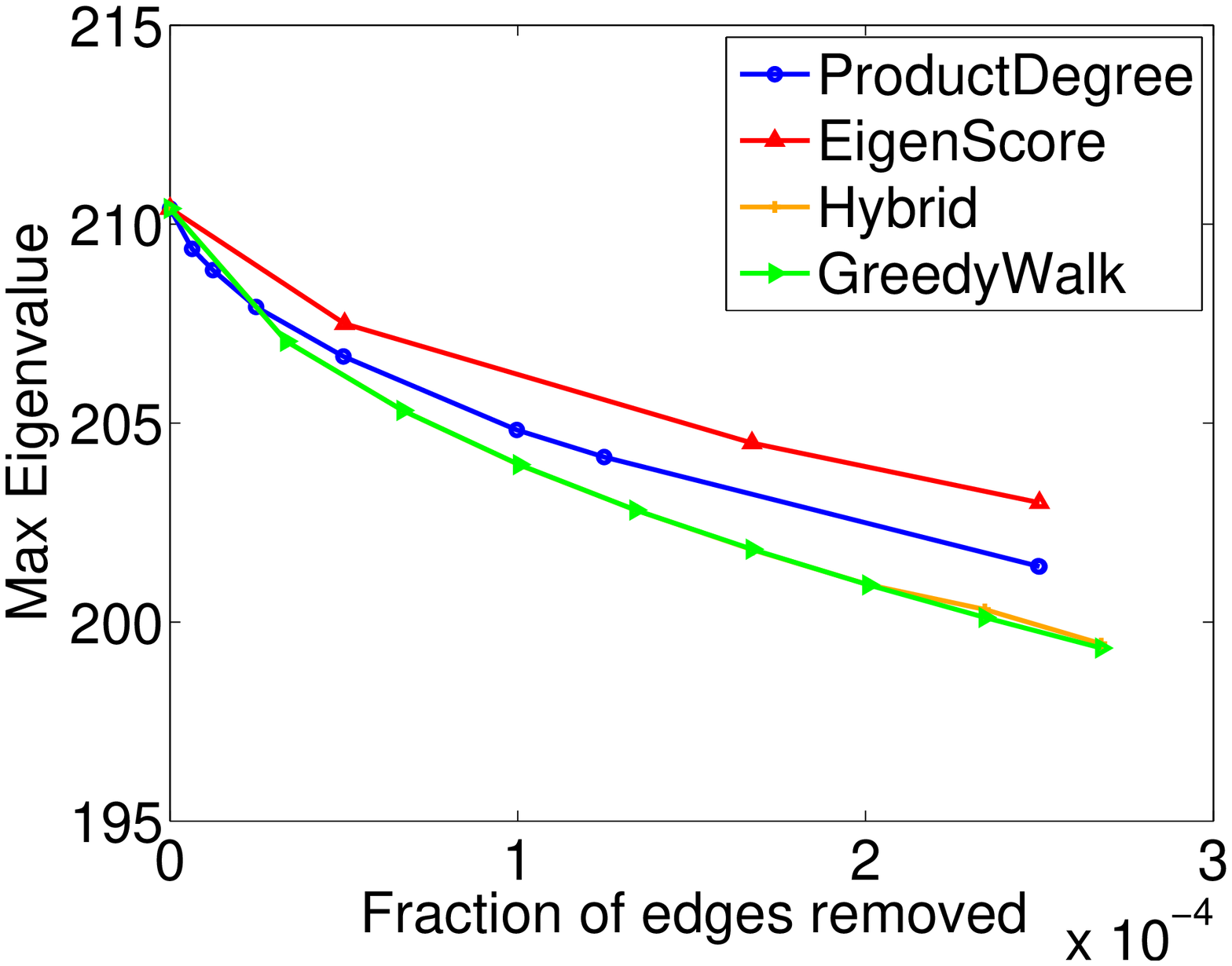}
} & \subfloat[Stanford Web]{ \label{fig:stn}
 \includegraphics[width=0.23\textwidth]{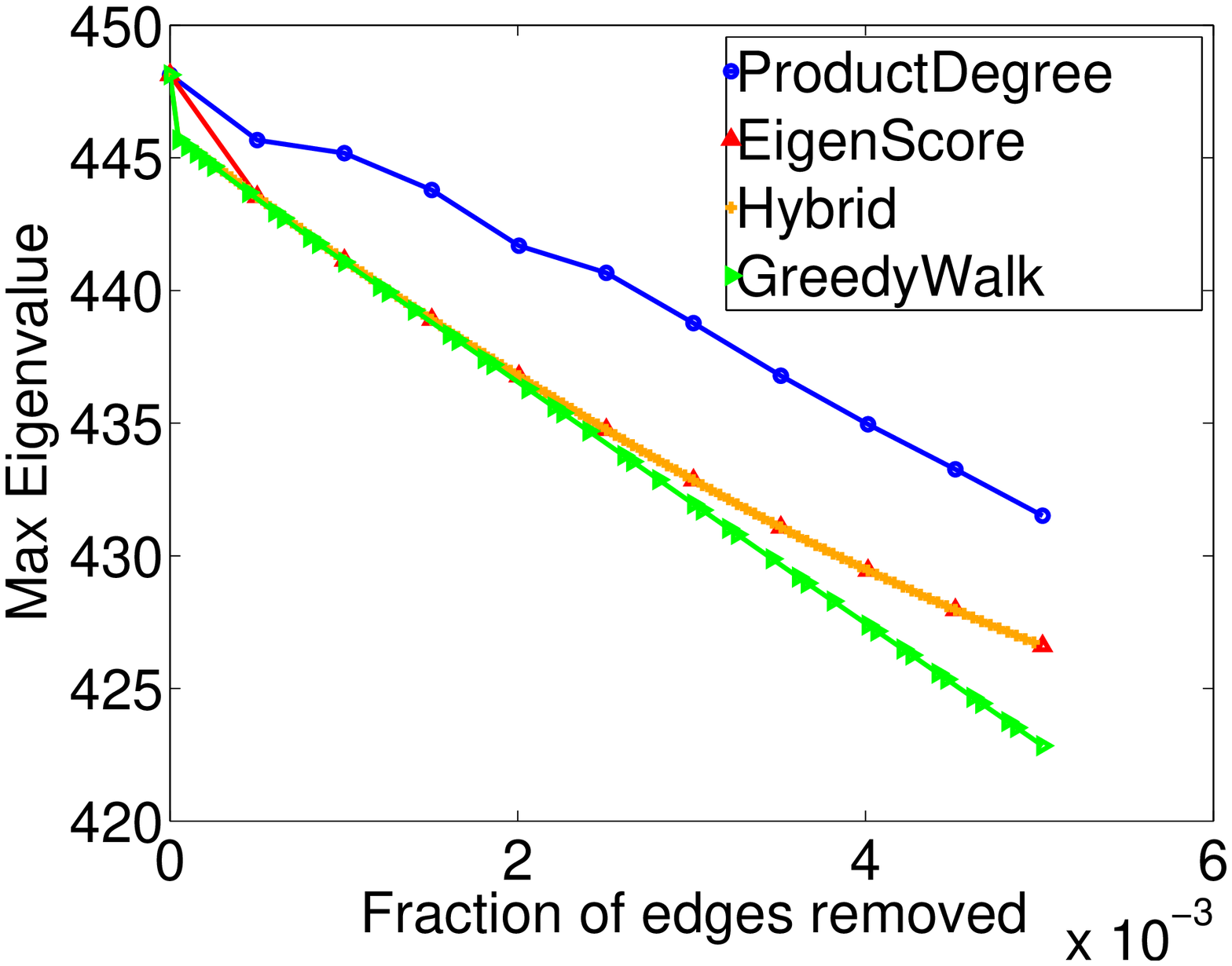}
}
\end{tabular}
\caption{ \label{fig:simple_methods}
Comparison between the \textsc{GreedyWalk}, \textsc{ProductDegree}, \textsc{Eigenscore}, \textsc{LinePagerank}
and \textsc{Hybrid} algorithms for different networks.
Each plot shows the spectral radius (y-axis)
as a function of the fraction of edges removed (x-axis).
The \textsc{LinePagerank} heuristic has not been evaluated in
\ref{fig:pt}, \ref{fig:yt} and \ref{fig:stn} because of the scale of these networks.}
\end{figure*}
}
{
\begin{figure*}[ht]
\centering
\begin{tabular}{ccc}
\subfloat[AS Oregon-2]{\label{fig:as2}
 \includegraphics[width=0.23\textwidth]{as2_new.eps}
} & \subfloat[P2P Gnutella-6]{ \label{fig:10}
 \includegraphics[width=0.23\textwidth]{10_new.eps}
} & \subfloat[Brightkite]{ \label{fig:6}
 \includegraphics[width=0.23\textwidth]{6_new.eps}
}

\\ \subfloat[Collaboration GrQc]{ \label{fig:colgen}
 \includegraphics[width=0.23\textwidth]{colgen_new.eps}
} & \subfloat[Youtube]{ \label{fig:yt}
 \includegraphics[width=0.23\textwidth]{yt_new.eps}
} & \subfloat[Stanford Web]{ \label{fig:stn}
 \includegraphics[width=0.23\textwidth]{stn_new.eps}
}
\end{tabular}
\caption{ \label{fig:simple_methods}
Comparison between the \textsc{GreedyWalk}, \textsc{ProductDegree}, \textsc{Eigenscore}, \textsc{LinePagerank}
and \textsc{Hybrid} algorithms for different networks.
Each plot shows the spectral radius (y-axis)
as a function of the fraction of edges removed (x-axis).
The \textsc{LinePagerank} heuristic has not been evaluated in
\iffalse \ref{fig:pt},\fi \ref{fig:yt} and \ref{fig:stn} because of the scale of these networks.}
\end{figure*}

}

\begin{figure}[ht]
\centering
\begin{tabular}{cc}
\subfloat[Collaboration GrQc]{ \label{fig:colgen_gw_vs_pd}
 \includegraphics[width=0.42\columnwidth]{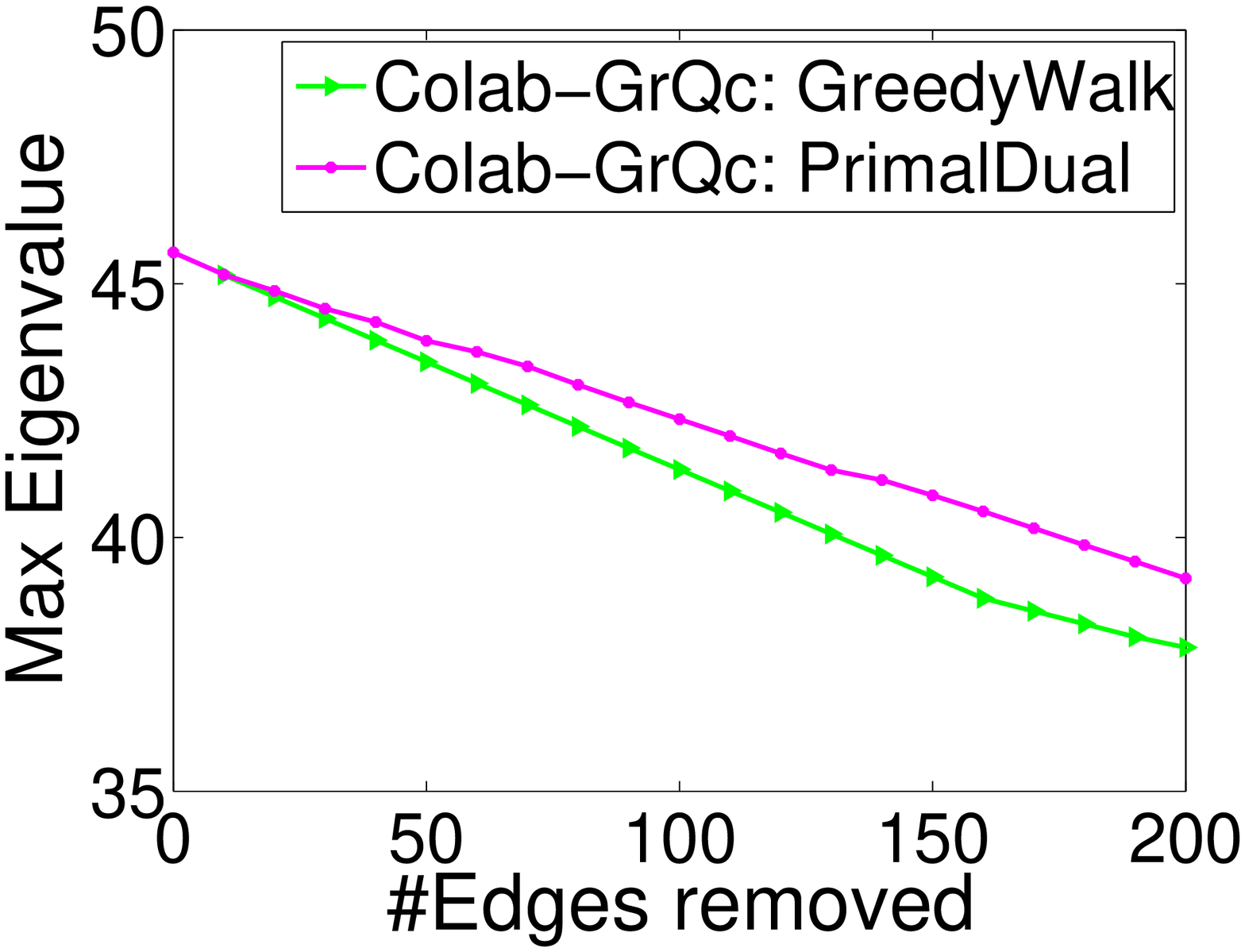}
} &  \subfloat[P2P Gnutella-5]{ \label{fig:5_gw_vs_pd}
 \includegraphics[width=0.42\columnwidth]{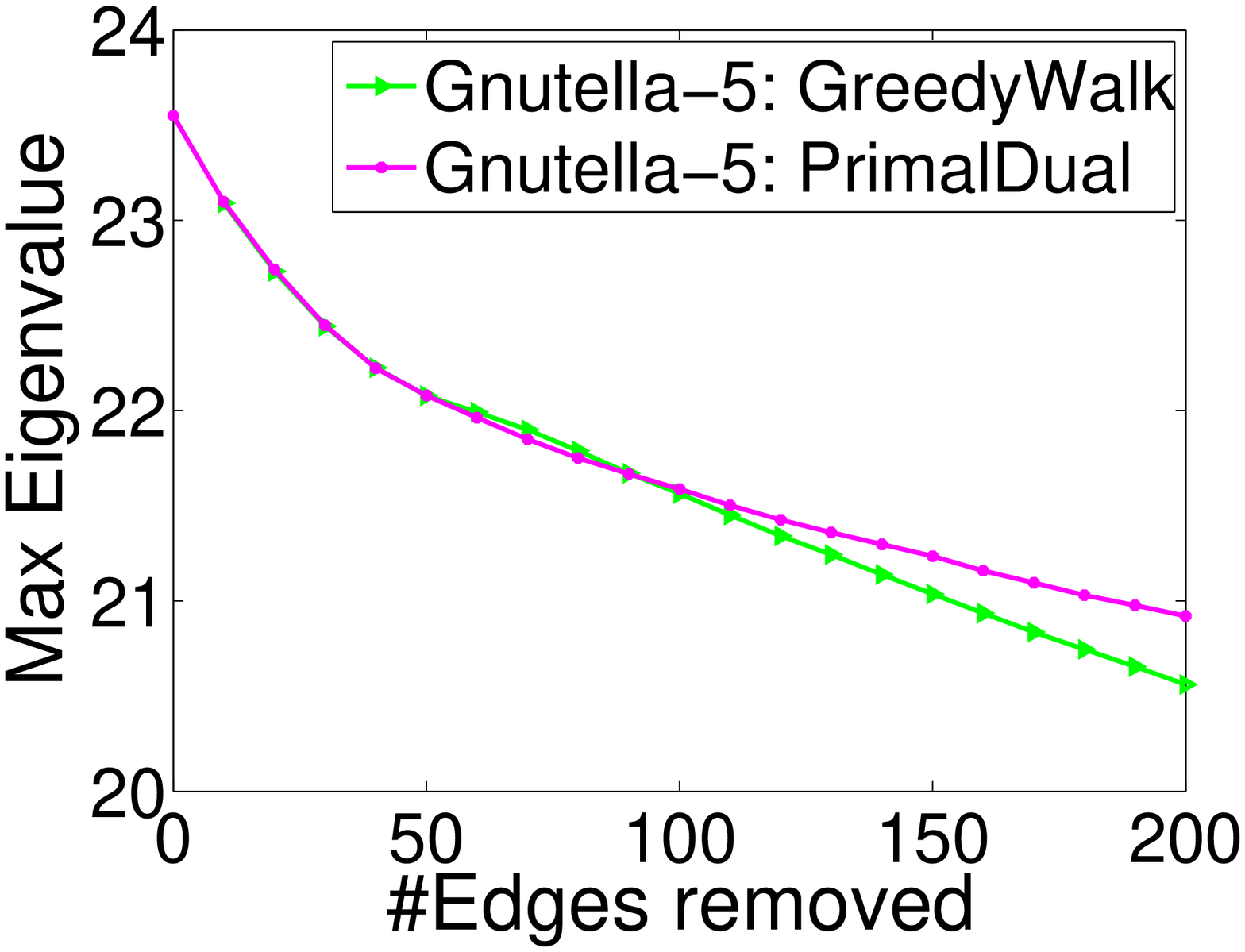}
}

\\
\end{tabular}
\caption{ \label{fig:gw_vs_pd} \textsc{GreedyWalk} vs \textsc{PrimalDual}.
Each plot shows the spectral radius (y-axis)
as a function of the number of edges removed (x-axis) using the two methods.
}
\end{figure}

\begin{figure}[ht]
\centering
 \includegraphics[width=0.4\columnwidth]{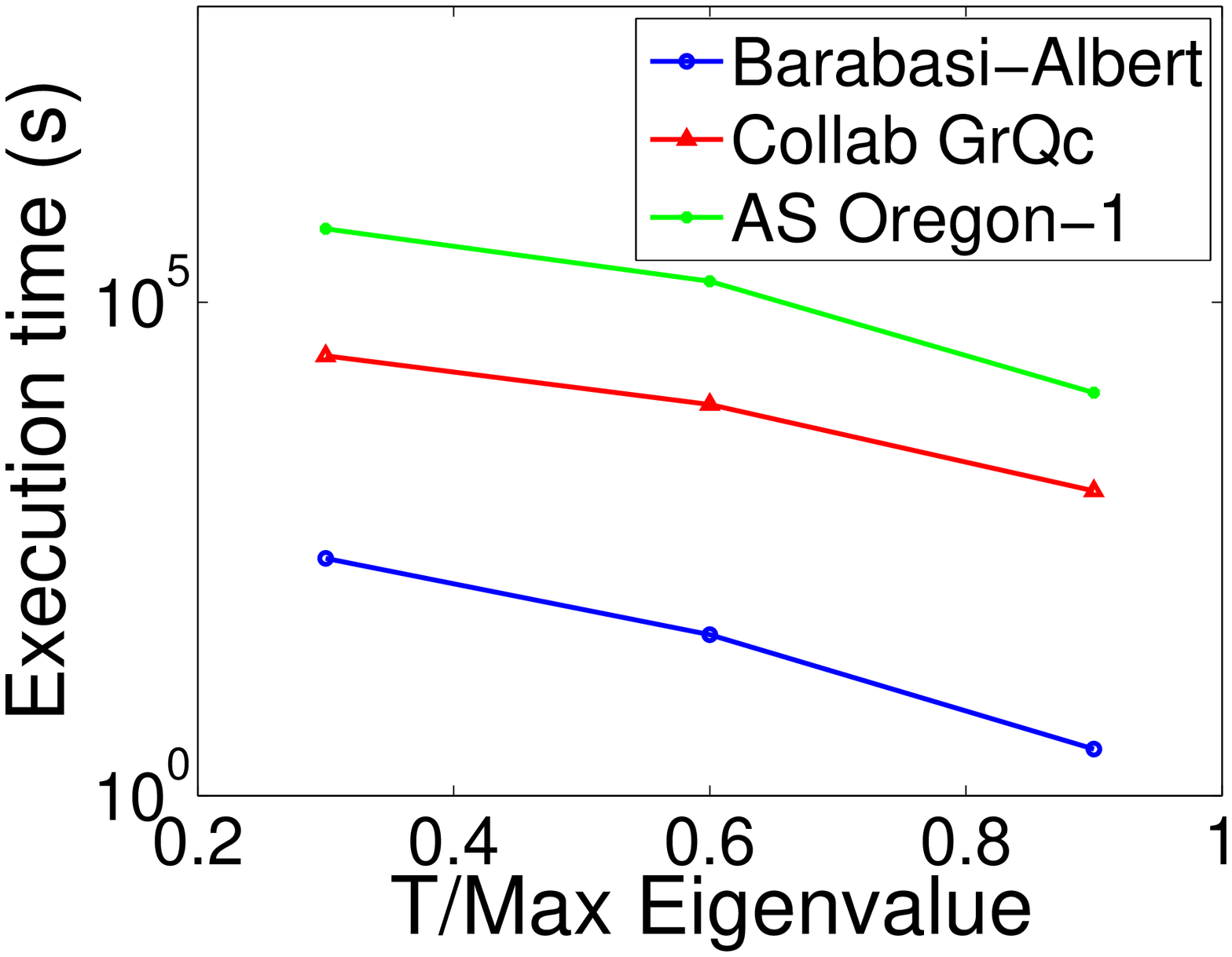}
\caption{ \label{fig:greedywalk_runtime} Total running time of GreedyWalk method (y-axis)
as a function of $T/\rho(G)$ (x-axis), where $T$ is the threshold and $\rho(G)$ is the
spectral radius of the initial graph, without any edges removed.
}
\end{figure}

\par
\noindent
\textbf{Running time and effect of sparsification:}.
Figure~\ref{fig:greedywalk_runtime} shows the total running time of \textsc{GreedyWalk}
for three networks. The time decreases with the increase of
$T$, because the while loop in Algorithm \textsc{GreedyWalk} needs to be run
for fewer iterations.
The high running time motivates faster methods. We evaluate
the performance of the \textsc{GreedyWalkSparse} algorithm. As shown in
Figure \ref{fig:impact_of_sparsification},
\textsc{GreedyWalkSparse} gives almost the same quality of approximation as
\textsc{GreedyWalk}, but improves the running time by up to an order of magnitude,
particularly when $T$ is small.

\begin{figure}[ht]
\centering
\begin{tabular}{cc}
\subfloat[\#Edges removed]{ \label{fig:5_edges_sparsification}
 \includegraphics[width=0.42\columnwidth]{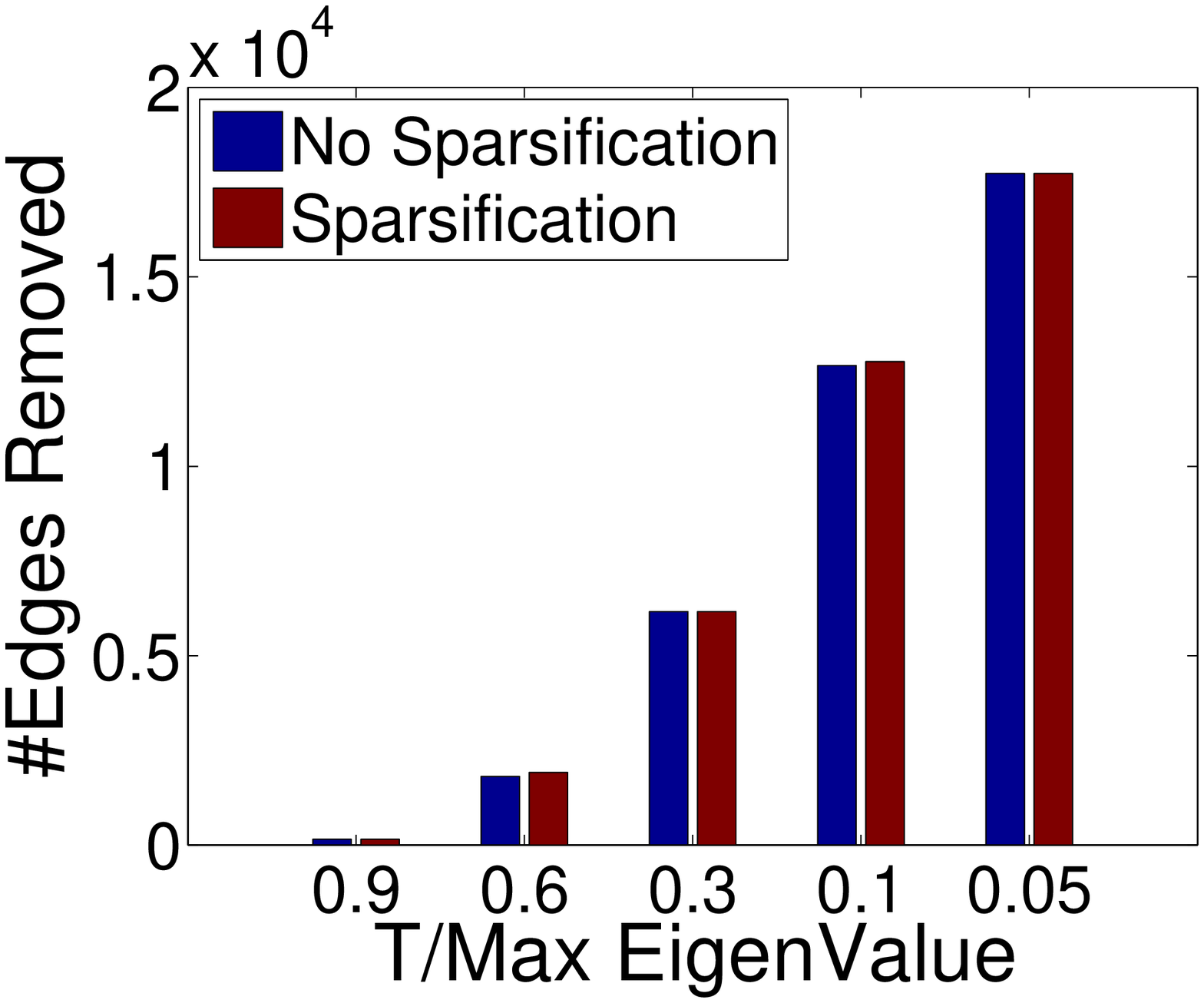}
} &  \subfloat[Execution time]{ \label{fig:5_times_sparsification}
 \includegraphics[width=0.42\columnwidth]{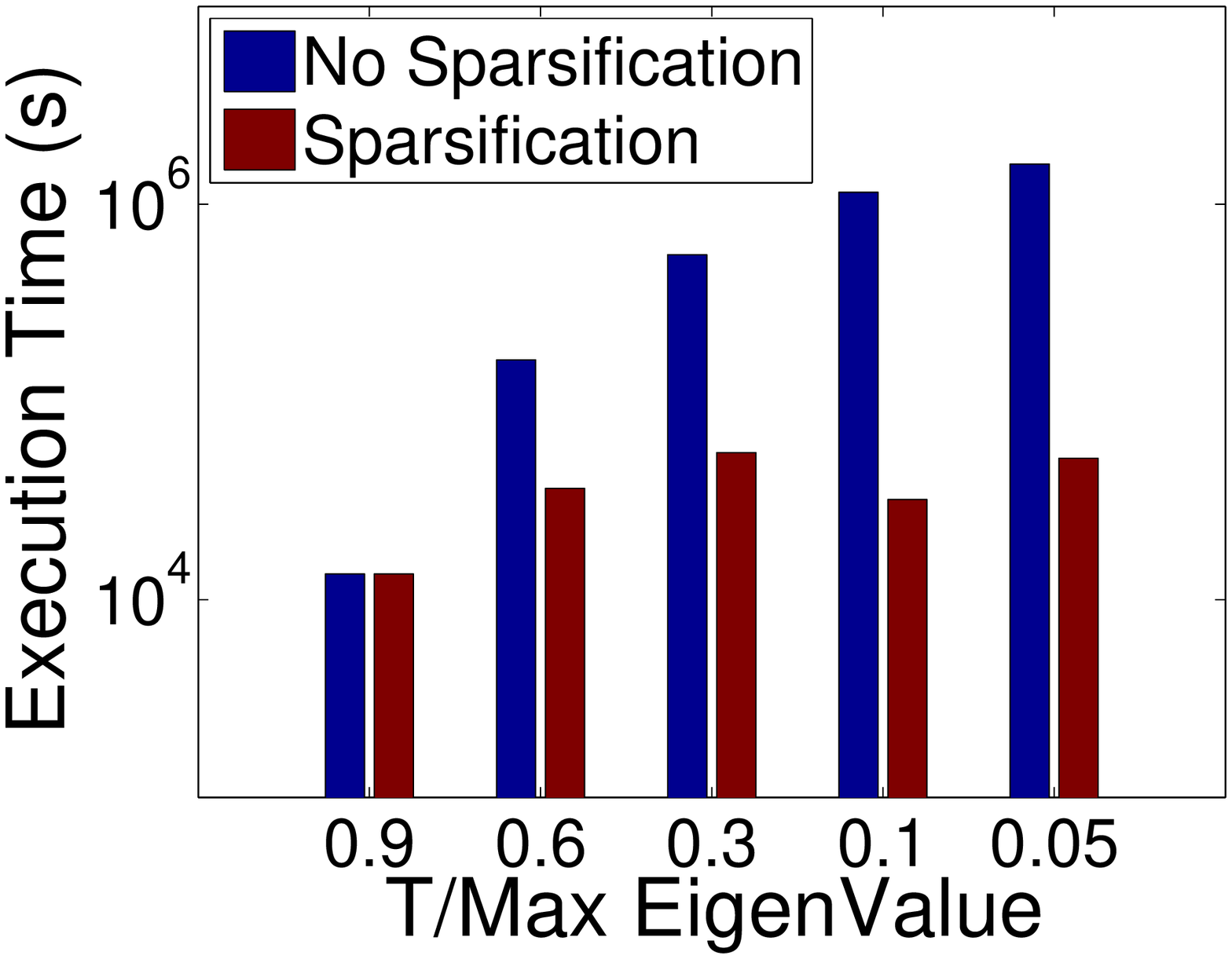}
}\\
\end{tabular}
\caption{ \label{fig:impact_of_sparsification} Impact of sparsification on \textsc{GreedyWalk}.
The plots show for AS Oregon-1 network, (a) the number of edges removed and (b) the execution time on the y-axis,
as a function of $T/\rho(G)$ (x-axis), where $T$ is the threshold and $\rho(G)$ is the
spectral radius of the initial graph, without any edges removed.
}
\end{figure}

\iftoggle{fullversion}
{
\par
\noindent
\textbf{Effect of varying walk lengths:}
As discussed in Section \ref{sec:greedywalk}, the walk length parameter $k$ is critical
for the performance of \textsc{GreedyWalk}.  Figure \ref{fig:impact_k}
shows the approximation quality in the Oregon-2 and collaboration networks. We find that
as $k$ becomes smaller, the approximation quality degrades significantly, and the
best performance occurs at $k$ close to $2\log{n}$.

\begin{figure}[ht]
\centering
\begin{tabular}{cc}
\subfloat[AS Oregon-2]{\label{fig:as2_impact_k}
 \includegraphics[width=0.42\columnwidth]{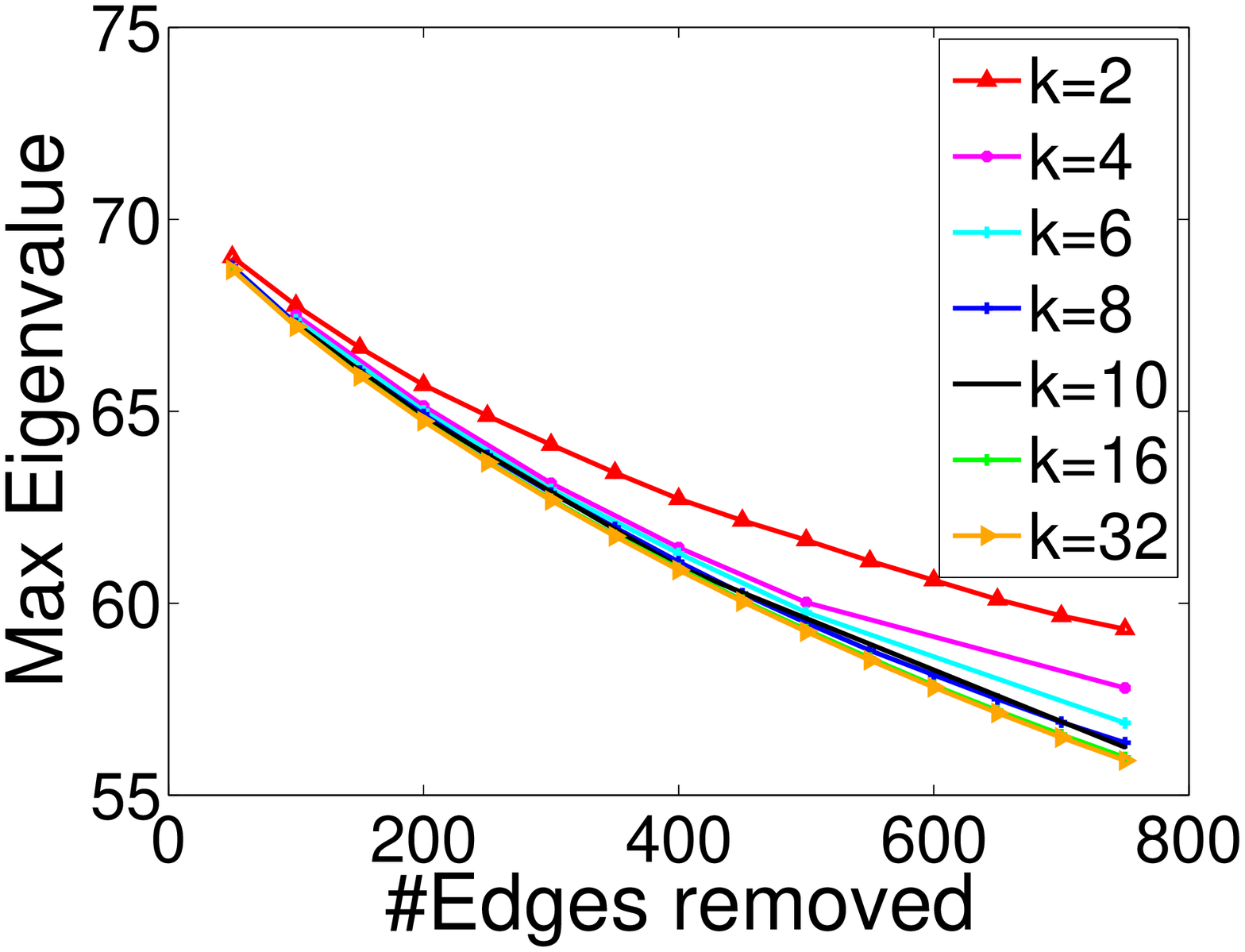}
} &  \subfloat[Collaboration GrQc]{ \label{fig:colgen_impact_k}
 \includegraphics[width=0.42\columnwidth]{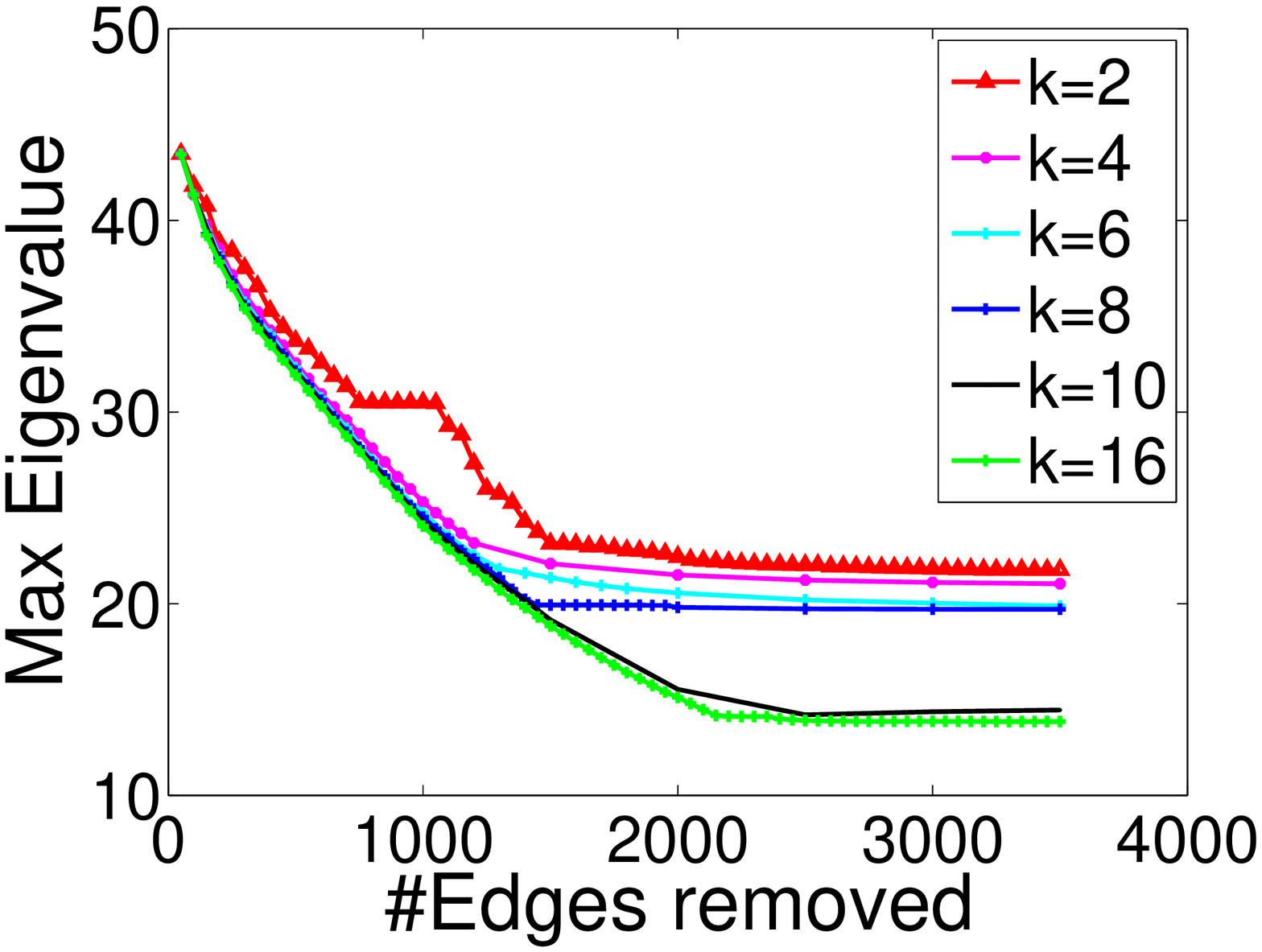}
}
\\
\end{tabular}
\caption{ \label{fig:impact_k} Impact of walk length on \textsc{GreedyWalk} performance.
Each plot shows the drop in spectral radius (y-axis) with number of edges removed
(x-axis), for different values of $k$, ranging from $2$ to $2\log{n}$, for the corresponding
networks.
}
\end{figure}
}
{
}

\iftoggle{fullversion}
{
\par
\noindent
\textbf{Extensions:}
For the \textsc{SRME-nonuniform} problem, we compare the adaptation of \textsc{GreedyWalk},
as discussed in Section \ref{sec:extensions}, with the \textsc{Eigenscore} heuristic
run on the matrix $B$ of transmission rates.
As shown in Figure \ref{fig:colgen_nonuni}, we
find that \textsc{GreedyWalk} performs much better. Next we consider the \textsc{SRMN} problem,
and compare the \textsc{GreedyWalk}, as adapted in Section \ref{sec:extensions}, with
the node versions of the \textsc{Degree} and \textsc{EigenScore} heuristic \cite{tong:cikm12}.
As shown in Figure \ref{fig:colgen_node}, \textsc{GreedyWalk} performs consistently better. For results in other networks, see the full version \cite{spectral-approx-extended}.
}
{
\par
\noindent
\textbf{SRMN:}
For the \textsc{SRMN} problem,
we compare the adaption of \textsc{GreedyWalk}, as discussed in Section \ref{sec:extensions}, with
the node versions of the \textsc{Degree} and \textsc{EigenScore} heuristic \cite{tong:cikm12}.
As shown in Figure \ref{fig:srmn_experiment}, \textsc{GreedyWalk} performs consistently better.
}

\iftoggle{fullversion}
{
\begin{figure}[ht]
\centering
\begin{tabular}{cc}
 \subfloat[ \textsc{SRME-nonuniform}: Barabasi-Albert]{ \label{fig:br_nonuni}
 \includegraphics[width=0.42\columnwidth]{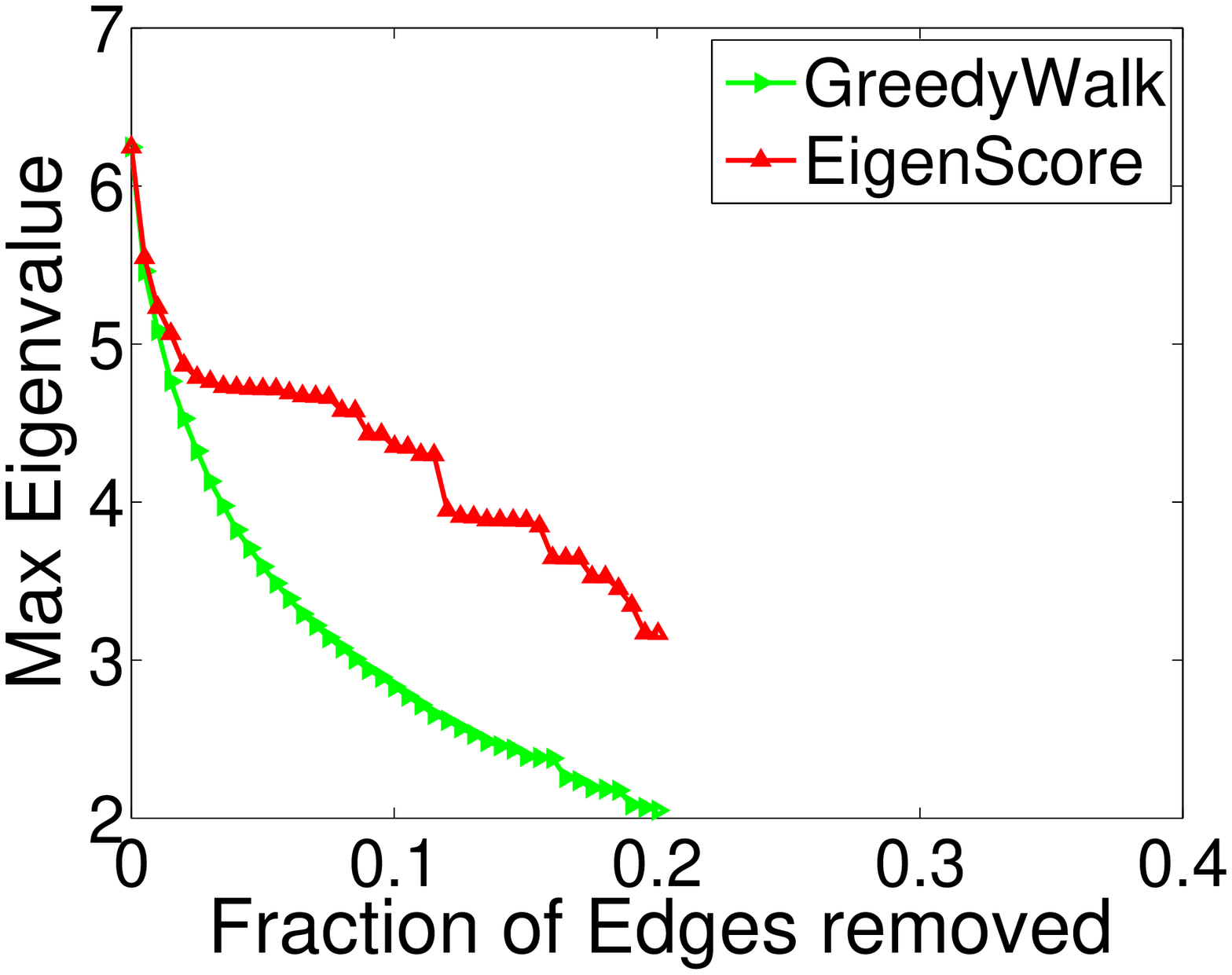}
} & \subfloat[\textsc{SRME-nonuniform: Collaboration GrQc}]{ \label{fig:colgen_nonuni}
 \includegraphics[width=0.42\columnwidth]{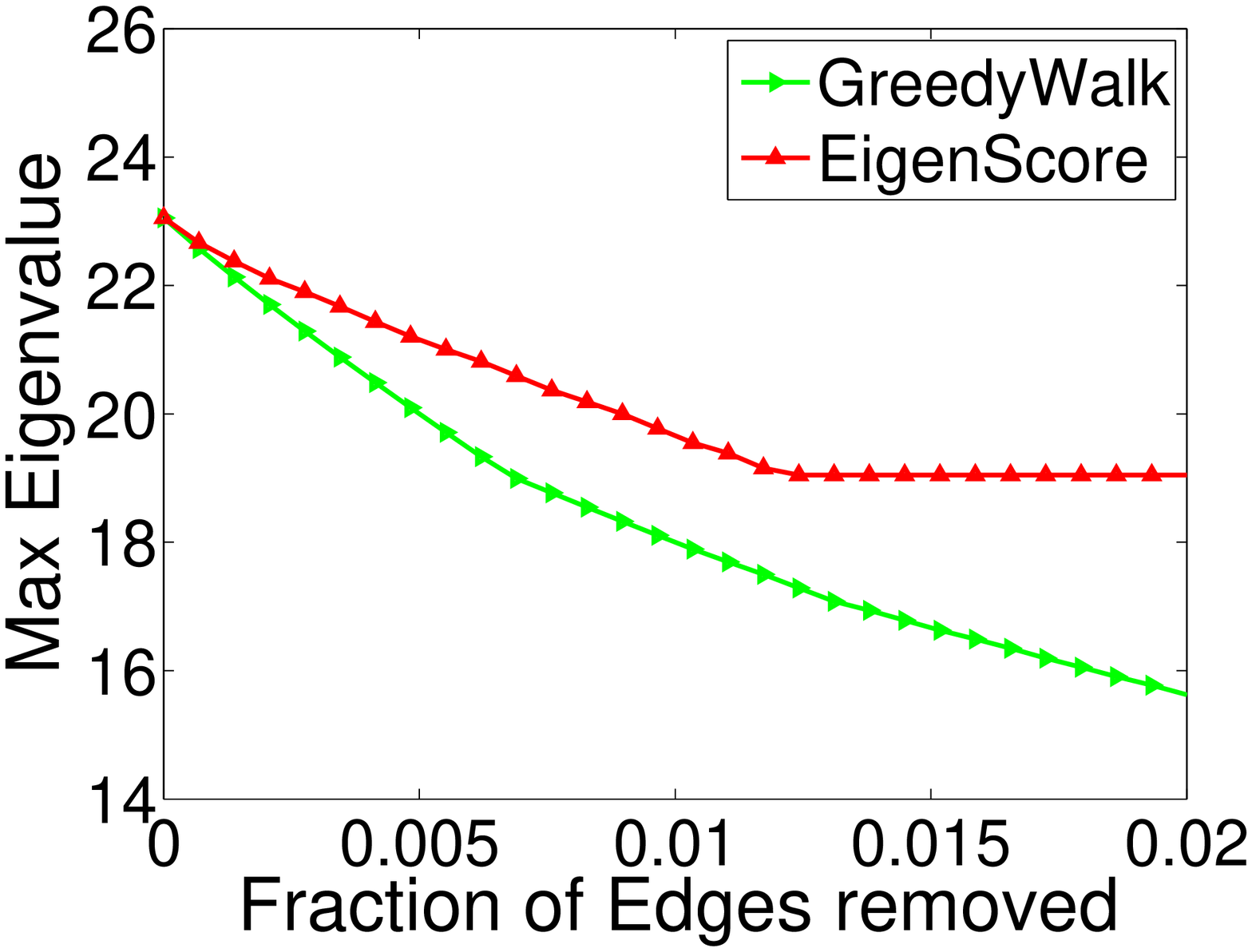}
}
 \\ \subfloat[\textsc{SRMN}: Oregon-1]{ \label{fig:5_node}
 \includegraphics[width=0.42\columnwidth]{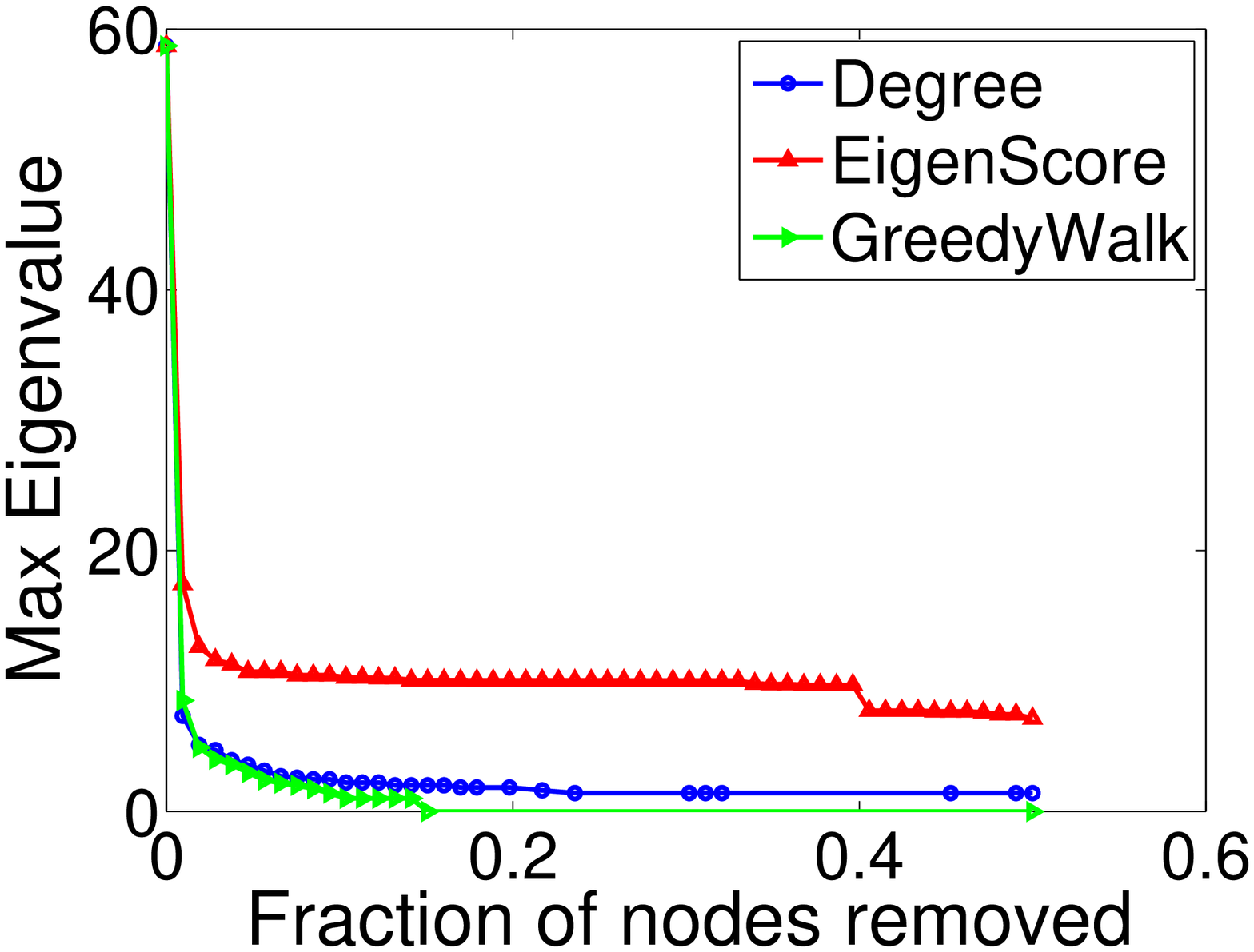}
} & \subfloat[\textsc{SRMN}: Collaboration GrQc]{ \label{fig:colgen_node}
 \includegraphics[width=0.42\columnwidth]{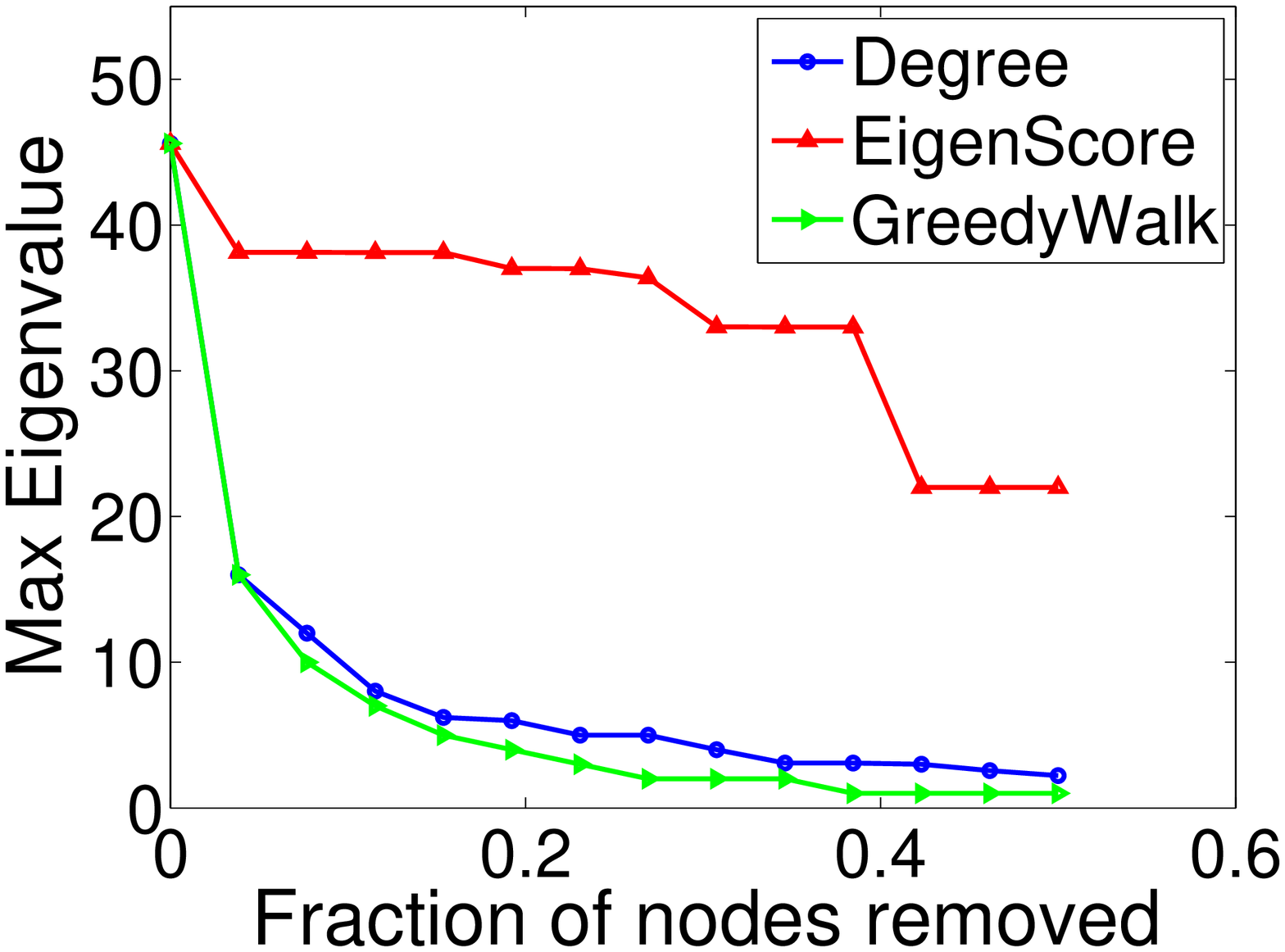}
} \\
\end{tabular}
\caption{ \label{fig:nonuni_srmn} Computing solutions for \textsc{SRME-nonuniform} (\ref{fig:br_nonuni},\ref{fig:colgen_nonuni}) and \textsc{SRMN} (\ref{fig:5_node},\ref{fig:colgen_node})
problem on different networks with \textsc{GreedyWalk} algorithm and \textsc{Degree} and \textsc{EigenScore} heuristics as adapted in Section \ref{sec:extensions}.
The plots show the resultant spectral radius (y-axis) as fractions of edges/nodes are removed (x-axis) with different methods.
}
\end{figure}
}
{


\begin{figure}[ht]
\centering
\begin{tabular}{cc}
\subfloat[Oregon-1]{ \label{fig:5_node}
 \includegraphics[width=0.42\columnwidth]{5_node.eps}
} & \subfloat[Collaboration GrQc]{ \label{fig:colgen_node}
 \includegraphics[width=0.42\columnwidth]{colgen_node.eps}
}\end{tabular}
\caption{ \label{fig:srmn_experiment} Computing solutions for \textsc{SRMN} problem on AS oregon-1 and collaboration GrQc network with \textsc{GreedyWalk} algorithm and \textsc{Degree} and \textsc{EigenScore} heuristics, adapted as shown in Section \ref{sec:extensions}. The plots show the resultant spectral radius (y-axis) as fractions of nodes are removed (x-axis) with different methods.
}
\end{figure}

}

%

\iftoggle{fullversion}
{
$\newline$
\noindent
\textbf{Demographic properties of removed nodes and edges:}
\textsc{GreedyWalk} can also help in getting non-network surrogates for picking nodes/edges. We analyzed the demographic properties of the nodes and edges removed by
\textsc{GreedyWalk} on the Portland contact network \cite{cinet:14}. By doing so, we can hope to use such demographic properties directly, for quicker implementation
and/or when the entire network is not readily available. Figure
\iftoggle{fullversion}{\ref{fig:agegroup_matrix}}{\ref{fig:cw_e_age_mat}} shows the age groups of the end points of the
top $1500$ selected edges by \textsc{GreedyWalk} as a matrix.
Age-groups are partitioned according to \cite{medlock2009}
\iftoggle{fullversion}
{
and shown in table \ref{table:agegroups}. 
}
{ and shown in \cite{spectral-approx-extended}.}
As the figure shows, the edges among age-group \#$11$ (ages $45-49$) and with age-groups \#$8$ (age $30-34$) and
\#$17$ (age $75+$) are picked to a greater extent by \textsc{GreedyWalk}.
We observe that the edges picked by \textsc{GreedyWalk} have substantially different properties compared to other heuristics \iftoggle{fullversion}{}{(see \cite{spectral-approx-extended})}.
\iftoggle{fullversion}{Figure \ref{fig:agegroup_bar}}{Figure \ref{fig:agegrp_bar}} shows the age groups of the nodes removed by the \textsc{GreedyWalk}
algorithm for the \textsc{SRMN} problem, along with the age group distribution of the entire population.
Observe that more people are selected in age-group numbers 7 to 11 which correspond to ages 25-49.
}
{
}

\iftoggle{fullversion}
{
\begin{table}[ht]
\centering
\caption{\small Age-groups \cite{medlock2009}}
\label{table:agegroups}
\small
\begin{tabular}{c|c||c|c}
\toprule
Age-group & Age & Age-group & Age\\
\midrule
1 & 0 & 10 & 40-44\\ \midrule
2 & 1-4 & 11 & 45-49\\ \midrule
3 & 5-9 & 12 & 50-54\\ \midrule
4 & 10-14 & 13 & 55-59\\ \midrule
5 & 15-19 & 14 & 60-64\\ \midrule
6 & 20-24 & 15 & 65-69\\ \midrule
7 & 25-29 & 16 & 70-74\\ \midrule
8 & 30-34 & 17 & 75+\\ \midrule
9 & 35-39 &  & \\ \bottomrule
\end{tabular}
\end{table}
}{}

\iftoggle{fullversion}
{
\begin{figure}[ht]
\centering
\begin{tabular}{cc}
 \subfloat[Removed Edges]{ \label{fig:cw_e_age_mat}
 \includegraphics[width=0.44\columnwidth]{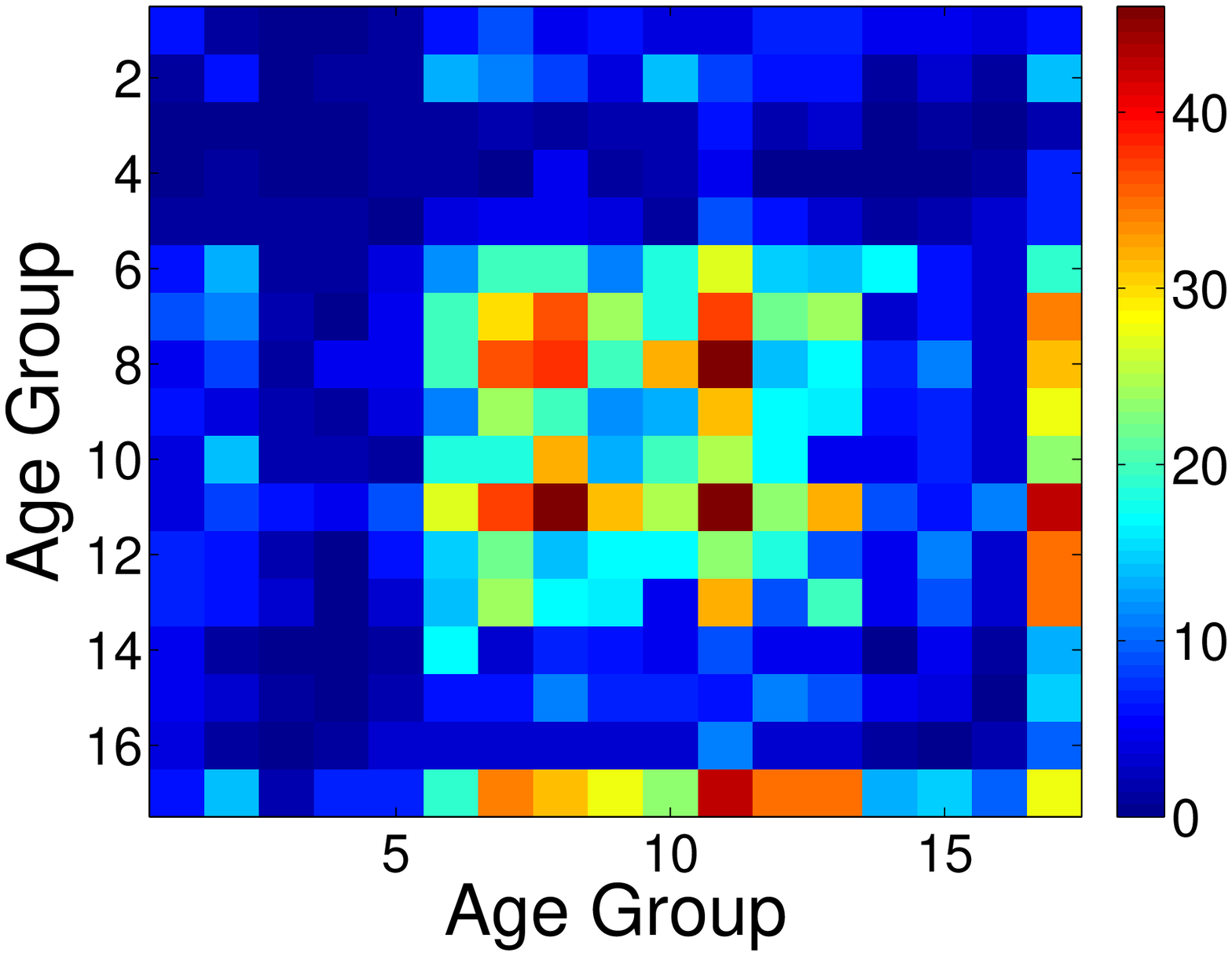}
} & \subfloat[ProductDegree]{ \label{fig:deg_e_age_mat}
 \includegraphics[width=0.44\columnwidth]{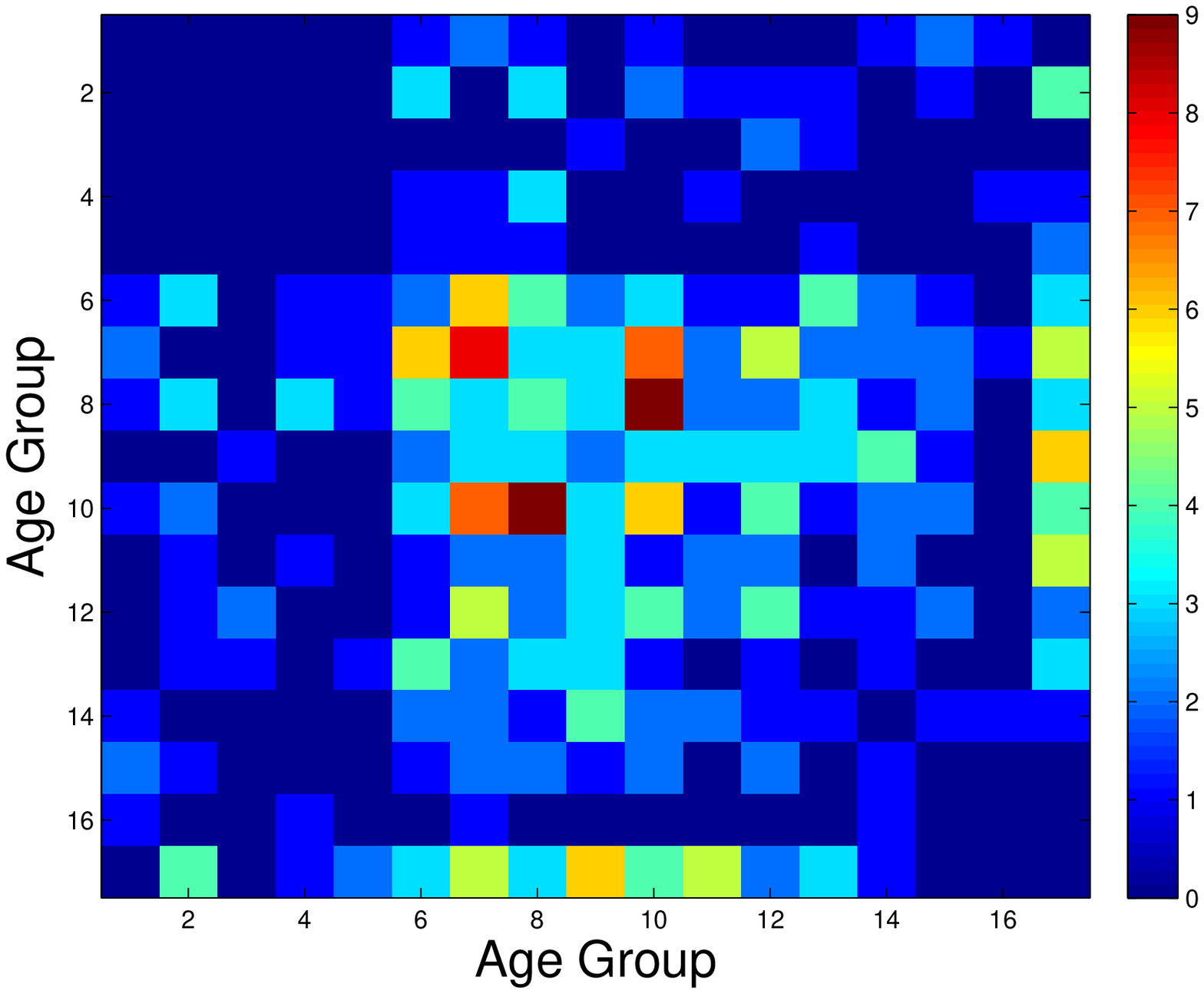}
} \\ \subfloat[ProductEigenscore]{ \label{fig:eig_e_age_mat}
 \includegraphics[width=0.44\columnwidth]{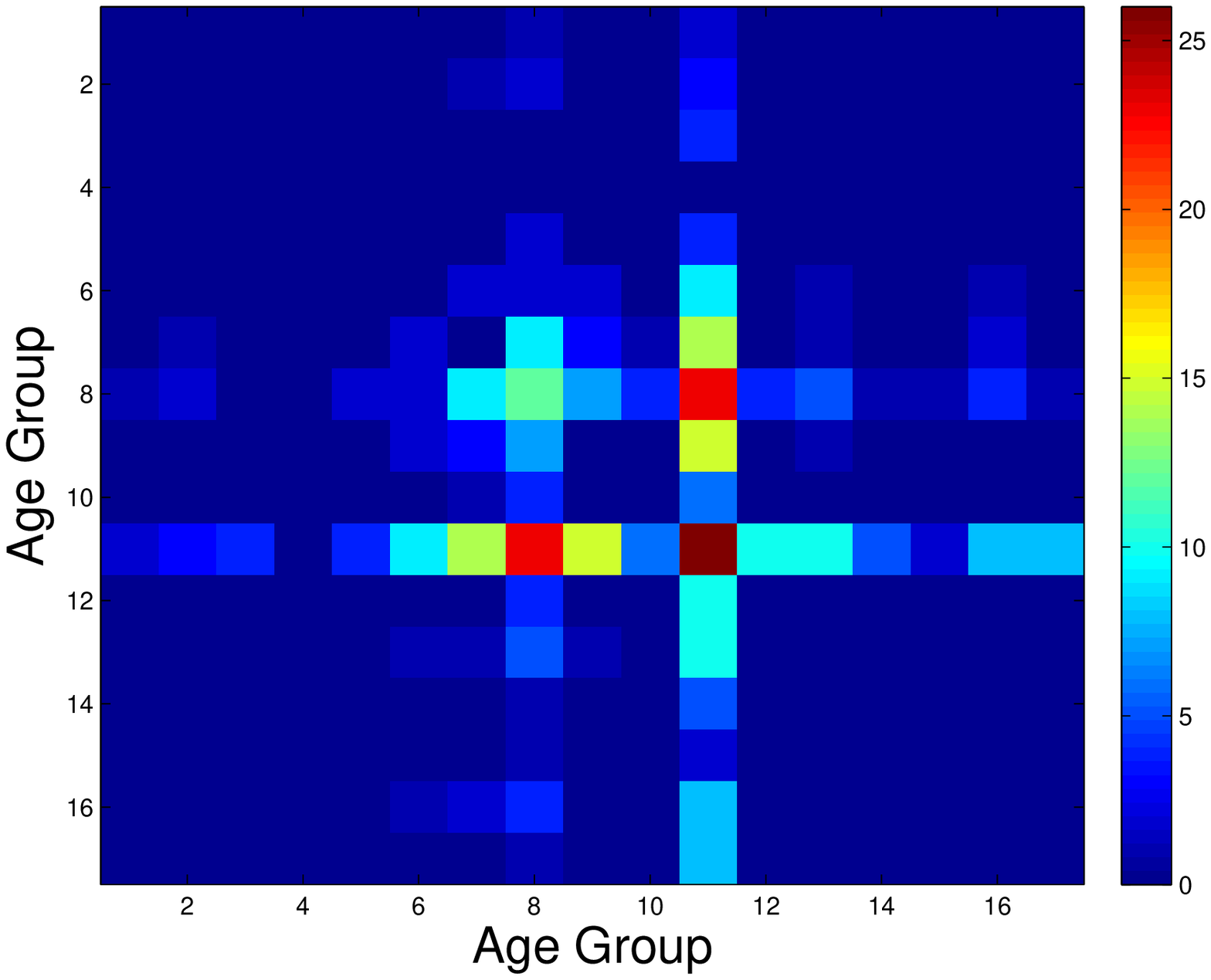}
} & \subfloat[Hybrid]{ \label{fig:hbd_e_age_mat}
 \includegraphics[width=0.44\columnwidth]{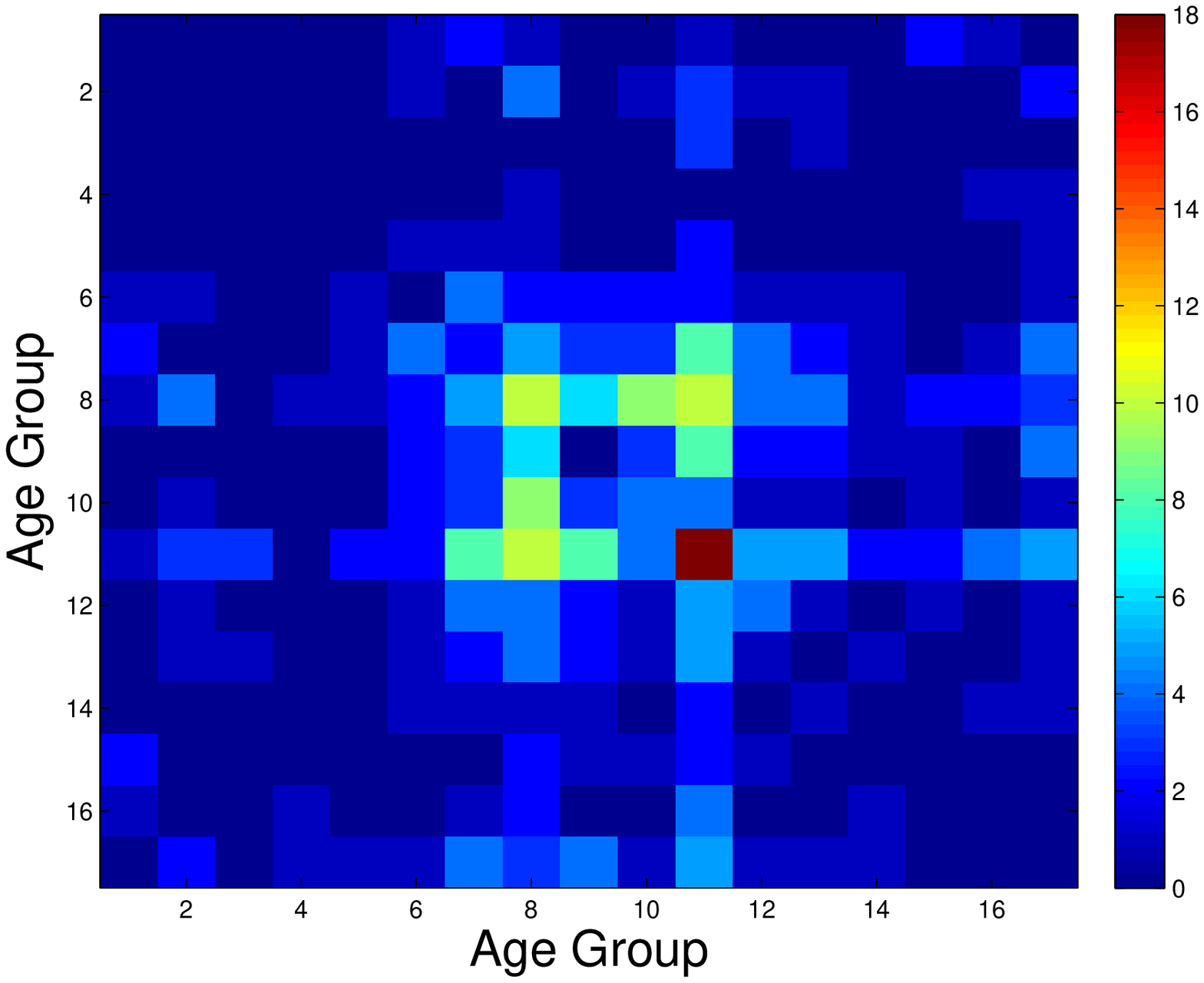}
}
\end{tabular}
\caption{ \label{fig:agegroup_matrix}
Age-Group matrix of the top 1500 removed edges}
\end{figure}

\begin{figure}[ht]
\centering
\begin{tabular}{c}
 \subfloat[Removed Nodes]{ \label{fig:agegrp_bar}
 \includegraphics[width=0.44\columnwidth]{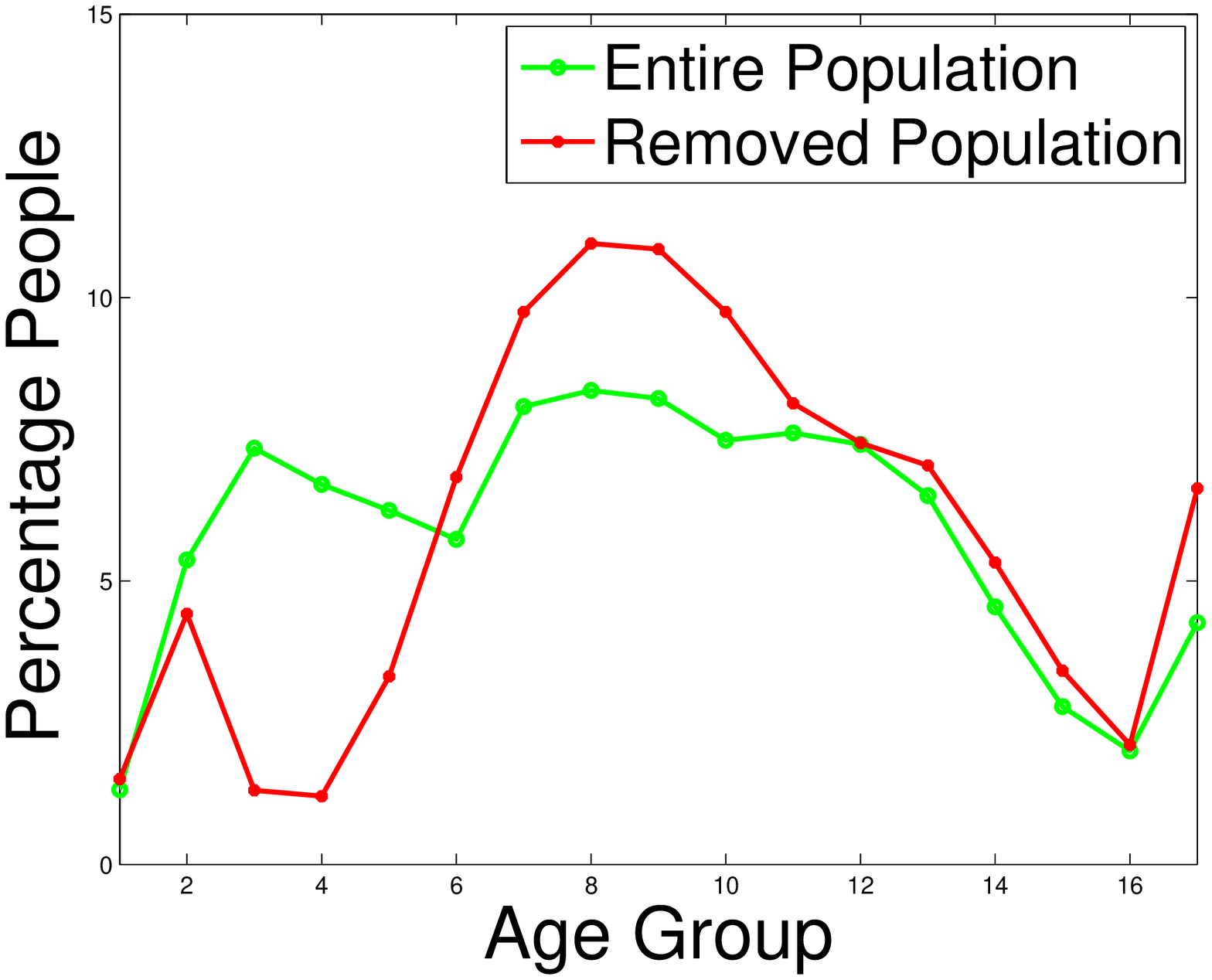}
}
\end{tabular}
\caption{ \label{fig:agegroup_bar} Age-group of 1500 removed nodes  with \textsc{GreedyWalk} from Portland contact graph.}
\end{figure}

\begin{figure}[ht]
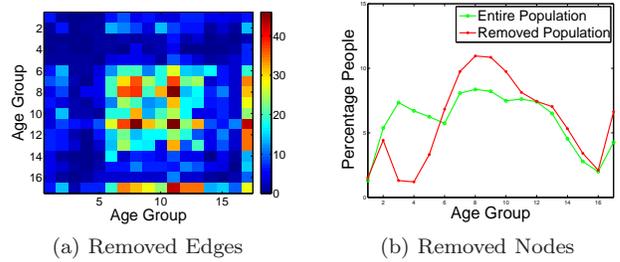

\centering
\begin{tabular}{cc}
\subfloat[Removed Edges]{ \label{fig:cw_e_age_mat}
 \includegraphics[width=0.44\columnwidth]{cw_e_age_mat.eps}
} & \subfloat[Removed Nodes]{ \label{fig:agegrp_bar}
 \includegraphics[width=0.44\columnwidth]{agegroup_pt_cw_fplot.eps}
}
\end{tabular}
\caption{ \label{fig:agegroup_matrix_bar} (\ref{fig:cw_e_age_mat}) Age-Group matrix of the top 1500 removed edges
and (\ref{fig:agegrp_bar}) Age-group of 1500 removed nodes  with \textsc{GreedyWalk} from Portland contact graph. }
\end{figure}
} 
{
}

$\newline$
\noindent
\textbf{Main observations:}

\iftoggle{fullversion}
{
\noindent
1.~\textsc{GreedyWalk} performs consistently better than existing heurisitics in removing nodes or edges in both static and variable transmission rate settings.

\noindent
2.~Sparsification helps in improving the speed of \textsc{GreedyWalk} without effecting the solution quality.

\noindent
3.~\textsc{GreedyWalk} performs best for walk-lengths of $k=2\log n$.

\noindent
4.~\textsc{GreedyWalk} can potentially help in picking more accurate non-network surrogates.
}
{
\noindent
1.~\textsc{GreedyWalk} performs consistently better than existing heurisitics in removing nodes or edges.

\noindent
2.Performance of just one iteration of ~\textsc{PrimalDual} comes very close to ~\textsc{GreedyWalk}.

\noindent
3.~Sparsification helps in improving the speed of \textsc{GreedyWalk} without effecting the solution quality.

}

\section{Related Work}
\label{sec:related}

Related work comes from multiple areas: epidemiology, immunization algorithms and other optimization algorithms. There is  general research interest in studying dynamic processes on large graphs, (a) blogs and propagations~\cite{Gruhl:2004,Kumar:2003}, (b) information cascades~\cite{Goldenberg:2001,Granovetter:1978} and (c) marketing and product penetration~\cite{Rogers:1962}. These dynamic processes are all closely related to virus propagation.

\par \noindent
\textbf{Epidemiology:}
A classical text on epidemic models and analysis is by May and Anderson~\cite{andersonmay}.
Most work in epidemiology is focused on {\em homogeneous models}~\cite{Bailey1975Diseases,andersonmay}.
Here we study network based models.
Much work has gone into in finding epidemic thresholds (minimum virulence of a
virus which results in an epidemic) for a variety of networks~\cite{SV2002Finite,Wang03Epidemic,ganesh+topology05,aditya12}.

\par \noindent
\textbf{Immunization:}  There has been much work on finding optimal strategies for vaccine allocation~\cite{Briesemeister03Epidemic,Madar04Immunization,Chen:2010}. Cohen et al \cite{Cohen03Efficient} studied the popular {\em acquaintance} immunization
policy (pick a random person, and immunize one of its neighbors at random).
Using game theory, Aspnes et al.~\cite{Aspnes:2005} developed inoculation strategies for victims of viruses under random starting points.
Kuhlman et al.~\cite{Kuhlman:2013} studied two formulations of the problem of blocking a contagion through edge removals under the model of discrete dynamical systems.
As already mentioned Tong et al.~\cite{Tong@ICDM10,tong:cikm12}, Van Miegham et al.~\cite{vanmieghem:ton12}, Prakash et al.~\cite{prakash:fracimmu:2013} and Chakrabarti et al.~\cite{chakrabarti08} proposed various node-based and edge-based immunization algorithms based on minimizing the largest eigenvalue of the graph. Other non-spectral approaches for immunization have been studied by Budak et al~\cite{budak11}, He et al~\cite{He12} and Khalil et al.~\cite{khalil14}.

\par \noindent
\textbf{Other Optimization Problems:} Other diffusion based optimization problems include the influence maximization problem, which was introduced by Domingos and Richardson~\cite{richardson2002mining},  and formulated by  Kempe et. al.~\cite{Kempe03Maximizing}  as a combinatorial optimization problem. They proved it is NP-Hard and also gave a simple $1-1/e$ approximation based on the submodularity of expected spread of a set of starting seeds. Other such problems where we
  wish to select a subset of `{\em important}' vertices on graphs, include `outbreak detection'~\cite{Leskovec@KDD07} and `finding most-likely culprits of epidemics'~\cite{lappas:10:effectors,Prakash@ICDM12}.

\section{Conclusions}
\label{sec:conc}
We study the problem of reducing the spectral radius of a graph to control the spread of
epidemics by removing edges (the \textsc{SRME} problem) or nodes
(the \textsc{SRMN} problem).
We have developed a suite of algorithms for these problems, which give the first
rigorous bounds for these problems.  Our main algorithm \textsc{GreedyWalk}
performs consistently better than all other heuristics for these problems, in
all networks we studied. We also develop variants that improve the running time by sparsification,
and improve the approximation guarantee using a primal dual approach.
These algorithms exploit the connection between the graph spectrum and closed
walks in the graph, and perform better than all other heuristics.
Improving the running time of these algorithms is a direction for further research.
We expect these techniques could potentially help in optimizing other objectives
related to spectral properties, e.g., \emph{robustness} \cite{chansdm2014}, and
in other problems related to the design of interventions to control the spread of epidemics.

\noindent
\textbf{Acknowledgments}.
This work has been partially supported by the following grants:
DTRA Grant HDTRA1-11-1-0016,
DTRA CNIMS Contract HDTRA1-11-D-0016-0010,
NSF Career CNS 0845700, NSF ICES CCF-1216000,
NSF NETSE Grant CNS-1011769,
DOE DE-SC0003957,
National Science Foundation Grant IIS-1353346 and
Maryland Procurement Office contract H98230-14-C0127.
Also supported by the Intelligence Advanced Research Projects Activity (IARPA) via Department of Interior National Business Center (DoI/NBC) contract number D12PC000337, the US Government is authorized to reproduce and distribute reprints for Governmental purposes notwithstanding any copyright annotation thereon.\\
Disclaimer: The views and conclusions contained herein are those of the authors and should not be interpreted as necessarily representing the official policies or endorsements, either expressed or implied, of IARPA, DoI/NBC, or the US Government.

\bibliographystyle{abbrv}
\bibliography{references,all-aditya,reference}


\iftoggle{fullversion}
{
\newpage
\appendix
\section{Appendix}

\subsection{\textsc{GreedyWalk} with Dynamic Programming Approach}
\label{appsubsec:dp}

$\newline$
\noindent
\textbf{Main idea}: we adapt a dynamic programming approach in sparse graphs
to avoid matrix multiplication, that leads to lower space complexity, thereby
allowing us to scale to larger graphs. We then observe that the number of walks
does not need to be recomputed each time an edge is deleted.

Let $H_{\overrightarrow{uv}}(G,x,l)$ denote the number of walks of length
$l$ from node $u$ through edge $(u,v)$ as the first edge to node $x$ in $G$.
It is easy to see that, $H_{\overrightarrow{uv}}(G,u,k)=\walks(e, G,k)$.
Algorithm \textsc{ClosedWalkDP} describes how to
compute $H_{\overrightarrow{uv}}(G,u,l)=\walks(e,G,k)$. In the algorithm, $N(x)$ denotes the neighbors of node $x$ in $G$.

\begin{algorithm}{}
\label{alg:k-ClosedWalkDP}
\caption{$\textsc{ClosedWalkDP}(G,(u,v),k)$}
\begin{algorithmic}[1]
\INPUT{$G, (u,v), k\ge 2 $}
\OUTPUT{Number of closed walks of length $k$ in $G$ containing $(u,v)$}
\STATE Let $H_{\overrightarrow{uv}}(G,v,1)=1$, $H_{\overrightarrow{uv}}(G,x,1)=0$, $\forall x\in V \setminus \{v\}$ \\
\FOR{$l=2$ to $k$}
\STATE \mbox{$H_{\overrightarrow{uv}}(G,x,l)=\sum_{y\in N(x)}H_{\overrightarrow{uv}}(G,y,l-1)$, $\forall x\in V$}\\
\ENDFOR
\STATE return $H_{\overrightarrow{uv}}(G,u,k)$ 
\end{algorithmic}
\end{algorithm}

Next, we describe in Algorithm \textsc{GreedyEdgeChoice} how the
greedy edge choice in line 4 of Algorithm \textsc{GreedyWalk}
is implemented efficiently.  We make use of the fact that
$\walks(e,G',k)\le\walks(e,G,k)$ for any $G'\subset G$.  In every
iteration of Algorithm \textsc{GreedyEdgeChoice}, potentially, we need to
update $f(\cdot)$ for all edges in $E\setminus E'$. However, in practice,
we observe that the number of such updates is very small compared to
$|E\setminus E'|$.


\begin{algorithm}{}
\label{alg:edgeUpdate}
\caption{\textsc{GreedyEdgeChoice}}
\begin{algorithmic}[1]
\INPUT{$G, T, c(\cdot)$}
\OUTPUT{Edge set $E'$}
\STATE Initialize $E'\leftarrow \phi$ and $\forall e\in E$, let $f(e)=\walks(e,G,k)$ \tcp{computed by \textsc{ClosedWalkDP}}\\
\WHILE{$W_k(G[E\setminus E'])\geq nT^k$}
\STATE Order edges of $E\setminus E'$ in the decreasing order of $f(.)$ values. Let $e_1$ be the first edge.
\STATE $E'\leftarrow E'\cup\{e_1\}$
\FOR {$j=2,\ldots,|E\setminus E'|$}
\STATE  Update $f(e_j)=\walks(e_j,G[E\setminus E'],k)$.\\
\IF {$f(e_j)\ge f(e_{j+1})$}
\STATE Exit from the for loop
\ENDIF
\ENDFOR
\ENDWHILE
\end{algorithmic}
\end{algorithm}

\noindent
\emph{Running time and space complexity:} Let $n=|V|$, $m=|E|$. Note that, $\textsc{ClosedWalkDP}(G,e,k)$ takes $2mk$ time to
compute $\walks(e,G,k)$. Therefore, computing $\walks(e,G,k)$ for all the edges takes $2m^2 k = O(n^2 k)$, assuming
$m=\Theta(n)$ in real world networks.
Since, for computing $H_{\overrightarrow{uv}}(G,x,l)$, $\forall x\in V$,  $\textsc{ClosedWalkE}(k)$
needs to look only at $H_{uv}(G,y,l-1)$, $\forall y\in V$, therefore, the
space complexity is $\Theta(n)$.
\subsection{Non-uniform transmission rates}
\label{sec:non-uniform-appendix}

$\newline$
Let $B=(\beta_{ij})$ denote the matrix of the transmission rates.
We assume the rates are symmetric, i.e., $\beta_{ij}=\beta_{ji}$.
In this case, the
sufficient condition for the epidemic to die out is slightly different,
and is stated below.

\begin{lemma}
\label{lemma:nonuniform}
Let $B$ be the matrix of transmission rates, and let $\delta$ be the recovery rate
in the SIS model. If $\rho(B) <\delta$, the time to extinction, $\tau$ satisfies
\[
\expect[\tau]\leq \frac{\log{n}+1}{\delta - \rho(B)}
\]
\end{lemma}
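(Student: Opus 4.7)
The plan is to adapt the proof of Ganesh et al.~\cite{ganesh+topology05} for the uniform-rate SIS model, replacing the matrix $\beta A$ by the non-uniform transmission-rate matrix $B$ throughout. The key point is that $B$ is symmetric and non-negative, so the spectral machinery still applies.

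First I would set up notation. Let $X_i(t)\in\{0,1\}$ be the indicator that node $i$ is infected at time $t$, let $p_i(t)=\Pr[X_i(t)=1]$, and let $p(t)=(p_1(t),\ldots,p_n(t))^\top$. Writing down the infinitesimal generator of the SIS chain, the rate at which node $i$ flips from susceptible to infected is $(1-X_i(t))\sum_{j}\beta_{ij}X_j(t)$, and it heals at rate $\delta X_i(t)$. Taking expectations and using $(1-X_i)\le 1$ gives the componentwise differential inequality
\[
\tfrac{d}{dt}p_i(t)\;\le\;\sum_{j}\beta_{ij}p_j(t)-\delta\, p_i(t),
\]
i.e.\ $\tfrac{d}{dt}p(t)\preceq (B-\delta I)p(t)$. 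Since $B-\delta I$ has non-negative off-diagonal entries, the comparison principle (or a standard Gr\"onwall argument applied coordinatewise) yields $p(t)\preceq e^{(B-\delta I)t}p(0)$.

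Next I would convert this into a tail bound on the extinction time $\tau=\inf\{t:\sum_iX_i(t)=0\}$. Let $|I(t)|=\sum_iX_i(t)$ denote the number of infected nodes. Then
\[
\mathbb{E}[|I(t)|]=\mathbf{1}^\top p(t)\le \mathbf{1}^\top e^{(B-\delta I)t}p(0).
\]
Applying Cauchy--Schwarz ($\|v\|_1\le\sqrt{n}\,\|v\|_2$), the symmetry of $B$ (which implies that $e^{(B-\delta I)t}$ has operator-$2$-norm $e^{(\rho(B)-\delta)t}$), and $\|p(0)\|_2\le\sqrt{n}$ (coordinates bounded by $1$), I obtain $\mathbb{E}[|I(t)|]\le n\,e^{-(\delta-\rho(B))t}$. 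Since $\tau>t$ iff $|I(t)|\ge 1$, Markov's inequality gives $\Pr[\tau>t]\le n\,e^{-(\delta-\rho(B))t}$.

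Finally I would integrate the tail. Let $t^\star=\log n/(\delta-\rho(B))$, the point at which the exponential bound crosses $1$. Using the trivial bound $\Pr[\tau>t]\le 1$ on $[0,t^\star]$ and the exponential bound afterwards,
\[
\mathbb{E}[\tau]=\int_0^\infty\Pr[\tau>t]\,dt\le t^\star+\int_{t^\star}^\infty n\,e^{-(\delta-\rho(B))t}\,dt=\frac{\log n+1}{\delta-\rho(B)},
\]
which is the claimed bound. The main obstacle is really just verifying the differential inequality in the non-uniform setting and being careful that the symmetry of $B$ is what lets us replace the usual $\|\cdot\|_2$ bound on $e^{(\beta A-\delta I)t}$ by $e^{(\rho(B)-\delta)t}$; everything after that mirrors the uniform-rate proof.
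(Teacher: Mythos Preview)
Your proof is correct. The paper does not actually prove Lemma~\ref{lemma:nonuniform}; it merely states the result and remarks that it ``extend[s] the spectral radius characterization of \cite{ganesh+topology05,Wang03Epidemic,aditya12}'' to the non-uniform setting. Your argument is precisely that extension: the Ganesh--Massouli\'e--Towsley mean-field domination $\tfrac{d}{dt}p(t)\preceq (B-\delta I)p(t)$, the Metzler/cooperative comparison to obtain $p(t)\preceq e^{(B-\delta I)t}p(0)$, the symmetric-matrix bound $\|e^{(B-\delta I)t}\|_2=e^{(\rho(B)-\delta)t}$, and the tail integration with the split at $t^\star=\log n/(\delta-\rho(B))$ all go through exactly as in the uniform case, which is the intended route.
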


For the case of uniform costs, i.e., $c(e)=1$ for all edges $e$, this motivates the following problem:
\begin{Definition}{\textsc{SRME-nonuniform} problem}
Given an undirected graph $G=(V, E)$, with transmission rate $\beta_{ij}$ for each $(i,j)\in E$
and recovery rate $\delta$, find the smallest set $E'\subseteq E$ such that
$\rho(B(G[E-E'])) \leq\delta$.
\end{Definition}

In this section, we use $\eopt$ to denote the optimum solution to
\textsc{SRME-nonuniform}$(G, B, \delta)$.  Our algorithm
\textsc{GreedyWalk-nonuniform} adapts \textsc{GreedyWalk} to a
weighted covering problem. We need to refine the definitions used
earlier. For walk $w\in\mathcal{W}_k(G)$, let $f(w)=\prod_{e=(ij)\in
  E(w)} \beta_{ij}^{\ct(e,w)}$ denote its weight, where $\ct(e,w)$ is the
number of occurrences of edge $e$ in walk $w$; for a set $W'$ of
walks, let $f(W')=\sum_{w\in W'} f(w)$ denote the total weight of
$W'$.  In the algorithm, we will need to compute $f(W_k(G))$, which is
done by modifying the recurrence used in Algorithm
\textsc{CountWalks}$(G)$ to compute $W_k(G)$:
\[
f(W_k(G)) = B^k_{nn} + f(W_k(G[V-\{n\}]).
\]

Let $f(e, G)=\sum_{w: e\in w} f(w)$ denote the total weight of walks
containing edge $e$; $f(e,G)=B^k_e$.  Algorithm
\textsc{GreedyWalk-nonuniform} involves the following steps:
\begin{itemize}
\item
$E'=\phi$
\item
while $f(W_k(G[E-E']))\geq n\delta$:
\begin{itemize}
\item
Pick the $e\in E\setminus E'$ that maximizes $(\min\{n\delta -
f(W_k(G[E-E'])), f(e, G[E\setminus E'])\})/c(e)$.
\item
$E'\leftarrow E'\cup\{e\}$
\end{itemize}
\end{itemize}

\begin{lemma}
\label{lemma:greedywalk-nonuniform}
Let $E'$ denote the set of edges found by Algorithm
\textsc{GreedyWalk-nonuniform}.  Given any constant $\epsilon > 0$, let $k$ be an even integer greater than
$\log{n}/\log(1+\epsilon/3)$, we have $\rho(B(G[E\setminus E'])) \leq (1+\epsilon)\delta$
and $|c(E')| = O(c(\eopt)\log n\log \Delta)$.
\end{lemma}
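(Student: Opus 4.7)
The plan is to mirror the three-step structure used in the proof of Lemma \ref{lemma:greedywalk}, but replace the unweighted closed-walk counts with their weighted analogs $f(w)$ and $f(W_k(G))$, and replace the adjacency matrix $A$ with the transmission-rate matrix $B$. The key observation driving the whole argument is that trace-of-power identity (\ref{eqn:kthmoment}) generalizes directly: since $B^k_{ii}$ equals the total weight of closed $k$-walks rooted at $i$, we have
\begin{equation*}
\sum_{w\in \mathcal{W}_k(G)} \nodes(w)\, f(w) \;=\; \sum_i B^k_{ii} \;=\; \sum_{i=1}^n \lambda_i(B)^k.
\end{equation*}
Everything else in the proof is a translation of the unweighted argument through this identity.

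First I would bound the spectral radius. When the algorithm terminates, the stopping condition yields $f(W_k(G'))\le n\delta^k$, where $G'=G[E\setminus E']$ and $B'=B(G')$. Using $\nodes(w)\le k$ together with the identity above gives $\sum_i \lambda_i(B')^k \le k\, f(W_k(G')) \le nk\delta^k$. Because $k$ is even all terms on the left are nonnegative, so $\lambda_1(B')^k \le nk\delta^k$, and the same $(nk)^{1/k}$ calculation as in Lemma \ref{lemma:greedywalk} gives $\rho(B')\le (1+\epsilon/3)^2 \delta \le (1+\epsilon)\delta$ for the stated choice of $k$.

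Next I would recast \textsc{GreedyWalk-nonuniform} as a weighted partial cover instance in which the ground set consists of walks in $\mathcal{W}_k(G)$ with element weights $f(w)$, the sets correspond to edges $e$ (each containing the walks through $e$ with their weights), and the coverage requirement is $f(W_k(G))-n\delta^k$. The algorithm's selection rule is exactly the greedy rule for weighted partial cover, so Slavic's analysis \cite{slavic:ipl97} (which extends routinely to weighted elements by scaling) yields $c(E')=O\!\left(c(E_{\text{HITOPT}})\log f(W_k(G))\right)$. To bound $\log f(W_k(G))$, I would use $f(W_k(G))\le \sum_w \nodes(w)f(w)=\sum_i \lambda_i(B)^k \le n\rho(B)^k \le n(\Delta\, \beta_{\max})^k$, so $\log f(W_k(G)) = O(k\log\Delta)=O(\log n\log\Delta)$ (absorbing $\log\beta_{\max}$ into the constant or into $\log\Delta$ under the standard assumption that rates are bounded).

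Finally, to control $c(E_{\text{HITOPT}})$, I would apply the same argument as in Lemma \ref{lemma:greedywalk}, step 3: let $G''=G[E\setminus \eopt]$, so that $\rho(B(G''))\le\delta$ by optimality. Then $f(W_k(G'')) \le \sum_i \lambda_i(B(G''))^k \le n\rho(B(G''))^k \le n\delta^k$, so $\eopt$ hits at least $f(W_k(G))-n\delta^k$ worth of weight; hence $c(E_{\text{HITOPT}})\le c(\eopt)$. Combining these pieces gives $c(E') = O(c(\eopt)\log n\log\Delta)$. The main obstacle I anticipate is verifying that the weighted form of the greedy partial-cover guarantee applies cleanly when the weights $f(w)$ can be real-valued and highly non-uniform; one clean way is to observe that Slavic's potential-function analysis is agnostic to the element weights and only uses the ratio $\min\{\text{residual coverage}, \text{set coverage}\}/\text{cost}$, which is precisely the quantity the algorithm maximizes in its edge-selection step.
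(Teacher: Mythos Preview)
Your proposal is correct and mirrors the paper's proof closely: the same spectral-radius bound via the trace identity for $B$, the same partial-cover reduction, and the same comparison of $E_{\text{HITOPT}}$ to $\eopt$. The only nuance is that the paper explicitly flags that \cite{slavic:ipl97} does not treat weighted elements and supplies a direct phase argument (partition the iterations into phases where the residual weight to cover halves, show each phase costs at most $2c(E_{\text{HITOPT}})$, and bound the number of phases via the ratio of total to minimum walk weight), which is exactly the obstacle you anticipated and essentially the same resolution you sketched.
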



\begin{proof}
The bound on $\rho(B(G[E\setminus E']))$ follows on the same lines as
the proof of Lemma \ref{lemma:greedywalk}. The main difference is that
the proof of \cite{slavic:ipl97} does not consider the case of weights
associated with elements.  But, as we argue now, the same approach for
analyzing greedy algorithms extends to our case, and we show
$c(E')=O(c(E_\text{HITOPT}) \log n)$.

We partition the iterations of Algorithm
\textsc{GreedyWalk-nonuniform} into $O(\log n)$ phases.  Each phase,
ends at the first iteration when the total weight that needs to be
further covered goes down by a factor of at least $2$.  So if $F$ is
the weight that needs to be covered at the start of the phase, in
every iteration of the phase, there exists an edge $e$ (which is in an
optimum solution) such that $f(e,G[E\setminus E'])/c(e) \ge
F/(2c(E_\text{HITOPT}))$.  Thus, the total cost of the edges selected
in the phase is at most $2c(E_\text{HITOPT})$.  Since the ratio of
$n\delta$ over the minimum weight of a walk is polynomial in $n$, the
total number of phases is $O(\log n)$.  Adding over all phases then
yields the desired bound on $c(E')$.  Putting this together with the
rest of the proof of Lemma \ref{lemma:greedywalk} yields the desired
bound.
\end{proof}

\subsection{Node version: \textsc{SRMN} problem}
\label{appsubsec:srmn}
$\newline$
Recall the definition of $\walks(v, G, k)$ from Section \ref{sec:preliminaries}.
Let $G[V'']$ denote the subgraph of $G=(V, E)$ induced by subset $V''\subset V$.
We modify Algorithm \textsc{GreedyWalk} to work for the \textsc{SRMN} problem
in the following manner:

\begin{algorithm}{}
\label{alg:srmn}
\caption{Algorithm \textsc{GreedyWalkSRMN}}
\begin{algorithmic}[1]
\STATE Initialize $V'\leftarrow \phi$
\WHILE {$W_k(G[V\setminus V'])\geq nT^k$}
\STATE $r \leftarrow W_k(G[E\setminus E']) - nT^k$
\STATE Pick $v\in V\setminus V'$ that maximizes $\frac{\min \{r, \walks(v, G[V\setminus V'],k)\}}{c(v)}$
\STATE $V'\leftarrow V'\cup\{v\}$
\ENDWHILE
\end{algorithmic}
\end{algorithm}


It can be shown on the same lines as Lemma \ref{lemma:greedywalk} that this gives
a solution of cost $O(c(\eopt(T))\log n\log \Delta)$, where $c(\eopt(T))$ denotes the
cost of the optimal solution to \textsc{SRMN} problem. Further, the same running time
bounds as in Sections \ref{sec:matrix} and \ref{sec:dynamic} hold.

\subsection{Proof of Theorem~\ref{thm:productdegree}}\label{sec:worstcaseproof}
$\newline$
\noindent
\textbf{Construction:} We construct a graph $G$ for which the statement holds. For convenience
let us assume that $T'$ is a positive integer. $G$ contains (1)~a clique
$G_1$ on ${T'+1}$ nodes; (2)~a caterpillar tree $G_2$, which comprises of a path $v_1v_2\cdots v_{q-1}$
with $v_i$ adjacent to $T'$ leaves each and (3)~$G_3$, a star graph
with $(T'+1)^2$ leaves and central vertex denoted by $v_q$. We connect
$G_1$ to $G_2$ by $(v_0,v_1)$ where, $v_0$ is some node in $G_1$ and
$G_2$ is connected to $G_3$ by the edge $(v_q,v_{q-1})$. Note that
$q=\frac{n-(T'+1)^2-T'}{T'}$ and $\lambda_1(G)\ge\lambda_1(G_3)=T'+1$. Again,
here we assume that $q$ is an integer.

$\newline$
\noindent
\textbf{Bound on $c(\eopt)$:} We will show that $c(\eopt)\le 2T'+3$.
Removing the edges $(v_0,v_1)$ and $(v_{q-1},v_q)$ isolates
the components $G_1$, $G_2$ and $G_3$. $G_1$ is a clique on $T'+1$
nodes and on removing one edge, its spectral radius decreases below
$T'$. $G_2$ is a star with $(T'+1)^2$ leaves and therefore, on removing
at most $(T'+1)^2-(T'^2+1)$ edges, its spectral radius decreases
below $T'$. It can be shown that $\lambda_1(G_2)\le \sqrt{T'}+2$.

Now we will demonstrate that all the four algorithms score the edges
$(v_i,v_{i+1})$, $i=0,\ldots,q-2$ above any edge belonging to the clique
$G_1$. However, the spectral radius cannot be brought down below $T'$
until at least one edge in $G_1$ is removed. Therefore, at least $q$
edges will be removed by all the algorithms.  By the initial assumption
that $T'<c\sqrt{n}$, it follows that $q=\Omega\big(\frac{n}{T'}\big)$,
while, by ${c(\eopt)}=O(T')$, hence completing the
proof. Now we analyze each algorithm separately.

\noindent
\textbf{\textsc{ProductDegree}}: For all $u\in V(G_1)$, $d(u)\le T'+1$ while,
for each $i=1,\ldots,q$, $d(v_i)\ge T'+2$. Therefore,
$(v_i,v_{i+1})$, $i=0,\ldots,q-2$ has higher score than any
edge in $G_1$.

\noindent
\textbf{\textsc{EigenScore}}:
Let $x$ denote the unit eigenvector corresponding to $\lambda_1(G)$ and
for any $v\in V(G)$, let $x(v)$ denote the $v$th component of $x$. We
will show that $x(v_{q-1})>x(v_{q-2})>\cdots>x(v_0)>x(v')$ where $v'$
is any vertex in $G_1$ other than $v_0$. This implies that all
the edges $(v_i,v_{i+1})$, $i=0,\ldots,q-2$ have eigenscore greater than
the edges in $G_1$.

Let $\lambda:=\lambda_1(G)$.  By symmetry, all $v'\in
V(G_1)\setminus\{v_0\}$ have the same eigenvector component $x(v')$ and all
leaves of $v_i$ have the same component $x(l_i)$. Let
$A$ be the adjacency matrix of $G$. Since $Ax=\lambda x$, we have

\begin{subequations}\label{eqn:eig}
\begin{align}
\lambda x(v')&=(T'-1)x(v')+x(v_0) \label{eqn:v0}\\
\lambda x(v_0)&=T' x(v')+x(v_1) \label{eqn:v1}\\
\lambda x(v_i)&=x(v_{i-1})+x(v_{i+1})+T' x(l_i), \text{ $1\le i\le q-1$}\label{eqn:vip1}\\
\lambda x(v_q)&=x(v_{q-1})+(T'+1)^2x(l_q)\label{eqn:vq}\\
\lambda x(l_i)&=x(v_i), \text{ $1\le i\le q$}\,. \label{eqn:leaf}
\end{align}
\end{subequations}
From~\eqref{eqn:v0} and the fact that $\lambda\ge T'+1$,
\begin{align}\label{eqn:v0lb}
x(v_0)=(\lambda-T'+1)x(v')\ge 2x(v')\,.
\end{align}
By induction on $i$, we will show that $x(v_i)\ge\frac{T'}{2}x(v_{i-1})$ for
$i=1,\ldots,q-1$. The base case is $i=1$.
Using~\eqref{eqn:v1},~\eqref{eqn:v0lb} and the bound $\lambda\ge T'+1$,
\begin{align}\label{eqn:v1lb}
x(v_1)=\lambda x(v_0)-T' x(v')\ge> \frac{T'}{2}x(v_0)\,.
\end{align}
Assuming $x(v_i)\ge\frac{T'}{2}x(v_{i-1})$
and applying~\eqref{eqn:vip1},~\eqref{eqn:leaf} and again $\lambda\ge T'+1$,

\begin{align}\label{eqn:v2lb}
x(v_{i+1})&=\lambda x(v_i)-x(v_{i-1})-T' x(l_i)\\
&\ge\bigg(T'+1-\frac{2}{T'}-\frac{T'}{\lambda}\bigg)x(v_i)\\
&\ge\bigg(T'+1-\frac{2}{T'}-\frac{T'}{T'+1}\bigg)x(v_i)>\frac{T'}{2}x(v_i).
\end{align}
From~\eqref{eqn:v0lb} and~\eqref{eqn:v2lb}, it follows that
$x(v_{q-1})>x(v_{q-2})>\cdots>x(v_0)>x(v')$.

\noindent
\textbf{\textsc{Hybrid}}: Since both \textsc{ProductDegree} and \textsc{EigenScore}
rate edges $(v_i,v_{i+1})$, $i=0,\ldots,q-2$, higher than any edge in
$G_1$, it follows that the same holds for \textsc{Hybrid} as well.
\newcommand{\ec}{e^c}
\newcommand{\ecvo}{e^c_{v_0}}

\noindent
\textbf{\textsc{LinePagerank}}:
Let $\pi(e)$ denote the pagerank of edge $e$. We will show that
$\pi(v_{q-1}v_{q})=\pi(v_{q-2}v_{q-1})=\cdots=\pi(v_1v_2)>\pi(v_0v_1)>\pi(\ecvo)>\pi(\ec)$
where $\pi(\ecvo)$ (by symmetry) is the pagerank of every edge in clique
$G_1$ incident with $v_0$ while $\pi(\ec)$ (again by symmetry) is the
pagerank of every other edge in the clique. Let $l_i$ denote the leaf
edges incident with $v_i$ for $i=1,\ldots,q$. Pagerank of each edge is
computed as follows: $\pi(e)=\sum_{e'\in N(e)}\frac{\pi(e')}{d(e')}$
where, $N(e)$ and $d(e)$ denote the set of neighbors and degree
respectively of $e$ in the line graph.

In the line graph, the degrees of each edge of $G$ are as follows:
$d(\ec)=2(T'-1);\,d(\ecvo)=2T'-1;
d(v_0v_1)=2T'+1;\,d(v_{q-1}v_q)=(T'+1)^2+1;
d(v_iv_{i+1})=2(T'+1),\,i=1,\ldots,q-2;
d(l_i)=T'+1,\,i=1,\ldots,q-1$.
The pageranks of the relevant edges are as follows:
\begin{subequations}
\label{eqn:pr}
\begin{align}
\pi(\ec)&=\frac{2(T'-1)-2}{2(T'-1)}\pi(\ec)+\frac{2\pi(\ecvo)}{2(T'-1)+1} \label{eqn:ec}\\
\pi(\ecvo)&=\frac{\pi(\ec)}{2}+\frac{T'-1}{2T'-1}\pi(\ecvo)+\frac{\pi(v_0v_1)}{2T'+1} \label{eqn:ecvo}\\
\pi(v_0v_1)&=\frac{T'\pi(\ecvo)}{2T'-1}+\frac{\pi(v_1v_2)}{2(T'+1)}+\frac{T'\pi(l_1)}{T'+1}\label{eqn:v0v1}\\
\pi(v_1v_2)&=\frac{\pi(v_0v_1)}{2T'+1}+\frac{\pi(v_2v_3)}{2(T'+1)}+\frac{T'(\pi(l_1)+\pi(l_2))}{T'+1}\label{eqn:v1v2}\\
\pi(v_iv_{i+1})&=\frac{\pi(v_{i-1}v_i)+\pi(v_{i+1}v_{i+2})}{2(T'+1)}+\frac{T'(\pi(l_i)+\pi(l_{i+1}))}{T'+1},\nonumber\\
&\hspace{10em}i=2,\ldots,q-2\label{eqn:vivj}\\
\pi(l_1)&=\frac{T'-1}{T'+1}\pi(l_1)+\frac{\pi(v_0v_1)}{2T'+1}+\frac{\pi(v_1v_2)}{2(T'+1)}\label{eqn:l1}\\
\pi(l_i)&=\frac{T'-1}{T'+1}\pi(l_1)+\frac{\pi(v_{i-1}v_i)+\pi(v_{i}v_{i+1})}{2(T'+1)}\nonumber\\
&\hspace{11em}i=2,\ldots,q-2\,.\label{eqn:li}
\end{align}
\end{subequations}
Using~\eqref{eqn:pr}, we have the following:
\begin{subequations}\label{eqn:prelim}
\begin{align}
&\eqref{eqn:ec}\Rightarrow \pi(\ecvo)=\frac{2(T'-1)+1}{2(T'-1)}\pi(\ec), \label{eqn:ecvolb}\\
&\text{\eqref{eqn:ecvo} and~\eqref{eqn:ecvolb}}\Rightarrow \pi(v_0v_1)=\frac{2T'+1}{2T'-1}\pi(\ecvo),\label{eqn:vov1lb}\\
&\text{\eqref{eqn:v0v1},~\eqref{eqn:l1} and~\eqref{eqn:vov1lb}}\Rightarrow \pi(v_1v_2)=\frac{2(T'+1)}{2T'+1}\pi(v_0v_1),\label{eqn:v1v2lb}\\
&\text{\eqref{eqn:v1v2},~\eqref{eqn:l1},~\eqref{eqn:li} and~\eqref{eqn:v1v2lb}}\Rightarrow \pi(v_2v_3)=\pi(v_1v_2)\,.\label{eqn:v2v3lb}
\end{align}
\end{subequations}
Now, by induction on $i$ we can show that
$\pi(v_iv_{i+1})=\pi(v_{i-1}v_{i})$, for $i=2,\ldots,q-2$. The base
case $i=1$ is covered in~\eqref{eqn:v2v3lb}. For any $k\ge 2$, applying
$\pi(v_kv_{k+1})=\pi(v_{k-1}v_k)$ in~\eqref{eqn:v1v2} (with $i=k$) and~\eqref{eqn:li},
it follows that $\pi(v_{k+1}v_{k+2})=\pi(v_{k}v_{k-1})$.

Hence, proved.


}
{}

\end{document}